\renewcommand\footnotetextcopyrightpermission[1]{} 
\title[Keeping the Harmony Between Neighbors: Local Fairness in Graph Fair Division]{Keeping the Harmony Between Neighbors: \\
Local Fairness in Graph Fair Division}
\author{Halvard Hummel}
\affiliation{
  \institution{Norwegian University of Science and Technology}
  \city{Trondheim}
  \country{Norway}}
\email{halvard.hummel@ntnu.no}
\author{Ayumi Igarashi}
\affiliation{
  \institution{University of Tokyo}
  \city{Tokyo}
  \country{Japan}}
\email{igarashi@mist.i.u-tokyo.ac.jp}
\begin{abstract}
We study the problem of allocating indivisible resources under the connectivity constraints of a graph $G$.
This model, initially introduced by Bouveret et al. (published in IJCAI, 2017), effectively encompasses a diverse array of scenarios characterized by spatial or temporal limitations, including the division of land plots and the allocation of time plots. In this paper, we introduce a novel fairness concept that integrates local comparisons within the social network formed by a connected allocation of the item graph. Our particular focus is to achieve pairwise-maximin fair share (PMMS) among the "neighbors" within this network. For any underlying graph structure, we show that a connected allocation that maximizes Nash welfare guarantees a $(1/2)$-PMMS fairness. Moreover, 
for two agents, we establish that a $(3/4)$-PMMS allocation can be efficiently computed. Additionally, we demonstrate that for three agents and the items aligned on a path, a PMMS allocation is always attainable and can be computed in polynomial time. Lastly, when agents have identical additive utilities, we present a pseudo-polynomial-time algorithm for a $(3/4)$-PMMS allocation, irrespective of the underlying graph $G$. Furthermore, we provide a polynomial-time algorithm for obtaining a PMMS allocation when $G$ is a tree.
\end{abstract}
\keywords{Fair division, pairwise-maximin fair share, graph}
\newcommand{\bbR}{\mathbb{R}}
\newcommand{\bbN}{\mathbb{N}}
\newcommand{\argmax}{\mbox{argmax}}
\newcommand{\ceil}[1]{\lceil #1 \rceil }
\newcommand{\floor}[1]{\lfloor #1 \rfloor }
\newcommand{\PMMS}{\mathrm{PMMS}}
\newcommand{\MMS}{\mathrm{MMS}}
\newcommand{\BibTeX}{\rm B\kern-.05em{\sc i\kern-.025em b}\kern-.08em\TeX}
\begin{document}


\pagestyle{fancy}
\fancyhead{}


\maketitle 


\section{Introduction}
Consider the distribution of offices among research groups. This task is not simple due to the inherent differences in size, quality, and location of the offices. Moreover, individuals may possess diverse preferences for these factors. For instance, certain individuals may prioritize office size, as they require sufficient space to conduct experiments, while others may emphasize proximity to student canteens. In light of these considerations, how can one effectively address individuals' needs and ensure a ``fair" distribution of offices?

This question has been explored in the extensive literature on fair division~\citep{Moulin03}. The existing literature has traditionally focused on two primary fairness notions: envy-freeness and proportionality. Envy-freeness requires each agent to compare their own bundle with that of everyone else and receive a preferred one.  Proportionality ensures that each agent receives a fair share, comprising at least $1/n$ of the total utility of the resource. However, achieving either of these notions becomes challenging when the resources are indivisible. For instance, when distributing a single office among two agents, we can satisfy neither of the fairness requirements. To address this limitation, recent research has explored alternative fairness measures that relax these requirements. Examples include envy-freeness up to one item (EF1) and maximin fair share (MMS)~\citep{Budish11}.

In this paper, we investigate a model that focuses on the \emph{connected} allocation of indivisible resources arranged in a graph structure. This model, as proposed by~Bouveret et al.~\cite{BouveretCeEl17}, captures various resource allocation scenarios that involve spatial or temporal constraints. Using the example of office allocation, it is not desirable for members of a research group to be assigned offices that are widely dispersed. The vertices of the graph can represent not only offices but also land plots, time slots, and other similar resources.

While envy-freeness and proportionality are natural fairness concepts within the graph fair division model, it may not be realistic to assume that individuals make global comparisons across the entire resource. According to the theory of social comparison, people often engage in local comparisons, where they assess their situation in relation to their peers, neighbors, or family members~\citep{Festinger1954,ScotPearlin,HandbookSulsWheeler}. 

In the context of fair division, several papers have explored this phenomenon by assuming that agents are part of a social network~\citep{Abebe2017,BeiGraph2017,BeynierGraph2019,BredereckGraph2022,HosseiniGraphical2023}. Much of this research focuses on local fairness requirements, where agents compare their allocation only with that of their neighbors within the underlying social network.

To integrate these two elements, fair division \emph{of} and \emph{over} graphs, we present a novel model that incorporates local comparison into the graph fair division setting. More precisely, we examine the social network formed by a connected allocation of the item graph and investigate local fairness among the ``neighbors" in the network. In this context, two agents are considered neighbors if they are allocated to vertices that are adjacent to each other. In practical scenarios, this could correspond to neighboring countries, employees assigned to consecutive shifts, or research groups with adjacent offices. 
Building upon recent research on approximate fairness~\citep{Budish11,caragiannis2019unreasonable}, we focus on the local variant of pairwise maximin fair share (PMMS). 
We aim at addressing the following question: 
\begin{quote}
\emph{What local fairness guarantees can be achieved under various types of graphs? Can we simultaneously attain local and global fairness?}
\end{quote}

\begin{table*}[htb]
\centering
\caption{Summary of our results for additive utilities. The non-existence~$\dagger$ holds due to a counter example for MMS existence on a graph~\cite{KurokawaPrWa18,BouveretCeEl17}. The existence~$\oplus$ follows from the cut-and-choose argument among two agents with additive utilities. The NP-hardness~$\otimes$ is due to the NP-hardness of \sc{Partition}.}
\small
\begin{tabular}{cccccc}
 \toprule
\emph{Properties} & & \emph{$n$ agents (any graph)} & \emph{$2$ agents (any graph)} & \emph{$3$ agents (path)} & \emph{Identical agents}\\
 \midrule
\multirow{2}{*}{MMS \& PMMS} & Existence & No$\dagger$ & Yes$\oplus$ & Open & Yes (Th.~\ref{lem:identical-utility-functions})\\\addlinespace[0.6pt]\cline{2-6}\addlinespace[0.6pt]
& Computation & NP-hard$\otimes$ & NP-hard$\otimes$ & Open & Poly for trees (Cor.~\ref{cor:PMMS+SMMS+trees})\\
\midrule
\multirow{2}{*}{$\alpha$-PMMS} & Existence & Yes for $\alpha=1/2$ (Th.~\ref{thm:MNW-half-PMMS}) & Yes for $\alpha=1$ $\oplus$ & Yes for $\alpha=1$ (Th.~\ref{thr:three-agents-path}) & Yes for $\alpha=1$ (Th.~\ref{lem:identical-utility-functions}) \\\addlinespace[0.6pt]\cline{2-6}\addlinespace[0.6pt]
& Computation & Open for constant $\alpha \in (0,1)$ & Poly for $\alpha=3/4$ (Th.~\ref{thm:halfPMMS-polynomial:twoagents}) & Poly  (Th.~\ref{thr:three-agents-path}) &  Pseudo-poly for $\alpha=3/4$ (Th.~\ref{thm:PMMS-pseudopolynomial})\\
\bottomrule 
\end{tabular}
\label{table}
\end{table*}

\subsection{Our contributions}
In Section~\ref{sec:model}, we present the formal model and introduce our local fairness concept of pairwise maximin fair share (PMMS), originally introduced by~Caragiannis et al.~\cite{caragiannis2019unreasonable}. Intuitively, it requires that each agent receives a fair share, determined by redistributing her bundle and the bundle of her neighbor via the the cut-and-choose algorithm. More precisely, the PMMS of agent $i$ with respect to her neighbor $j$ is the maximum utility $i$ can achieve if she were to partition the combined bundle of the pair into two and choose the worst bundle. While PMMS implies $(4/7)$-MMS in the standard setting of fair division~\cite{ijcai2018Amanatidis}, we show that there is no implication relation between PMMS and MMS up to any multiplicative factor in the graph-restricted setting. In \cref{sec:relation}, we further discuss the relationship between PMMS and other solution concepts. 

In Section~\ref{sec:nagents}, we establish the strong compatibility between the local fairness notion of PMMS and the global criterion of Pareto-optimality. We show that for $n$ agents with additive utilities and general graphs, any MNW allocation that maximizes the Nash welfare (i.e., the product of utilities) satisfies PMMS up to a factor of $0.5$. This stands in sharp contrast to an impossibility result regarding EF1 that $\alpha$-EF1 with any $\alpha>0$ does not exist even for two agents and a star graph $G$; see Proposition~\ref{prop:alphaEF1}.
        
In Section~\ref{sec:twoagents}, we focus on the case of two agents with additive utilities. The two-agent case serves as a foundational model of the fair division problem and has wide-ranging practical applications, including divorce settlements and inheritance division~\cite{BramsFi00,brams2014two,kilgour2018two}. While computing a PMMS allocation is in general hard even for two agents, we show that $(3/4)$-PMMS allocation can be computed in polynomial time for any connected graph $G$. Our proof crucially rests on an important observation that for biconnected graphs, a $(3/4)$-PMMS allocation can be computed efficiently by constructing a bipolar number over the graph. By exploiting the acyclic structure of the maximal biconnected subgraphs (called blocks) of $G$, we show that our problem can in principle be reduced to that for some valuable block of $G$.\footnote{A similar technique using blocks has been used to characterize a family of graphs for which a connected allocation among two agents satisfying EF1 (and its relaxations) is guaranteed to exist~\cite{BiloCaFl22,BeiPrice}.}

In Section~\ref{sec:paths}, we consider the case of three agents with additive utilities. For three agents with non-identical additive utilities on a path, we show that a PMMS connected allocation exists and can be computed in polynomial time. Note that in the standard setting of fair division, 
the existence of PMMS allocations remains an open question, particularly for three agents with additive utilities. 

In Section~\ref{sec:identical}, we focus on the case of identical additive utilities. We show that when agents have identical utility, there exists a pseudo-polynomial time algorithm that computes a $(3/4)$-PMMS allocation of any connected graph. Further, if we add the condition that the underlying graph is a tree, we can compute an allocation that simultaneously satisfies both PMMS and MMS in polynomial time. In fact, our final allocation satisfies a stronger property of MMS where the number of agents receiving their exact maximin fair share is minimized. See Table~\ref{table} for an overview of our results. 

\smallskip
\noindent
{\bf Related work}
Our work aligns with the growing literature on fair allocation of and over graphs. 

Abebe et al.~\cite{Abebe2017} and Bei et al.~\cite{BeiGraph2017} initiated the study of fair allocation of divisible resources with \emph{agents} arranged on a social network. They introduced the concepts of local envy-freeness and local proportionality, which restrict comparisons to pairs of neighboring agents. Bredereck et al.~\cite{BredereckGraph2022} investigated local fairness in the context of indivisible resource allocation, providing computational complexity results for local envy-freeness. Beynier et al.~\cite{BeynierGraph2019} and Hosseini et al.~\cite{HosseiniGraphical2023} examined house allocation problems over social networks where each agent can be allocated at most one item. However, these works assume that the social network is predetermined, whereas in our work, the network structure is not predefined. 

In parallel, several papers have studied fair allocation of resources aligned on a graph~\citep{BouveretCeEl17,Greco:20,Igarashi2023,Igarashi_Peters_2019,Truszczynski:20,Lonc2023}. It has been shown that for well-structured classes of graphs, a connected allocation satisfying global fairness can always be achieved.  
For example, Bouveret et al.~\cite{BouveretCeEl17} demonstrated that for trees, a connected allocation satisfying maximin fair share (MMS) always exists. Another approximate fairness notion, envy-freeness up to one item (EF1), can be attained under connectivity constraints of a path~\citep{BiloCaFl22,Igarashi2023}. 
However, the existence guarantee does not hold for general graphs: an MMS connected allocation may not exist on cycles~\citep{BouveretCeEl17}, and an EF1 connected allocation is not guaranteed even for a star graph with two agents having identical binary utilities~\citep{BiloCaFl22}. 

Caragiannis et al.~\cite{caragiannis2019unreasonable} introduced the concept of PMMS and identified its connection with other solution concepts in the standard setting of fair division without connectivity constraints.  They showed that any PMMS allocation satisfies an approximate notion of envy-freeness, called EFX. Furthermore, they proved that any maximum Nash welfare (MNW) allocation always satisfies $\alpha$-PMMS where $\alpha \approx 0.618$ is the golden ratio conjugate. Amanatidis et al.~\cite{multiplebirds} improved this factor to $\frac{2}{3}$ and provided a polynomial-time algorithm for computing such an allocation. The current best multiplicative factor $\alpha$ for which $\alpha$-PMMS is known to exist in the standard setting of fair division is $0.781$ by Kurokawa~\cite{Kurokawa2017}. Amanatidis et al.~\cite{ijcai2018Amanatidis} discussed the relationship between approximate versions of PMMS and several fairness concepts, such as EF1 and EFX. See also \cite{fairsurvey2022} for a recent survey on fair division. 

\section{Model and fairness concepts}\label{sec:model}
We define a fair division problem of indivisible items where the items are placed on a graph. For a natural number $s \in \bbN$, we write $[s]=\{1,2,\ldots,s\}$. 
Let $N=[n]$ be a set of {\em agents} and $G=(V,E)$ an {\em item graph}. Throughout this paper, we assume that $G$ is a simple graph, i.e., $G$ does not contain loops or parallel edges. Each agent $i$ has a \emph{utility function} $u_i:2^V \rightarrow \bbR_{+}$. For simplicity, the utility of a single vertex $v \in V$, $u_i(\{v\})$, is also denoted by $u_i(v)$. The elements in $V$ are referred to as {\em items} (or \emph{vertices}). Each subset $X \subseteq V$ is referred to as a {\em bundle} of items. A bundle $X \subseteq V$ is {\em connected} if it is a connected subgraph of $G$. An {\em allocation} $A=(A_1,A_2,\ldots,A_n)$ is a partition of the items into disjoint bundles of items, i.e., $\bigcup_{i \in N}A_i=V$ and $A_i \cap A_j=\emptyset$ for every pair of distinct agents $i,j \in N$. We say that an allocation $A$ is \emph{connected} if for every $i \in N$, $A_i$ is connected in $G$. A utility function $u_i$ is \emph{monotone} if $u_i(X) \leq u_i(Y)$ for every $X \subseteq Y \subseteq V$. It is \emph{additive} if $u_i(X)=\sum_{v\in X} u_i(v)$ holds for every $X \subseteq V$ and $i\in N$. An additive utility $u_i$ is \emph{binary additive} if
$u_i(v)\in \{0,1\}$ for every $v \in V$ and $i \in N$. 

We introduce fairness notions including those based on local comparison among neighbors and those based on global comparison among agents. Given an allocation $A$, we say that $i,j \in N$ are {\em adjacent} under $A$ if there is an edge $\{v_1,v_2\} \in E$ such that $v_1\in A_i$ and $v_2 \in A_j$; we call $j$ a \emph{neighbor} of $i$. For every connected bundle $X \subseteq V$ and natural number $k$, we denote by $\Pi_k(X)$ the set of partitions of $X$ into $k$ connected subgraphs of a graph $G[X]$, where $G[X]$ is the subgraph of $G$ induced by $X$. 
For agent $i$, a connected bundle $X$, and natural number $k$, we define the $k$-\emph{maximin share} of agent $i$ with respect to $X$ as $\mu^k_i(X)=\max \{\, \min_{j \in [k]}u_i(A_j) \mid A \in \Pi_k(X)\,\}$. For $k=2$ and $k=n$, we write $\mu^2_i(X)$ and $\mu^n_i(X)$ as $\PMMS_i(X)$ and $\mathrm{MMS}_i(X)$, respectively. For agents with identical utility function $u$, we simply write $\PMMS(X)=\PMMS_i(X)$ and $\mathrm{MMS}(X)=\mathrm{MMS}_i(X)$ for each agent $i$. 

\begin{definition}[Pairwise MMS]
An allocation $A$ is called 
$\alpha$-PMMS if for every pair of agents $i,j \in N$ such that $A_i=\emptyset$ or $i$ and $j$ are adjacent under $A$, $u_i(A_i) \geq \alpha \cdot \PMMS_i(A_i \cup A_j)$. For $\alpha=1$, we refer to the corresponding allocations as PMMS allocations. 
\end{definition}

In order to avoid a trivial allocation (allocating all items to one agent) from satisfying PMMS, we allow agents with an empty bundle to compare their bundle with every other agent's bundle under the definition above. Note that under additive utilities we have $\frac{1}{2}u_i(X) \geq \PMMS_i(X)$, since in any partition of a set $X$ into two sets, one bundle always has a utility of at most a half of the total utility with respect $u_i$. 
We say that a partition $A=(A_1,A_2,\ldots,A_n)$ of $G$ is a \emph{PMMS partition} of agent $i$ if $A_j$ is connected for each $j \in [n]$, and it is PMMS for agent $i$ no matter which bundle $i$ receives. 
An allocation $A$ is called $\alpha$-MMS if for every agent $i \in N$, $u_i(A_i) \geq \alpha \cdot \mathrm{MMS}_i(V)$. For $\alpha=1$, we refer to the corresponding allocations as MMS allocations.

An allocation $A$ is $\alpha$-EF1 if for any pair of agents $i,j \in N$ with $A_j \neq \emptyset$, there exists an item $v \in A_j$ such that $u_i(A_i) \geq \alpha \cdot u_i(A_j \setminus \{v\})$. For $\alpha=1$, we refer to the corresponding allocations as EF1 allocations. 
Given an allocation $A$, another allocation $A'$ is a {\em Pareto-improvement} of $A$ if $u_i(A'_i) \ge u_i(A_i)$ for every $i \in N$ and $u_j(A'_j) > u_j(A_j)$ for some $j \in N$. 
We say that a connected allocation $A$ is {\em Pareto-optimal} if there is no connected allocation that is a Pareto-improvement of $A$. 

We say that a connected allocation $A$ is a \emph{maximum Nash welfare (MNW)} allocation if it maximizes the number of agents receiving positive utility and, subject to that, maximizes the product of the positive utilities, i.e.,  $\prod_{i\in N:\, u_i(A_i)>0} u_i(A_i)$, over all connected allocations. For an allocation $A$, let ${\bf u}(A)$ be the vector obtained from rearranging the elements of the vector $(u_1(A_1),u_2(A_2),\ldots,u_n(A_n))$ in increasing order. Given two allocations $A$ and $A'$, an allocation $A$ is a \emph{leximin improvement} of $A'$ if there exists $k \in [n]$ such that the first $k-1$ elements of ${\bf u}(A)$ and ${\bf u}(A')$ are the same, but the $k$-th element of ${\bf u}(A)$ is greater than that of ${\bf u}(A')$. A connected allocation $A$ is called \emph{leximin} if there is no connected allocation that is a leximin improvement of $A$. 

\subsection{Relationship between PMMS and other fairness notions}\label{sec:relation}
In this section, we consider the relationship between PMMS and other fairness notions. 
Note that there is no implication relation between PMMS and MMS even for a path and agents with identical additive utilities. 

\begin{restatable}{proposition}{propPMMSrelationMMS}\label{prop:relation:PMMS:MMS}
Even for a path and three agents with additive identical utilities, neither PMMS nor MMS implies the other up to any multiplicative factor $\alpha \in (0,1]$.
\end{restatable}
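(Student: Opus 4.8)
The plan is to establish the two non-implications separately, each by exhibiting a single explicit instance on a four-vertex path with identical additive utilities together with an allocation witnessing the gap. The conceptual point is that the local cut-and-choose defining PMMS only ever inspects the union of two adjacent bundles, whereas MMS inspects the entire path; by positioning a few heavy items appropriately these two viewpoints can be forced to disagree by an arbitrarily large factor, so a single parametrized instance suffices for each direction and the ``$\alpha$'' in the statement is handled merely by shrinking one small weight below $\alpha$.

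For the claim that PMMS does not imply $\alpha$-MMS, I will take the path $v_1v_2v_3v_4$ with $u(v_1)=\epsilon$ (a parameter with $0<\epsilon<\alpha$), $u(v_2)=2$, and $u(v_3)=u(v_4)=1$, and consider the connected allocation $A_1=\{v_1\}$, $A_2=\{v_2\}$, $A_3=\{v_3,v_4\}$. To see that $A$ is PMMS, note that $A_1\cup A_2$ has only two vertices, so its unique connected bipartition forces $\PMMS_1(A_1\cup A_2)=\PMMS_2(A_1\cup A_2)=\min\{\epsilon,2\}=\epsilon$, which both agents meet; and $\PMMS_2(A_2\cup A_3)=\PMMS_3(A_2\cup A_3)=\PMMS(\{v_2,v_3,v_4\})=2$, attained by the bipartition $\{v_2\}\mid\{v_3,v_4\}$ and met with equality by both. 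Enumerating the three connected tripartitions of the path gives $\MMS(V)=1$, attained by $\{v_1,v_2\}\mid\{v_3\}\mid\{v_4\}$. Hence $u(A_1)=\epsilon<\alpha\cdot\MMS(V)$, so $A$ is PMMS but not $\alpha$-MMS.

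For the claim that MMS does not imply $\alpha$-PMMS, I will take the path $v_1v_2v_3v_4$ with $u(v_1)=u(v_4)=c$ (a parameter with $0<c<\alpha$) and $u(v_2)=u(v_3)=1$, and consider the connected allocation $A_1=\{v_1\}$, $A_2=\{v_2,v_3\}$, $A_3=\{v_4\}$. Enumerating the three connected tripartitions of the path, each one isolates a single light endpoint, so $\MMS(V)=c$; since every bundle of $A$ has value at least $c$, the allocation $A$ is MMS. However, agent $1$ is adjacent to agent $2$, and $\PMMS_1(A_1\cup A_2)=\PMMS(\{v_1,v_2,v_3\})=1$, attained by the bipartition $\{v_1,v_2\}\mid\{v_3\}$ into two parts of value at least $1$. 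Thus $u(A_1)=c<\alpha\cdot\PMMS_1(A_1\cup A_2)$, so $A$ is MMS but not $\alpha$-PMMS, completing the proof.

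I do not expect a genuine obstacle: the only work is routine bookkeeping over the at most three connected tripartitions of a four-vertex path and the at most two connected bipartitions of its subpaths of length two and three. The real content lies in the two choices of instance --- a worthless vertex placed next to a heavy vertex at the end of the path (so local cut-and-choose cannot peel off the heavy vertex, whereas MMS can pool it with the light one), and two equal heavy vertices in the interior (so the path is globally hard to split into three good parts, whereas locally the owner of a light endpoint could demand half of a heavy vertex).
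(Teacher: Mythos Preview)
Your proof is correct and follows essentially the same approach as the paper: both directions use a four-vertex path with identical additive utilities and a light endpoint adjacent to heavier vertices, with the only difference being a cosmetic reparametrization (you shrink a small weight $\epsilon,c<\alpha$ where the paper inflates a large weight $\beta>1/\alpha$). Your instance for the ``MMS $\not\Rightarrow$ $\alpha$-PMMS'' direction is in fact identical to the paper's up to rescaling, and your allocation (middle bundle to the middle agent) even avoids a minor index typo present in the paper's version.
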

\begin{proof}
Consider any $\alpha \in (0,1]$. Choose $\beta$ such that $\alpha > \frac{1}{\beta}$. Note that $\beta >1$. 
To show that MMS may not satisfy $\alpha$-PMMS, consider an instance of three agents with utility function $u$ and four vertices on a path. Each agent has utility $1$, $\beta$, $\beta$, and $1$ for each of the vertices, starting from the leftmost. Consider an allocation $A$ that allocates the leftmost item to agent $1$, the rightmost item to agent $2$, and the remaining items to agent $3$. Here, the maximin fair share for each agent is $1$ and thus this allocation satisfies MMS. However, $A$ is not $\alpha$-PMMS. Indeed, we have $\frac{1}{\alpha}=\frac{1}{\alpha} u(A_1) < \PMMS(A_1 \cup A_2)=\beta$.

To show that PMMS may not satisfy $\alpha$-MMS, consider an instance of three agents with utility function $u$ and four vertices on a path. Each agent has utility $1$, $\beta$, $\beta$, and $\beta$ for each of the vertices, starting from the leftmost. Consider an allocation $A$ that allocates the leftmost item to agent $1$, the second leftmost item to agent $2$, and the remaining items to agent $3$. Here, the allocation satsfies PMMS. However, $A$ is not $\alpha$-MMS. Indeed, we have $\frac{1}{\alpha}=\frac{1}{\alpha} u(A_1) < \MMS(V)=\beta$.
\end{proof}

Next, we consider the relathionship between PMMS and envy-based notions. 
It turns out that neither of EF1 and PMMS is stronger than the other. 


\begin{proposition}
Even for a path and two agents with identical additive utilities, EF1 may not imply PMMS. 
\end{proposition}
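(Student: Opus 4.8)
The plan is to disprove the implication with a single small instance: a path on three vertices, two agents with a common additive utility $u$, and a connected allocation $A$ that is EF1 but not PMMS. Concretely, take the path $v_1 - v_2 - v_3$ with $u(v_1) = u(v_2) = 1$ and $u(v_3) = 2$, and let $A = (A_1, A_2)$ with $A_1 = \{v_1\}$ and $A_2 = \{v_2, v_3\}$; both bundles are connected.

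First I would verify that $A$ is EF1. Agent $2$ does not envy agent $1$ up to one item, since $u(A_2) = 3 \geq 0 = u(A_1 \setminus \{v_1\})$. For agent $1$, deleting $v_3$ from $A_2$ leaves the singleton $\{v_2\}$, and $u(A_1) = 1 \geq 1 = u(\{v_2\})$, so the EF1 condition also holds for the ordered pair $(1,2)$. Here it matters that EF1 permits removing an arbitrary item, not one that must keep the remaining bundle connected.

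Next I would note that $1$ and $2$ are adjacent under $A$ (via the edge $\{v_1, v_2\}$), so PMMS would require $u(A_1) \geq \PMMS_1(A_1 \cup A_2) = \PMMS(V)$. Computing $\PMMS(V)$ is immediate: the two connected bipartitions of the path, namely $(\{v_1\}, \{v_2, v_3\})$ and $(\{v_1, v_2\}, \{v_3\})$, have smaller-bundle utilities $1$ and $2$, so $\PMMS(V) = 2$. Since $u(A_1) = 1 < 2 = \PMMS(A_1 \cup A_2)$, the allocation $A$ is not PMMS (in fact not even $\alpha$-PMMS for any $\alpha > 1/2$), which establishes the claim.

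There is no genuine obstacle here; the only point to keep straight is the structural asymmetry between the two notions on a path. On a single cut, EF1 is cheap to satisfy whenever the heavy item can be removed from the richer side, whereas PMMS (for identical additive valuations) compares the poorer agent's bundle against the lighter side of the \emph{best balanced} connected cut, which in this instance is strictly heavier.
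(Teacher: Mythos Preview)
Your proof is correct and follows essentially the same approach as the paper: a three-vertex path with utilities $(1,1,c)$ where the single leftmost item goes to agent~1, yielding an EF1 allocation (remove the heavy rightmost item) that violates PMMS because the balanced cut $(\{v_1,v_2\},\{v_3\})$ gives both sides utility at least~2. The paper uses $c=10$ where you use $c=2$, but the argument is identical.
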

\begin{proof}
Consider an instance of a path with three vertices and two agents with identical additive utilities. Each agent has utility $1$, $1$, and $10$ for each of the vertices from left to right. An allocation that assigns item $1$ to agent $1$ and the rest to agent $2$ is EF1; however it is not PMMS since transferring the middle item, with utility $1$, from agent $2$ to agent $1$ increases the minimum utility among the two agents. 
\end{proof}

\begin{proposition}
Even for a star and two agents with identical binary additive utilities, PMMS may not imply EF1. 
\end{proposition}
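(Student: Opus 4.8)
The plan is to exhibit one instance on a star, together with a single connected allocation, that satisfies PMMS but violates EF1. I would take $G=K_{1,3}$, the star with center $c$ and leaves $a,b,d$, and let both agents share the binary additive utility $u$ with $u(v)=1$ for every $v\in V$. Consider the connected allocation $A=(A_1,A_2)$ given by $A_1=\{c,a,b\}$ and $A_2=\{d\}$; both bundles are connected, since $A_1$ contains the center and $A_2$ is a single vertex.

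The first step is to verify PMMS. No bundle is empty, and the two agents are adjacent through the edge $\{c,d\}$, so the only requirement to check is $u(A_i)\ge\PMMS(A_1\cup A_2)$ for $i\in\{1,2\}$. The key observation is structural: in a star, every connected bundle that does not contain the center is a single leaf, so in any partition of $V=A_1\cup A_2$ into two connected bundles, the part avoiding $c$ has utility at most $1$. Hence $\PMMS(V)=1$, a value attained, e.g., by $A$ itself. Since $u(A_1)=3\ge 1$ and $u(A_2)=1\ge 1$, the allocation $A$ is PMMS.

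The second step is to show that $A$ is not EF1. Agent $2$ receives $u(A_2)=1$, while deleting any single item from $A_1$ leaves a bundle of utility $2$: removing $c$ yields $\{a,b\}$, and removing a leaf yields a center-plus-leaf bundle. Thus for every $v\in A_1$ we have $u(A_2)=1<2=u(A_1\setminus\{v\})$, so agent $2$ is not EF1-satisfied with respect to agent $1$, and $A$ fails EF1.

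Since this is a direct construction, I do not expect a real obstacle; the only point needing a (very short) argument rather than mere inspection is the evaluation $\PMMS(V)=1$, which rests on the fact that the unique way to split the leaves of a star into a connected bundle is to isolate a single leaf. I would also note, to justify the choice of parameters, that $K_{1,2}$ does not suffice—the analogous allocation there is EF1—so the three-leaf star is essentially the smallest witness; the same construction on $K_{1,k}$ with $A_2$ a lone leaf keeps $\PMMS(V)=1$ while the envy remaining after removal of one item grows with $k$.
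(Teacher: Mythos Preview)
Your proof is correct and essentially identical to the paper's: both use the star $K_{1,3}$ with all-ones utilities and the allocation that gives one agent a single leaf and the other agent the remaining three vertices (you merely swap which agent gets which). Your justification that $\PMMS(V)=1$ via the structural observation about connected subsets of a star is spelled out in more detail than in the paper, but the example and the argument are the same.
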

\begin{proof}
Consider an instance of a star with three leaf vertices and two agents, each of whom has utility $1$ for every vertex of the star. An allocation that allocates one leaf vertex to agent $1$ and the rest to agent $2$ is PMMS. However, it is not EF1 as agent $2$'s obtains a higher utility than agent $1$ even after removing a single vertex from agent $2$'s bundle. 
\end{proof}


The following proof is similar to that in \cite{ijcai2018Amanatidis}, who showed that EF1 implies $(1/2)$-PMMS in the standard setting of fair division.

\begin{proposition}
For a connected graph $G$ and $n$ agents with additive utilities, EF1 implies $(1/2)$-PMMS. 
\end{proposition}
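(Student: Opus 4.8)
The plan is to mirror the standard-setting argument (the one referenced from \cite{ijcai2018Amanatidis}) and check that the connectivity restriction causes no harm. Fix an EF1 connected allocation $A$ and a pair $i,j \in N$ with $A_i = \emptyset$ or with $i$ and $j$ adjacent under $A$; the goal is $u_i(A_i) \ge \tfrac{1}{2}\PMMS_i(A_i \cup A_j)$. I would first note that $A_i \cup A_j$ is connected, so that $\PMMS_i(A_i \cup A_j)$ is well defined: if $A_i = \emptyset$ this set is just $A_j$, and otherwise the adjacency provides an edge joining the two connected bundles $A_i$ and $A_j$. If $A_j = \emptyset$, then $\PMMS_i(A_i \cup A_j) = \PMMS_i(A_i) \le \tfrac12 u_i(A_i) \le u_i(A_i)$ and we are done, so I may assume $A_j \neq \emptyset$ and apply EF1 to get an item $v \in A_j$ with $u_i(A_i) \ge u_i(A_j \setminus \{v\}) = u_i(A_j) - u_i(v)$.

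The only ingredient beyond additivity is the auxiliary bound $\PMMS_i(X) \le u_i(X) - \max_{w \in X} u_i(w)$, valid for every connected bundle $X$. This holds because, letting $w^\star$ be a maximum-value item of $X$, in any connected $2$-partition of $X$ the part not containing $w^\star$ has utility at most $u_i(X) - u_i(w^\star)$; hence the minimum of the two parts is at most $u_i(X) - u_i(w^\star)$, and this inequality survives taking the maximum over all connected partitions. The point worth stating is that restricting $\Pi_2(X)$ to connected partitions can only shrink the feasible set, so the classical inequality carries over verbatim to the graph setting.

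Then I would split on the size of $v$. If $u_i(v) \le u_i(A_i)$, then from $u_i(A_j) \le u_i(A_i) + u_i(v) \le 2u_i(A_i)$ and $\PMMS_i(A_i \cup A_j) \le \tfrac12\bigl(u_i(A_i) + u_i(A_j)\bigr)$ we get $\PMMS_i(A_i \cup A_j) \le \tfrac32 u_i(A_i)$, which is stronger than required. If instead $u_i(v) > u_i(A_i)$, let $v^\star$ be a maximum-value item of $A_i \cup A_j$, so $u_i(v^\star) \ge u_i(v)$. The auxiliary bound gives $\PMMS_i(A_i \cup A_j) \le u_i(A_i) + u_i(A_j) - u_i(v^\star)$, and EF1 gives $u_i(A_j) \le u_i(A_i) + u_i(v) \le u_i(A_i) + u_i(v^\star)$; combining, $\PMMS_i(A_i \cup A_j) \le 2u_i(A_i)$, i.e. $u_i(A_i) \ge \tfrac12 \PMMS_i(A_i \cup A_j)$.

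I do not expect a genuine obstacle: the whole content is the case analysis above together with the remark that the connectivity constraint does not weaken the bound $\PMMS_i(X) \le u_i(X) - \max_w u_i(w)$, since it only removes candidate partitions from the maximization. The mild care needed is in the degenerate cases $A_i = \emptyset$ or $A_j = \emptyset$, both of which are dispatched by the trivial estimate $\PMMS_i(A_i) \le \tfrac12 u_i(A_i)$.
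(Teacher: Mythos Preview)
Your proof is correct and rests on the same idea as the paper's, but you take an unnecessarily circuitous route. The auxiliary inequality you prove, $\PMMS_i(X) \le u_i(X) - \max_{w\in X} u_i(w)$, in fact holds for \emph{every} item $w$, not just the maximizer: in any two-part partition, one part misses $w$ and hence has utility at most $u_i(X\setminus\{w\})$. The paper exploits this by applying the bound directly to the EF1 witness $v\in A_j$, giving in one line
\[
\PMMS_i(A_i\cup A_j)\le u_i(A_i)+u_i(A_j\setminus\{v\})\le 2\,u_i(A_i),
\]
with no case split and no appeal to $v^\star$. Indeed, your Case~2 already works unconditionally (you never use the hypothesis $u_i(v)>u_i(A_i)$ there), so Case~1 is redundant; and once you notice the bound holds for arbitrary $v$, the detour through $v^\star$ disappears too.
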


\begin{proof}
Let $A$ be a connected allocation that is EF1. Consider any pair of agents $i, j$ such that $A_i=\emptyset$ or $i$ and $j$ are adjacent under $A$. Let $M_{ij}=A_i \cup A_j$. If $\PMMS_i(M_{ij}) = 0$ then $u_i(A_i) \ge 0 = \PMMS_i(M_{ij})$. Assume that $\PMMS_i(M_{ij}) > 0$. Since $\PMMS_i(M_{ij}) > 0$, each bundle must contain at least one item in a PMMS partition of $M_{ij}$ for $i$. As each item may only appear in one bundle, it holds that $\PMMS_i(M_{ij}) \le u_i((A_i \cup A_j)\setminus \{v\})$ for any $v \in A_i \cup A_j$. Since $A$ is locally EF1, $u_i(A_i) \ge u_i(A_j \setminus \{v\})$ for some $v \in A_j$. Thus, for this item $v$, we get
\[
\PMMS_i(M_{ij}) \le u_i((A_i \cup A_j)\setminus \{v\}) = u_i(A_i) + u_i(A_j \setminus \{v\}) \le 2u_i(A_i).
\]
Rewritten, we get the inequality $u_i(A_i) \ge \PMMS_i(M_{ij})/2$.
\end{proof}

When agents have binary additive utilities, it turns out that EF1 implies PMMS. 

\begin{proposition}\label{prop:binary:EF1andMMS}
For a connected graph $G$ and $n$ agents with binary additive utilities, 
EF1 implies PMMS.
\end{proposition}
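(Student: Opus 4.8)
The plan is to reuse the structure of the preceding proof that EF1 implies $(1/2)$-PMMS, and to observe that binary additivity upgrades the factor $1/2$ to $1$ by an integrality argument. Fix a connected EF1 allocation $A$ and a pair of agents $i,j$ with $A_i=\emptyset$ or $i,j$ adjacent under $A$, and write $M_{ij}=A_i\cup A_j$. The goal is to prove $u_i(A_i)\ge\PMMS_i(M_{ij})$.

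First, I would note that because $u_i$ is binary additive, $u_i(X)$ is a nonnegative integer for every $X\subseteq V$. Hence for any partition $(B_1,B_2)\in\Pi_2(M_{ij})$ we have $\min\{u_i(B_1),u_i(B_2)\}\le u_i(M_{ij})/2$, and since the left-hand side is an integer, $\min\{u_i(B_1),u_i(B_2)\}\le\floor{u_i(M_{ij})/2}$. Taking the maximum over connected $2$-partitions gives $\PMMS_i(M_{ij})\le\floor{u_i(M_{ij})/2}$; connectivity constraints only remove partitions from $\Pi_2(M_{ij})$, so they cannot hurt this bound.

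Second, I would bound $u_i(M_{ij})$ using EF1. If $A_j=\emptyset$ then $u_i(M_{ij})=u_i(A_i)$, so $\PMMS_i(M_{ij})\le\floor{u_i(A_i)/2}\le u_i(A_i)$ and we are done. Otherwise, applying the EF1 condition to the pair $(i,j)$ yields an item $v\in A_j$ with $u_i(A_i)\ge u_i(A_j\setminus\{v\})=u_i(A_j)-u_i(v)\ge u_i(A_j)-1$, where the last step uses $u_i(v)\le 1$. Therefore $u_i(A_j)\le u_i(A_i)+1$, so $u_i(M_{ij})=u_i(A_i)+u_i(A_j)\le 2u_i(A_i)+1$, and combining with the first step,
\[
\PMMS_i(M_{ij})\le\floor{\tfrac{2u_i(A_i)+1}{2}}=u_i(A_i).
\]

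I do not expect a genuine obstacle: the whole content is the integrality observation, which squeezes $\PMMS_i(M_{ij})$ — an integer bounded above by $u_i(M_{ij})/2\le u_i(A_i)+\tfrac12$ — down to at most $u_i(A_i)$. The empty-bundle case ($A_i=\emptyset$, arbitrary $j$) is automatically covered, since then the bound gives $u_i(M_{ij})\le 1$ and hence $\PMMS_i(M_{ij})\le 0=u_i(A_i)$. Everything else is the same bookkeeping as in the $(1/2)$-PMMS proof.
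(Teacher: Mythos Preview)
Your proof is correct and follows essentially the same approach as the paper: both arguments bound $\PMMS_i(M_{ij})\le\lfloor u_i(M_{ij})/2\rfloor$ by integrality and then use EF1 with binary values to get $u_i(A_j)\le u_i(A_i)+1$. Your version is in fact slightly cleaner, since computing $\lfloor(2u_i(A_i)+1)/2\rfloor=u_i(A_i)$ directly replaces the paper's parity case analysis on $u_i(A_i)+u_i(A_j)$.
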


\begin{proof}
    An allocation $A$ is PMMS under binary utilities if for any pair of agents $i$ and $j$, it holds that $u_i(A_i) \ge \floor{(u_i(A_i) + u_i(A_j))/2}$. In an EF1 allocation we have that $u_i(A_i) \ge u_i(A_j \setminus \{v\})$ for some $v \in A_j$. Since $u_{i}(v) \in \{0, 1\}$ it follows that $u_i(A_i) \ge (u_i(A_i) + u_i(A_j))/2 - 1/2$. Since $u_i(A_i), u_i(A_j) \in \mathbb{N}$, $\floor{(u_i(A_i) + u_i(A_j))/2} = (u_i(A_i) + u_i(A_j))/2 - 1/2$ if $u_i(A_i) + u_i(A_j)$ is odd. If $u_i(A_i) + u_i(A_j)$ is even, then it must hold that $u_i(A_i) \ge u_i(A_j)$ and $u_i(A_i) \ge \floor{(u_i(A_i) + u_i(A_j))/2}$. Otherwise we have that $u_i(A_j) - u_i(A_i) \ge 2$ and the allocation is not EF1.
\end{proof}

\begin{corollary}
For a path and $n$ agents with binary additive utilities, a PMMS and EF1 connected allocation exists.
\end{corollary}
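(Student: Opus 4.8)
The plan is to combine the known existence result for EF1 connected allocations on a path with \cref{prop:binary:EF1andMMS}. Specifically, it is stated in the related work section that an EF1 connected allocation can always be attained under the connectivity constraints of a path~\citep{BiloCaFl22,Igarashi2023}; this holds for any number of agents with additive utilities, in particular for the special case of binary additive utilities. So the first step is simply to invoke this result to obtain a connected allocation $A$ on the path that is EF1.

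The second step is to apply \cref{prop:binary:EF1andMMS}, which asserts that for a connected graph $G$ (a path is connected) and $n$ agents with binary additive utilities, every EF1 allocation is also PMMS. Applying this to the allocation $A$ obtained in the first step yields that $A$ is simultaneously EF1 and PMMS, which is exactly the claim of the corollary.

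The only subtlety worth spelling out is matching the hypotheses: the cited EF1 existence result must be for \emph{connected} allocations on a path and must allow arbitrarily many agents with additive utilities, and binary additive utilities are a special case of additive utilities, so \cref{prop:binary:EF1andMMS} applies verbatim. I do not anticipate any real obstacle here — the corollary is a direct composition of two results already available at this point in the paper, so the ``proof'' is essentially a two-line citation-and-apply argument rather than anything requiring new ideas.
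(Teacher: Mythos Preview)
Your proposal is correct and matches the paper's own proof essentially verbatim: the paper likewise invokes the existence of an EF1 connected allocation on a path (citing \cite{Igarashi2023} for monotone utilities) and then applies \cref{prop:binary:EF1andMMS} to conclude PMMS.
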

\begin{proof}
The claim follows from \cref{prop:binary:EF1andMMS} and that a EF1 connected allocation exists on a path for monotone utilities~\cite{Igarashi2023}.
\end{proof}

Next, we consider EFX. An allocation $A$ is \emph{envy-free up to any item (EFX)} if for any pair of agents $i,j \in N$, $A_j = \emptyset$, or it holds that $u_i(A_i) \geq u(A_j \setminus \{v\})$ for every item $v \in A_j$. 

In the standard setting of fair division, it is known that PMMS implies EFX~\cite{caragiannis2019unreasonable}. We observe that the relation becomes reversed when the graph is a path. 

\begin{proposition}
For a path and $n$ agents with monotone utilities, EFX implies PMMS. 
\end{proposition}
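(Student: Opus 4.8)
The plan is to show that on a path, if an allocation $A$ is EFX then for every pair of adjacent agents $i,j$ we have $u_i(A_i) \geq \mathrm{PMMS}_i(A_i \cup A_j)$, and also handle the case $A_i = \emptyset$. Fix such a pair $i,j$ with an edge $\{v_1,v_2\} \in E$, $v_1 \in A_i$, $v_2 \in A_j$. The crucial structural observation is that on a path, the union $M_{ij} = A_i \cup A_j$ of two connected bundles that are adjacent is itself connected (it is a subpath), and moreover $A_i$ and $A_j$ are the two "halves" obtained by cutting $M_{ij}$ at the single edge $\{v_1,v_2\}$. So the current split $(A_i, A_j)$ is one of the at most $|M_{ij}| - 1$ connected bipartitions of the subpath $M_{ij}$, and every connected bipartition of $M_{ij}$ is obtained by cutting at exactly one edge.

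The main step is then: take any connected bipartition $(P, Q) \in \Pi_2(M_{ij})$ achieving $\mathrm{PMMS}_i(M_{ij})$, so $\min\{u_i(P), u_i(Q)\} = \mathrm{PMMS}_i(M_{ij})$; I want to conclude $u_i(A_i) \geq \min\{u_i(P), u_i(Q)\}$. I would argue by comparing the cut edge of $(P,Q)$ with the cut edge $\{v_1,v_2\}$ of $(A_i,A_j)$. If the two cuts coincide, we are done trivially (possibly with $P,Q$ relabeled, noting $u_i(A_i) \ge \min\{u_i(P),u_i(Q)\}$ regardless). Otherwise, one of the two parts of $(P,Q)$ — say $P$ — is contained in $A_j$ and is a proper subset of $A_j$ obtained by deleting a nonempty connected "end" of the subpath $A_j$ starting adjacent to the cut of $(P,Q)$... here I need to be a little careful: the correct claim is that one side of the $(P,Q)$-cut is a subset of $A_i$ and the other side is a superset of $A_j$, or symmetrically one side is a subset of $A_j$ and the other a superset of $A_i$. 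In the first case $\min\{u_i(P),u_i(Q)\}$ is at most $u_i(\text{the subset of } A_i) \le u_i(A_i)$ by monotonicity, and we are done. In the second case, the side that is a subset of $A_j$ equals $A_j \setminus (\text{some nonempty connected piece including an endpoint-neighborhood})$; in particular it is of the form $A_j \setminus \{w\} \cup (\text{rest})$, i.e. it is contained in $A_j \setminus \{w\}$ for the item $w \in A_j$ adjacent to $v_1$ along the path. Wait — more precisely it is contained in $A_j \setminus \{v_2'\}$ where $v_2'$ is the endpoint of $A_j$ on the side of the $(P,Q)$-cut; by monotonicity $\min\{u_i(P),u_i(Q)\} \le u_i(A_j \setminus \{v_2'\})$, and EFX with respect to the witness item $v_2'$ (an extreme item of $A_j$, hence a legitimate "any item") gives $u_i(A_i) \ge u_i(A_j \setminus \{v_2'\})$, finishing the case.

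I would finish by disposing of the degenerate situations: if $\mathrm{PMMS}_i(M_{ij}) = 0$ the inequality is immediate; if $A_i = \emptyset$ then $M_{ij} = A_j$ is a subpath, and for any connected bipartition $(P,Q)$ of $A_j$ the smaller side is contained in $A_j \setminus \{v\}$ for one of the two endpoints $v$ of $A_j$, so EFX gives $0 = u_i(A_i) \ge u_i(A_j \setminus \{v\}) \ge \min\{u_i(P),u_i(Q)\}$, forcing $\mathrm{PMMS}_i(A_j) = 0$. The main obstacle is getting the case analysis on the relative position of the two cut edges exactly right and identifying, in each case, the correct endpoint item of $A_j$ to use as the EFX witness — the path structure makes this work, and it is essential: the claim fails on general graphs (indeed on a star, by the earlier proposition), because there the union $A_i \cup A_j$ need not decompose into two connected pieces in a way compatible with EFX witnesses.
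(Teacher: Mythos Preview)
Your argument is correct and essentially the same as the paper's: both observe that if the PMMS cut of $A_i \cup A_j$ lies strictly inside $A_j$, then one part is a proper subset of $A_j$ missing the item $v_2$ adjacent to $A_i$, so EFX applied with witness $v_2$ finishes the job (your first guess $w = v_2$ was already right; the ``wait'' and the switch to $v_2'$ are unnecessary). You also explicitly treat the $A_i = \emptyset$ case, which the paper's proof omits.
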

\begin{proof}
Suppose that $A$ is an allocation that satisfies EFX. Assume towards a contradiction that it fails to satisfy PMMS. Thus, there exists a pair of agents $i,j \in N$ such that $v_1\in A_i$ and $v_2 \in A_j$ for some $\{v_1,v_2\} \in E$ and $u_i(A_i) < \PMMS_i(A_i \cup A_j)$. Let $(X,Y)$ be a PMMS partition of $A_i \cup A_j$ for agent $i$ where $X \cap A_i \neq \emptyset$. The above inequality means that $A_i \cup \{v_2\} \subseteq X$ and $Y \subsetneq A_j$. However, 
since $A$ is EFX, $u_i(A_i) \geq \max_{v \in A_j}u_i(A_j \setminus \{v\}) \geq u_i(Y)$. This contradicts with the fact that $u_i(A_i) <u_i(Y)$. 
\end{proof}

Note that the existence of EFX is not guaranteed even for two agents with identical utilities on a path; see Example A.5 in \citep{BiloCaFl22}.

\section{The case of $n$ agents}\label{sec:nagents}
In this section, we consider the case of $n$ agents. We start by showing that any MNW connected allocation satisfies $(1/2)$-PMMS.

\begin{theorem}\label{thm:MNW-half-PMMS}
For a connected graph $G$ and $n$ agents with additive utilities, any MNW connected allocation satisfies $(1/2)$-PMMS. 
\end{theorem}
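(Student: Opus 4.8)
The plan is to argue by contradiction: suppose $A$ is an MNW connected allocation that is not $(1/2)$-PMMS. Then there is a pair of agents $i,j$ with $A_i = \emptyset$ or $i,j$ adjacent under $A$, such that $u_i(A_i) < \tfrac{1}{2}\PMMS_i(A_i \cup A_j)$. In particular $\PMMS_i(A_i \cup A_j) > 0$, so $u_j(A_j) > 0$ as well (the combined bundle has positive utility for $i$, and if $A_i$ had everything of value, the inequality would fail). I would then use a PMMS partition $(X,Y)$ of $M := A_i \cup A_j$ for agent $i$ with $u_i(X) \ge u_i(Y) = \PMMS_i(M)$; both $X$ and $Y$ are connected in $G[M]$, and since $M$ is itself connected (as $A_i = \emptyset$ forces $M = A_j$ connected, or $A_i, A_j$ are adjacent connected bundles), we can reassign: give $X$ to $i$ and $Y$ to $j$, keeping every other agent's bundle fixed. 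This yields a new connected allocation $A'$.

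**Comparing Nash welfare.** The key computation is to show $A'$ has strictly larger Nash welfare, contradicting optimality of $A$. Since only $i$ and $j$ change bundles, it suffices to compare the relevant factors. We have $u_i(A'_i) = u_i(X) \ge \PMMS_i(M) > 2 u_i(A_i) \ge 0$, and $u_j(A'_j) = u_j(Y)$. The subtle point is the number of agents with positive utility: $i$ now certainly has positive utility (since $u_i(X) > 0$), whereas before $i$ may have had zero utility. If $u_i(A_i) = 0$, then $A'$ weakly increases the count of positively-valued agents unless $u_j(Y) = 0$; I must handle the case $u_j(Y) = 0$ separately — but then $u_j(A_j) = u_j(M) \le u_j(X) + u_j(Y)$ doesn't immediately bound things, so instead I'd observe that since $u_i(X) \ge u_i(Y)$ and $u_i$ is additive, $u_i(X) \ge \tfrac12 u_i(M)$; combined with the reassignment I can also choose to give $i$ whichever of $X,Y$ preserves $j$'s positivity, or argue via the MNW tie-break. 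Assuming $i$ and $j$ both had positive utility under $A$ (the main case), the inequality to verify is
\[
u_i(A'_i) \cdot u_j(A'_j) > u_i(A_i) \cdot u_j(A_j),
\]
i.e., $u_i(X)\, u_j(Y) > u_i(A_i)\, u_j(A_j)$. Here I would use $u_j(A_j) = u_j(M) = u_j(X) + u_j(Y)$ and $u_i(A_i) + u_i(A_j) = u_i(M)$, together with $u_i(X) > 2 u_i(A_i)$ and $u_i(Y) = \PMMS_i(M) \ge u_i(A_j) - (\text{something})$. The cleanest route: since $(X,Y)$ beats the partition $(A_i, A_j)$ in the maximin sense, $u_i(Y) = \PMMS_i(M) \ge \min(u_i(A_i), u_i(A_j))$, but more usefully $u_i(Y) \ge u_i(A_j)$ would be false in general — so I need the swap argument of Caragiannis et al.: among $\{X, Y\}$, assign to $j$ the bundle maximizing $u_j$, and show the product strictly increases using $u_i(A_i) < \tfrac12 u_i(M) \le u_i(\text{bundle } i \text{ gets})$ and a pigeonhole on $u_j(X) + u_j(Y) = u_j(A_j)$.

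**The main obstacle.** I expect the delicate part to be exactly this bookkeeping: choosing which of $X, Y$ goes to $i$ versus $j$ so that simultaneously (a) $i$'s utility strictly more than doubles — actually we only need it to exceed $u_i(A_i)$, which follows since even the smaller of $u_i(X), u_i(Y)$ equals $\PMMS_i(M) > 2u_i(A_i)$ — and (b) $j$'s new utility, times $i$'s new utility, exceeds the old product. For (b), let $i$ take $X$ and $j$ take $Y$ where $u_i(X) \ge u_i(Y)$; we get $u_i(X)u_j(Y) \ge u_i(Y) u_j(Y)$, which need not beat $u_i(A_i)u_j(A_j)$ if $u_j(Y)$ is tiny. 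So instead let $j$ take whichever of $X, Y$ has larger $u_j$-value, say $u_j(X) \ge u_j(Y)$, so $u_j(X) \ge \tfrac12 u_j(A_j)$; then $i$ takes $Y$ with $u_i(Y) = \PMMS_i(M) > 2 u_i(A_i)$, giving $u_i(Y) u_j(X) > 2 u_i(A_i) \cdot \tfrac12 u_j(A_j) = u_i(A_i) u_j(A_j)$ — strict, which is the contradiction. The edge cases ($A_i = \emptyset$, or $u_j(A_j) = 0$, or $u_i(A_j) = 0$ making $\PMMS_i(M) = 0$) each collapse quickly: if $\PMMS_i(M) = 0$ the hypothesis $u_i(A_i) < 0$ is impossible; if $A_i = \emptyset$ and $u_i(M) > 0$ the reassignment gives $i$ positive utility, raising the positive-agent count (since we can keep $j$ positive by the larger-$u_j$ choice whenever $u_j(A_j) > 0$), again contradicting MNW optimality. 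I would organize the write-up as: (1) reduce to $\PMMS_i(M) > 0$; (2) take the partition, do the larger-$u_j$ assignment; (3) one-line product inequality; (4) dispatch the positive-count edge case for $A_i = \emptyset$.
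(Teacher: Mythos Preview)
Your argument is correct and matches the paper's approach: take a PMMS partition of $M = A_i \cup A_j$ for agent $i$, assign to $j$ whichever piece has larger $u_j$-value (so $j$ keeps at least $\tfrac12 u_j(M) \ge \tfrac12 u_j(A_j)$), give $i$ the other piece (worth at least $\PMMS_i(M) > 2u_i(A_i)$), and compare Nash products. The paper phrases this directly rather than by contradiction, and one small slip in your write-up --- you write $u_j(A_j) = u_j(M)$, but in general only $u_j(A_j) \le u_j(M)$ holds --- is harmless since the inequality you actually need, $u_j(X) \ge \tfrac12 u_j(A_j)$, still follows.
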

\begin{proof}
For a connected graph $G$, let $A^*$ be any MNW connected allocation. Then take any pair of agents $i$ and $j$ such that $A^*_i=\emptyset$, or $i$ and $j$ are adjacent under $A^*$. Let $M_{ij} = A^*_i \cup A^*_j$. The allocation $A^*_{ij} = ( A^*_i, A^*_j )$ is an MNW allocation of $G[M_{ij}]$ to $i$ and $j$. We wish to show that $u_i(A^*_i) \ge \PMMS_i(M_{ij})/2$ and $u_j(A^*_j) \ge \PMMS_i(M_{ij})/2$. Note that if $u_i(A^*_i) = 0$ or $u_j(A^*_j) = 0$, then $j$ or $i$ must receive a bundle with utility $u_j(M_{ij})$ and $u_i(M_{ij})$, respectively. Thus, in either case, both agents receive a bundle with utility at least their PMMS.

Assume now that $u_i(A^*_i) > 0$ and $u_j(A^*_j) > 0$. Let $( A_1, A_2 )$ be a PMMS partition of $M_{ij}$ for agent $i$. Then at least one of the bundles has utility at least $u_j(M_{ij})/2$ according to $u_j$. Let $A = ( A_i, A_j )$ be the allocation in which $j$ receives this bundle and $i$ the other bundle. Then $u_i(A_i) \ge \PMMS_i(M_{ij})$. Moreover, it holds that $u_i(A^*_i) \cdot u_j(A^*_j)  \ge  u_i(A_i) \cdot u_j(A_j)$. 
Substituting $u_j(A^*_j)$ by an upper bound, and $u_i(A_i)$ and $u_j(A_j)$ by lower bounds, $u_i(A^*_i) \cdot u_j(M_{ij}) \ge \PMMS_i(M_{ij}) \cdot u_j(M_{ij})/2$, which implies $u_i(A^*_i) \ge \PMMS_i(M_{ij})/2$. 
Thus, agent $i$ receives in $A^*$ a bundle with utility at least a half of her PMMS. By exchanging $i$ and $j$ in the argument, $j$ must also receives in $A^*$ a bundle with utility at least a half of her PMMS.
\end{proof}

\begin{corollary}\label{cor:MNW-half-PMMS}
For a connected graph $G$ and $n$ agents with additive utilities, a Pareto-optimal and $(1/2)$-PMMS connected allocation exists. 
\end{corollary}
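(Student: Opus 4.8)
The plan is to derive this almost immediately from \cref{thm:MNW-half-PMMS}. The key observation is that an MNW connected allocation always exists (the graph is connected, there are finitely many connected allocations, so the maximization defining MNW is over a nonempty finite set), and that any such allocation is Pareto-optimal among connected allocations. Combined with the theorem, which says an MNW connected allocation is $(1/2)$-PMMS, this yields the corollary.

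First I would argue existence: since $G$ is connected and finite, the set of connected allocations is finite and nonempty (e.g.\ give all items to a single agent), so there is at least one allocation maximizing the number of agents with positive utility and, subject to that, the product of positive utilities. Fix such an MNW connected allocation $A^*$. Second, I would verify Pareto-optimality of $A^*$ within the class of connected allocations: suppose a connected allocation $A'$ Pareto-improves $A^*$, so $u_i(A'_i)\ge u_i(A^*_i)$ for all $i$ and strictly for some $j$. Then the set $\{i : u_i(A'_i)>0\}$ contains $\{i : u_i(A^*_i)>0\}$; if it is strictly larger, $A'$ beats $A^*$ on the primary objective, a contradiction; if it is equal, then $u_j(A'_j)>u_j(A^*_j)>0$ and all other positive utilities are at least as large, so the product of positive utilities strictly increases, again contradicting maximality. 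Hence $A^*$ is Pareto-optimal. Third, by \cref{thm:MNW-half-PMMS}, $A^*$ is $(1/2)$-PMMS. Therefore $A^*$ is a Pareto-optimal and $(1/2)$-PMMS connected allocation, proving the corollary.

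There is essentially no obstacle here — the corollary is a packaging of the theorem with the standard fact that MNW implies Pareto-optimality. The only mild care needed is handling the lexicographic structure of the MNW objective (agents with positive utility first, then the product) so that the Pareto-improvement argument goes through in both the "same support" and "strictly larger support" cases; this is routine. One could state the proof in a single sentence ("An MNW connected allocation exists, is Pareto-optimal, and is $(1/2)$-PMMS by \cref{thm:MNW-half-PMMS}"), but I would include the short Pareto-optimality verification for completeness.
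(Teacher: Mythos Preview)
Your proposal is correct and matches the paper's approach: the paper states the corollary without proof, treating it as immediate from \cref{thm:MNW-half-PMMS} together with the standard observation that an MNW connected allocation exists and is Pareto-optimal. Your explicit verification of Pareto-optimality (handling both the larger-support and equal-support cases) is a welcome bit of completeness that the paper omits.
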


The one (1/2)-PMMS guarantee in \cref{thm:MNW-half-PMMS} is unfortunately the best that we can hope for using MNW allocations. Even if instances are restricted to two agents on a path with binary additive utilities, there exist instances with MNW allocations that provide one of the agents with exactly half their PMMS. Further, a leximin allocation does not satisfy any approximation of PMMS, even when agents have binary additive utilities on paths.

\begin{proposition}\label{prop:MNW}
Even for a path and two agents with binary additive utilities, there exists an instance with an MNW allocation in which one of the agents receives no more than a half of her PMMS.
\end{proposition}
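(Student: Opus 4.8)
The plan is to exhibit one small explicit instance and verify everything by a finite case check. I would work on the path $v_1 - v_2 - v_3 - v_4$, give agent~$1$ the binary additive utility that values every vertex (so $u_1(v_i)=1$ for all $i$), and give agent~$2$ the binary additive utility that values only the two middle vertices, namely $u_2(v_2)=u_2(v_3)=1$ and $u_2(v_1)=u_2(v_4)=0$. The allocation I want to point at is $A=(A_1,A_2)$ with $A_1=\{v_1\}$ and $A_2=\{v_2,v_3,v_4\}$; both bundles are clearly connected.

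Step one: show that $A$ is an MNW allocation. Since on a path a connected allocation for two agents is determined by a single cut edge together with which of the two resulting intervals goes to which agent, there are only a handful of connected allocations, and I would simply list them. It is immediate that both agents can be simultaneously positive (for instance under $A$), so every MNW allocation must make both agents positive; hence it suffices to compare Nash products over the connected allocations in which both agents receive positive utility. The check then shows that the maximum such product is $2$, attained by $A$ together with a few other allocations, for instance the cut giving agent~$1$ the bundle $\{v_1,v_2\}$. So $A$ is one of the MNW allocations.

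Step two: compute $\PMMS_1(A_1\cup A_2)$. Here $A_1\cup A_2=V=\{v_1,v_2,v_3,v_4\}$, and the connected bipartitions of the four-vertex path are exactly the three cuts $\{v_1\}\mid\{v_2,v_3,v_4\}$, $\{v_1,v_2\}\mid\{v_3,v_4\}$, and $\{v_1,v_2,v_3\}\mid\{v_4\}$, whose two side-utilities under $u_1$ have minima $1$, $2$, $1$ respectively. Hence $\PMMS_1(V)=2$. Combining the two steps, $u_1(A_1)=u_1(\{v_1\})=1=\frac{1}{2}\PMMS_1(A_1\cup A_2)$, so in the MNW allocation $A$ agent~$1$ receives exactly half of her PMMS, which is in particular no more than half of it.

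The only real difficulty is finding the instance: the point to exploit is that the singleton $\{v_1\}$ is worthless to agent~$2$, so the Nash product is unchanged whether agent~$1$ receives $\{v_1\}$ (utility $1$) or a balanced bundle of utility $2$; this indifference is precisely what lets an MNW allocation drop to the $(1/2)$-PMMS threshold, and once the instance is fixed the verification above is routine and finite.
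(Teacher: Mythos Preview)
Your proposal is correct and follows the same approach as the paper: exhibit an explicit instance and verify by exhaustive enumeration that a particular connected allocation is MNW yet gives one agent exactly half her PMMS. Your instance is in fact smaller than the paper's (four vertices instead of six, with simpler utility profiles), so the verification is even shorter while establishing the same tightness bound.
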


\begin{proof}
Consider an instance in which a path of 6 items is to be allocated to two agents with binary additive utility functions. Assume that items are numbered from left to right on the path. Let $S_1 = \{1, 3, 4, 5\}$ and $S_2 = \{2, 3, 4, 6\}$ be the set of items with utility $1$ for agent 1 and 2, respectively. Then the MNW of the instance is $4$, and there are four MNW allocations: $( \{1, 2, 3\}, \{4, 5, 6\} )$, $( \{4, 5, 6\}, \{1, 2, 3\})$, $( \{1\}, \{2, 3, 4, 5, 6\} )$ and $( \{1, 2, 3, 4, 5\}, \{6\} )$. Note that each agent's PMMS is $2$, as they both have a total utility of $4$, and in the partition $( \{1, 2, 3\}, \{4, 5, 6\} )$ both bundles have utility $2$. However, in the two latter MNW allocations, one of the agents only receive a bundle with utility $1$. Thus, receiving only exactly a half of her PMMS.
\end{proof}

\begin{proposition}\label{prop:leximin}
    For any $\alpha > 0$, there exists an instance of a path and two agents with binary additive utilities such that no leximin connected allocation is $\alpha$-PMMS.
\end{proposition}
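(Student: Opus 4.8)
The plan is to exhibit, for each $\alpha>0$, a path on which agent~$1$'s pairwise maximin share is arbitrarily large compared with the leximin value, so that every leximin allocation must give agent~$1$ a bundle worth far less than $\alpha\cdot\PMMS_1(V)$. Fix an integer $b$ with $\alpha(b+1)>2$ (any $b>\tfrac{2}{\alpha}-1$ works), and then an integer $N$ large enough that $\bigl(\tfrac{\alpha(b+1)}{2}-1\bigr)N>1$ (possible because the coefficient is positive). Build the path from $s:=(b+1)N$ identical \emph{blocks} placed consecutively, each block being $b$ items that are worth $1$ to agent~$1$ and $0$ to agent~$2$, followed by a single item worth $0$ to agent~$1$ and $1$ to agent~$2$. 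Then $u_1(V)=bs=b(b+1)N$ and $u_2(V)=s=(b+1)N$; since $b(b+1)$ is even, agent~$1$'s prefix utilities take every integer value up to $u_1(V)$, so $\PMMS_1(V)=\mu^2_1(V)=u_1(V)/2=b(b+1)N/2$, which is large, while $\PMMS_2(V)=\lfloor (b+1)N/2\rfloor$ stays comparatively small.

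The next step is to enumerate connected allocations. On a path every connected $2$-allocation is a cut: it is determined by a position $t$ together with a choice of which agent receives the prefix $\{v_1,\dots,v_t\}$. Writing $t=q(b+1)+\rho$ with $0\le\rho\le b$, the prefix contains exactly $q$ complete blocks and $\rho$ further agent-$1$ items, so the two agents' utilities in either orientation are explicit functions of $q$ (and, for agent~$1$, of $\rho$). I would then maximise $\min(u_1,u_2)$ over all cuts: in the orientation where agent~$1$ takes the prefix this is $\max_q\min\{(q+1)b,\ s-q\}$, and in the other orientation $\max_q\min\{q,\ b(s-q)\}$; because $s$ is a multiple of $b+1$ the crossover points land on integers and both maxima equal exactly $bN$, so the leximin value is $bN$. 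The more delicate part is the secondary (leximin tie-break) comparison: one lists every cut attaining $\min=bN$, checks that all of their sorted utility vectors have the form $(bN,\ bN+j)$ with $0\le j\le b$, and verifies that $j=b$ is attained by a single cut — agent~$1$ taking the prefix up to and including the $b$ agent-$1$ items of block $N+1$. Hence the leximin allocation is unique, and in it $u_1(A_1)=b(N+1)$ and $u_2(A_2)=bN$.

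Finally, in this allocation agents~$1$ and~$2$ are adjacent (the prefix ends strictly inside the path, since $N(b+1)+b<(b+1)^2N$) and $A_1\cup A_2=V$, so $\alpha$-PMMS would require $u_1(A_1)=b(N+1)\ge\alpha\,\mu^2_1(V)=\alpha\,\tfrac{b(b+1)N}{2}$, i.e.\ $N+1\ge\tfrac{\alpha(b+1)}{2}N$ — exactly what the choice of $b$ and $N$ rules out. Therefore the unique leximin connected allocation is not $\alpha$-PMMS, proving the claim. I expect the main obstacle to be the cut-by-cut optimisation together with the tie-break bookkeeping that pins down the leximin allocation; everything else is elementary arithmetic, with $b$ and $N$ only needing to be taken large enough in terms of $\alpha$.
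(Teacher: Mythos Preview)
Your construction is correct and the argument goes through, but it is considerably more elaborate than the paper's. The paper uses a path of $m=2\lceil 2/\alpha\rceil+2$ items where agent~$1$ values every item and agent~$2$ values only items~$3$ and~$m$. Then any allocation giving agent~$2$ utility~$2$ must hand her the whole interval $\{3,\dots,m\}$, so the unique leximin allocation is $(\{1,2\},\{3,\dots,m\})$ with both agents at utility~$2$; since $\PMMS_1(V)=m/2$, the bound $2<\alpha\cdot m/2$ is immediate from the choice of $m$. No block structure, no tie-break enumeration --- the asymmetry between the agents pins down the leximin allocation in one line.

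Your periodic-block instance achieves the same effect but forces you to parametrise all cuts, compute the crossover of two staircase functions in each orientation, and then do a secondary optimisation over the tie set to isolate the leximin allocation. Two small remarks on that bookkeeping: first, your uniqueness claim (``$j=b$ is attained by a single cut'') fails when $b=1$, since the cut at $q=N-1,\rho=b$ also yields the sorted vector $(bN,bN+1)$; this is harmless because you can simply take $b\ge 2$, or observe that every leximin allocation gives agent~$1$ at most $(N+1)b$, which is all the final inequality needs. Second, what your construction \emph{does} buy is a family of instances where both agents' valued items are spread uniformly along the path rather than concentrated at two points, which could be useful if one wanted robustness to perturbations of the instance --- but for the bare existence statement here, the paper's two-item trick is the cleaner route.
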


\begin{proof}
    For a given $\alpha > 0$, let $m = 2\ceil{2/\alpha} + 2$ be the number of items on the path. Let $S_1 = \{1, 2, \dots, m\}$ and $S_2 = \{3, m\}$ be the sets of items with utility $1$ for agent $1$ and $2$, respectively. Note that the PMMS of agent $1$ is $\floor{m / 2} = m / 2$.
    
    There exists a single leximin allocation for the instance. In this leximin allocation agent $1$ receives the bundle $\{1, 2\}$ and agent $2$ receives the bundle $\{3, 4, \dots, m\}$, providing each agent with a utility of 2. Any other allocation gives either agent $1$ or $2$ a bundle with utility less than 2. For the leximin allocation $A$, we get that
    \[
        u_1(A_1) = 2 = \frac{2}{\frac{m}{2}} \PMMS_1 = \frac{4}{m}\PMMS_1 \le \frac{4}{\frac{4}{\alpha} + 2} \PMMS_1 < \alpha \cdot \PMMS_1
    \]
    Thus, agent $1$ receives strictly less utility than $\alpha \cdot \PMMS_1$ and the allocation is not $\alpha$-PMMS.
\end{proof}

Below, we show that unlike PMMS, we cannot guarantee an $\alpha$-EF1 connected allocation, even among two agents for any $\alpha>0$. 


\begin{restatable}{proposition}{propAlphaEF}\label{prop:alphaEF1}
For any $\alpha \in (0,1]$, there exists an instance of a star and two agents with identical binary additive utilities such that no $\alpha$-EF1 connected allocation exists.
\end{restatable}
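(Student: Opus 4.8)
The plan is to exhibit, for a given $\alpha \in (0,1]$, a star whose only connected two-agent allocations are forced to be so lopsided that even the ``up to one item'' slack of $\alpha$-EF1 cannot close the gap. Concretely, I would take the star $G$ with one center $c$ and $k$ leaves $\ell_1,\dots,\ell_k$, where $k$ is a fixed integer chosen large relative to $1/\alpha$ — for instance any integer with $k > 1 + 1/\alpha$, say $k = \lceil 2/\alpha \rceil + 1$ so that $\alpha(k-1)\ge 2 > 1$ — and let both agents share the identical binary additive utility $u$ that assigns value $1$ to every vertex.

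The first step is a structural observation about connected bundles in a star: a connected subgraph of $G$ either contains $c$ (and may then contain any subset of leaves) or avoids $c$, in which case it is a single leaf or empty, since distinct leaves are non-adjacent. Hence in any connected allocation $(A_1,A_2)$ exactly one agent, say agent $1$, receives the bundle containing $c$, and $A_2$ is either a single leaf or empty; consequently $u(A_1)\ge k$ and $u(A_2)\le 1$ (and symmetrically if agent $2$ holds $c$).

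The second step is to rule out $\alpha$-EF1 for each such allocation. If $A_2=\emptyset$ then $A_1=V$, and the pair $(2,1)$ violates $\alpha$-EF1: removing any single vertex from $A_1$ still leaves utility $k\ge 1 > 0 = u(A_2)$, so $u(A_2)\ge \alpha\,u(A_1\setminus\{v\})$ fails for every $v\in A_1$. If $A_2$ is a single leaf, then $u(A_2)=1$ while $u(A_1\setminus\{v\})=k-1$ for every $v\in A_1$, so $\alpha$-EF1 would require $1\ge \alpha(k-1)$, which is false by the choice of $k$. The case where agent $2$ holds the center is identical. Since these exhaust all connected allocations, none is $\alpha$-EF1.

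The argument is essentially obstruction-free; the only point that needs a little care is the quantifier structure — one must verify the $\alpha$-EF1 condition against \emph{every} connected allocation rather than a ``typical'' one — together with the edge case of an empty bundle, which is handled above by the trivial pair $(2,1)$. Fixing $k$ explicitly makes the concluding inequality self-contained, so no further estimates are required.
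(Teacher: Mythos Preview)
Your proof is correct and follows essentially the same approach as the paper: build a star with all-ones binary utilities and sufficiently many leaves so that the forced lopsidedness of any connected two-partition defeats the $\alpha$-EF1 slack. The paper picks an integer $\beta$ with $\alpha>1/\beta$ and uses a star with $\beta+2$ leaves, while you take $k=\lceil 2/\alpha\rceil+1$ leaves; the structural observation and the final inequality are the same, and your explicit treatment of the empty-bundle case is a minor refinement.
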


\begin{proof}
Consider any $\alpha \in (0,1]$. Choose a positive integer $\beta$ such that $\alpha > \frac{1}{\beta}$. Construct a star with $(\beta+2)$ leaves. Each of the two agents has utility one for every vertex of the star. Consider any connected allocation $(X,Y)$, where $X$ is the bundle containing the central vertex of the star. Then, we have $|X|=\beta+1$ and $|Y|=1$. Thus, $1=u(Y) < \frac{\beta +1}{\beta} \leq \alpha \cdot u(X \setminus \{v\})$ for any vertex $v \in X$. This means that $(X,Y)$ is not $\alpha$-EF1. 
\end{proof}

Note that \cref{thm:MNW-half-PMMS} does not provide us an efficient algorithm to compute a $(1/2)$-PMMS connected allocation. In fact, finding an MNW allocation is NP-hard even when $G$ is a tree and agents have binary additive utilities~\cite{Igarashi_Peters_2019}. In the next sections, we identify several cases in which an approximate/exact PMMS connected allocation can be efficiently computed.

\section{The case of two agents}\label{sec:twoagents}
For any connected graph $G$ and two agents with additive utilities, a PMMS connected allocation can always be constructed by the following cut-and-choose procedure: let one agent compute his or her PMMS partition and another agent choose a preferred bundle, leaving the remainder for the first agent.\footnote{Note that this method does not work for non-additive utilities. Indeed, even for two agents with submodular utilities on a cycle, there is an instance for which a PMMS connected allocation does not exist. See Proposition~\ref{prop:PMMS:two:nonadditive} in Appendix~\ref{appendix:twoagents}.} However, computing such a partition can easily be shown to be NP-hard by a simple reduction from {\sc PARTITION}. This leads us to the following question: can we obtain good approximation guarantees with respect to PMMS in polynomial time? We answer this question affirmatively for two agents, by showing that a $(3/4)$-PMMS connected allocation among two agents can always be computed in polynomial time for any additive utility functions. Note that for two agents, PMMS and MMS coincide with each other, and thus the $(3/4)$-PMMS guarantee is equivalent to $(3/4)$-MMS. 

\begin{theorem}\label{thm:halfPMMS-polynomial:twoagents}
For a connected graph $G$ and two agents with additive utility functions, a $(3/4)$-PMMS (equivalently, $(3/4)$-MMS) connected allocation can be found in polynomial time. 
\end{theorem}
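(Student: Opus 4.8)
The high-level plan is to reduce the problem on a general connected graph $G$ to the problem on a single biconnected ``block'' of $G$, and then to solve the biconnected case directly using an $st$-numbering (bipolar ordering).

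\medskip

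\emph{Step 1: The biconnected case.} First I would handle the case where $G$ is biconnected (2-connected). The key structural fact is that a biconnected graph admits, for any chosen edge $\{s,t\}$, an \emph{$st$-numbering} (bipolar ordering) $v_1 = s, v_2, \dots, v_m = t$ of its vertices such that every $v_k$ with $1 < k < m$ has a neighbor with smaller index and a neighbor with larger index; this is computable in linear time. The crucial consequence is that for every prefix $\{v_1, \dots, v_k\}$ and its complementary suffix $\{v_{k+1}, \dots, v_m\}$, \emph{both} induced subgraphs are connected. So I would let agent~1 compute a PMMS partition of $V$ (the best partition into two connected pieces with respect to $u_1$), fix an $st$-numbering whose endpoints $s,t$ lie in the two different pieces of that partition (possible because the two pieces are adjacent via some edge, which we take as $\{s,t\}$), and then sweep the index $k$ from $1$ to $m-1$, considering the prefix/suffix cuts $(P_k, S_k)$. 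As $k$ increases by one, agent~1's utility of the prefix increases from $0$ to $u_1(V)$ in steps of size $u_1(v_{k+1})$. Standard discrete intermediate-value reasoning: either some cut gives agent~1 between $u_1(V)/4$ and $3u_1(V)/4$ on both sides — in which case both connected halves are worth at least $\frac14 u_1(V) \ge \frac12 \PMMS_1(V) \ge \ldots$ wait, I need to be more careful: $\PMMS_1(V) \le \frac12 u_1(V)$, and I want a bundle of value $\ge \frac34 \PMMS_1(V)$, i.e.\ $\ge \frac38 u_1(V)$ would suffice but is not obviously what the sweep gives. Let me instead argue: either there is a cut where agent~1 values both sides at least $\frac14 u_1(V)$ — then agent~1 picks the side she prefers and gets $\ge \frac12 u_1(V) \ge \PMMS_1(V)$, so we are done even exactly — or there is a ``jump'' item $v_{k+1}$ with $u_1(v_{k+1}) > \frac12 u_1(V)$; in that case I isolate $v_{k+1}$ and argue directly (agent~1 gets the rest or the heavy item; one can show one of these is $\ge \frac34 \PMMS_1(V)$, using $\PMMS_1 \le \frac12(u_1(V) - u_1(v_{k+1})) + \ldots$). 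The heavy-item case is where the $3/4$ factor (rather than $1$) becomes necessary and needs a careful case analysis on where agent~1's PMMS partition puts the heavy item. After agent~1 is guaranteed $\ge \frac34 \PMMS_1(V)$ on whichever side she takes, agent~2 is the ``chooser'': agent~2 takes whichever of the two connected pieces is worth more to her, hence $\ge \frac12 u_2(V) \ge \PMMS_2(V)$. So in the biconnected case we actually get agent~1 at $\frac34$ and agent~2 exactly satisfied.

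\medskip

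\emph{Step 2: Reducing a general connected $G$ to a block.} Now let $G$ be an arbitrary connected graph. Consider its block-cut tree: the blocks (maximal biconnected subgraphs, including bridge-edges as trivial blocks) and cut vertices, arranged in a tree $T$. The idea is that a connected allocation into two bundles $(A_1, A_2)$ with $A_1, A_2$ nonempty must ``cross'' at exactly one place in the block structure: there is a single block $B$ (or a cut vertex) such that all of $A_1$ outside $B$ hangs off one side and all of $A_2$ outside $B$ hangs off the other side. Concretely, I would root the analysis at a well-chosen block: for each cut vertex $c$, removing $c$ splits $G$ into components; I want to find the block $B$ (a node of $T$) such that in every direction away from $B$, the subgraph attached is small enough — a ``centroid-like'' or rather ``balanced'' block with respect to both $u_1$ and $u_2$. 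Actually the cleaner route, matching the paper's phrasing ``reduced to some valuable block,'' is: show there exists a block $B^\star$ such that for the two agents, an optimal-enough connected split can be taken with the ``cut'' happening inside $B^\star$, the branches hanging off $B^\star$ being assigned wholesale to one agent or the other. Then within $B^\star$, augment each vertex $v$ that is a cut vertex of $G$ by the total utility (for each agent) of the branch hanging off $v$, run the biconnected algorithm of Step~1 on $B^\star$ with these augmented weights and an $st$-numbering respecting the attachment structure, and glue the branches back on according to which side of the $B^\star$-cut their attachment point landed. The connectivity of the glued bundles follows because prefixes/suffixes in the $st$-numbering are connected and each branch is attached at a single cut vertex that lies wholly on one side. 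The fairness follows because the augmented-weight PMMS on $B^\star$ dominates the true PMMS on $G$ (any PMMS partition of $G$ restricts to a partition of $B^\star$ of no larger value once branches are collapsed — need to check the inequality goes the right way).

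\medskip

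\emph{The main obstacle.} I expect two genuine difficulties. The first, more technical one, is the heavy-item case in Step~1: when a single vertex carries more than half of agent~1's value, the intermediate-value sweep breaks, and one must argue by hand that isolating the heavy item (or combining it with a small connected neighborhood) still yields agent~1 at least $\frac34$ of her PMMS — this requires splitting on whether agent~1's own PMMS partition has the heavy item alone, and bounding $\PMMS_1$ in each subcase; this is exactly where the constant degrades from $1$ to $3/4$. The second, structurally deeper difficulty is making Step~2 precise: proving that there is a \emph{single} block where the cut must happen, that it can be found in polynomial time (walking the block-cut tree while tracking accumulated utility in each direction, stopping at the first block where no single outgoing branch exceeds the relevant threshold — a 2-parameter analogue of finding a tree centroid), and that collapsing branches into augmented vertex weights preserves the relevant PMMS inequalities in both directions. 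I would spend most of the write-up making the block-cut-tree reduction airtight, since the biconnected subroutine, modulo the heavy-item bookkeeping, is fairly mechanical.
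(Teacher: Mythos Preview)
Your high-level architecture (biconnected case via a bipolar ordering, general case via the block--cut tree with branches merged into a single block) matches the paper. But Step~1 contains two genuine gaps that would prevent the argument from going through as written.

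\textbf{Computing the PMMS partition is itself hard.} You begin by having agent~1 compute her PMMS partition of $V$ and then use its two pieces to choose the endpoints $s,t$ of the bipolar ordering. But computing a PMMS partition is NP-hard (it encodes \textsc{Partition} already on a complete graph), so this step destroys the polynomial-time guarantee. The paper avoids this entirely: it never computes a PMMS partition. Instead it chooses as the two endpoints of the bipolar ordering the vertices $v^*$ and $w^*$ of \emph{highest and second-highest} utility. This is the key algorithmic trick you are missing.

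\textbf{The threshold / cut-and-choose logic is inconsistent.} In cut-and-choose, agent~1 is the cutter and agent~2 the chooser, so agent~1 receives the \emph{worse} side of the cut and you need \emph{both} sides to be worth $\ge \tfrac34\PMMS_1(V)$, i.e.\ (roughly) $\ge \tfrac38 u_1(V)$. Your dichotomy ``both sides $\ge \tfrac14 u_1(V)$, or a jump item of value $> \tfrac12 u_1(V)$'' is too weak for the $\tfrac34$ bound, and the sentence ``agent~1 picks the side she prefers and gets $\ge \tfrac12 u_1(V)$'' contradicts the cut-and-choose protocol. The paper's argument works because, with $v^*$ and $w^*$ pinned to the two ends of the bipolar ordering, the crossover vertex $v_j$ lies strictly in the interior and hence has at most the \emph{third}-highest value $u(z^*)$. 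One then shows either $\min\{u(X),u(Y)\}\ge \tfrac38 u(V)$, or $u(z^*)>\tfrac14 u(V)$ in which case any two of $\{v^*,w^*,z^*\}$ land in the same bundle of any connected $2$-partition, yielding $\PMMS(V)\le u(V)-2u(z^*)$ and hence $\tfrac12(u(V)-u(z^*))\ge \tfrac34\PMMS(V)$. Your heavy-item case analysis (``split on whether agent~1's own PMMS partition has the heavy item alone'') cannot be carried out without that PMMS partition, and the $\tfrac34$ constant will not fall out of the $\tfrac12$-jump dichotomy alone.

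Your Step~2 is in the right spirit but vaguer than what is needed. The paper does not search for a centroid; it maximises $\min\{u(X(B,c)),u(Y(B,c))\}$ over all (block, cut-vertex) pairs, and a short lemma shows that if the smaller side is $X(B^*,c^*)$ then this cut is already an exact PMMS partition. Only in the remaining case is the merge-into-$B^*$ step needed, and then one must re-verify each of the three output profiles of the biconnected subroutine against the original graph --- in particular, the ``single heavy vertex'' output may now correspond to an entire branch $Y(B^*,c)$, and the $\PMMS(V)\le u(V)-2u(Z^*)$ bound must be re-argued for merged vertices. Your proposal flags this direction-of-inequality issue but does not indicate how to resolve it.
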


To prove Theorem~\ref{thm:halfPMMS-polynomial:twoagents}, it suffices to show that a $(3/4)$-PMMS connected allocation among two agents with identical utility can be computed in polynomial time.

\begin{restatable}{theorem}{PMMStwoidentical}\label{thm:halfPMMS-polynomial:twoagents:identical}
    For a connected graph $G$ and two agents with identical additive utility function $u$, a $(3/4)$-PMMS (equivalently, $(3/4)$-MMS) connected allocation can be found in polynomial time.
\end{restatable}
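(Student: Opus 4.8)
The plan is to reduce the problem over an arbitrary connected graph $G$ to the case where $G$ is biconnected (2-connected), and to handle the biconnected case via a \emph{bipolar numbering} (st-numbering) of the vertices, which gives us enough flexibility to sweep a ``moving window'' that realizes a $(3/4)$-approximation of the maximin share. Concretely, I would proceed as follows.

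\textbf{Step 1: The biconnected case.} Suppose first that $G$ is biconnected. Pick any edge $\{s,t\}$ (or any two vertices) and compute a bipolar numbering $v_1 = s, v_2, \dots, v_m = t$, i.e.\ an ordering such that every prefix $\{v_1,\dots,v_k\}$ and every suffix $\{v_k,\dots,v_m\}$ induces a connected subgraph of $G$. Such a numbering always exists for biconnected graphs and is computable in linear time (Lempel--Even--Cederbaum). Now one agent's task is to find a threshold $k$ so that $X = \{v_1,\dots,v_k\}$ and $Y = \{v_{k+1},\dots,v_m\}$ are both connected and both have utility at least $\tfrac34 \cdot \tfrac12 u(V) \ge \tfrac34 \PMMS(V)$. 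The standard cut argument: scan $k$ from $1$ upward; let $k$ be the first index where $u(\{v_1,\dots,v_k\}) \ge \tfrac14 u(V)$. Before adding $v_k$ the prefix had utility $< \tfrac14 u(V)$, so $u(v_k) > \tfrac14 u(V)$ is the only obstruction; if $u(v_k) \le \tfrac34 u(V)$ then both prefix and suffix clear $\tfrac14 u(V)$ — actually I want to be slightly more careful and aim for $\tfrac14 u(V)$ on the smaller side so that the larger side has $\ge \tfrac34 u(V) \ge$ its PMMS anyway. If instead some single vertex $v^\star$ has $u(v^\star) > \tfrac34 u(V)$, then giving $v^\star$ to the agent who values it and the (connected, since prefix/suffix) complement to the other already gives the second agent everything else; one then checks directly that $v^\star$'s owner has $\ge \PMMS$ (in fact $\ge u(V)/2$) and the complement owner has utility $\ge \tfrac14 u(V) \ge \tfrac12 \cdot \tfrac12 u(V)$... here I will need the sharper bookkeeping, but since PMMS $\le \tfrac12 u(V)$ and we only promised $3/4$ of it, the target per agent is just $\tfrac38 u(V)$, and the window scan comfortably delivers that unless a vertex is ``heavy'', which is the easy separate case. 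With identical utilities both agents are satisfied simultaneously, giving the theorem in the biconnected case.

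\textbf{Step 2: Reduction via the block tree.} For general connected $G$, consider its block-cut tree: the maximal biconnected subgraphs (blocks) and cut vertices form a tree $T$. In a connected allocation to two agents, the two bundles $A_1, A_2$ meet along (at most) one cut vertex path; more precisely there is a single ``boundary'' — either an edge inside one block, or a cut vertex — separating the two connected bundles in $T$. The idea is that there is a distinguished block $B$ (a ``valuable block'') such that essentially all the cutting action can be confined to $B$: pendant subtrees hanging off $B$ are assigned wholesale to one side or the other so as to balance utilities, and then within $B$ we run the biconnected-case window scan (Step 1) on a weighted version of $B$ where the weight of each cut vertex of $B$ is augmented by the utility of the pendant mass attached through it (and attached pendants follow that cut vertex's side). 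I would locate $B$ by walking the block tree and greedily moving toward the heavier side until no single block-boundary move improves balance — formalizing that at the stopping point the block $B$ carries a $\ge$-constant fraction so the residual imbalance from the indivisible pendant masses is absorbed by the slack between $3/4 \cdot \PMMS$ and $\PMMS$. Care is needed because a pendant mass could itself be huge (bigger than $\tfrac14 u(V)$); but a pendant subtree is itself connected, and if one pendant subtree already has utility in $[\tfrac14 u(V), \tfrac34 u(V)]$ we simply cut there, while if it exceeds $\tfrac34 u(V)$ we recurse into that subtree — this recursion terminates and is polynomial.

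\textbf{Main obstacle.} The delicate part is Step 2: cleanly specifying the ``valuable block'', proving that after absorbing the pendant subtrees into augmented cut-vertex weights the within-block window scan still certifies $\tfrac34 \PMMS$ for \emph{both} bundles, and making sure the recursion into an over-heavy pendant subtree is correctly set up (the recursion instance inherits a connected graph but the ``total utility'' it must split is the global $u(V)$, not the subtree's — so the $3/4$ slack has to be tracked globally). The biconnected window-scan of Step 1 is essentially routine once the bipolar numbering is in hand; it is the interface between the tree-structure reduction and that scan — ensuring the augmented weights never force an infeasible cut — that requires the real argument.
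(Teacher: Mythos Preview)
Your high-level architecture matches the paper's: handle the biconnected case via a bipolar (st-)numbering and then reduce the general case through the block--cut tree by merging pendant subtrees into their cut vertices. That skeleton is right. But Step~1 has a genuine gap, and it propagates into Step~2.

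\textbf{The gap in Step 1.} You set the per-agent target at $\tfrac38 u(V)$ and claim the window scan ``comfortably delivers that unless a vertex is heavy''. This is false. Take a triangle whose three vertices each have utility $1$. Then $u(V)=3$ and your target is $\tfrac38\cdot 3 = \tfrac98 > 1$; no two-part split of a triangle can give both sides at least $\tfrac98$. Yet $\PMMS(V)=1$, so the $(3/4)$-PMMS requirement is only $\tfrac34$, and the obvious $1$-vs-$2$ split meets it. The point is that when several vertices each carry more than $\tfrac14 u(V)$, the target $\tfrac38 u(V)$ is simply unattainable, and you must instead argue that $\PMMS(V)$ itself is small. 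The paper does this explicitly (its case~(iii)): it places the two heaviest vertices $v^*,w^*$ at the \emph{endpoints} of the bipolar ordering, so the ``crossing'' vertex $v_j$ at which the scan stops has utility at most that of the third-heaviest vertex $z^*$; then $\min\{u(X),u(Y)\}\ge \tfrac12 u(V\setminus\{z^*\})$, and separately $\PMMS(V)\le u(V)-2u(z^*)$ because any connected bipartition puts two of $\{v^*,w^*,z^*\}$ on one side. Combining these gives $(3/4)$-PMMS. Your proposal neither pins $v^*,w^*$ at the endpoints nor provides any upper bound on $\PMMS(V)$ in the heavy case, so the argument does not close. (Your single ``heavy vertex'' threshold of $\tfrac34 u(V)$ is also off; the paper's threshold for peeling off a single vertex is $\tfrac38 u(V)$, and in that case one argues directly that the complement dominates one side of every PMMS partition.)

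\textbf{Consequences for Step 2.} Your merging idea is exactly what the paper does: contract each pendant subtree $Y(B^*,c)$ into its cut vertex $c$, obtaining a biconnected graph $G'$ with augmented weights, and run the biconnected routine there. But now the merged cut vertices can themselves be the heavy items, so the same three-heavy-vertices issue reappears in $G'$, and you again need the case~(iii) style argument (this time using that any connected bipartition of the \emph{original} $G$ forces two of the heavy merged pieces into one bundle). Your fallback---cut off a pendant subtree whose utility lies in $[\tfrac14 u(V),\tfrac34 u(V)]$---is too weak (minimum $\tfrac14 u(V)$ is only $\tfrac12$-PMMS), and your recursion into a pendant of mass $>\tfrac34 u(V)$ would need its own correctness argument. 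The paper sidesteps all of this with a short structural lemma: if the best block--cut pair $(B^*,c^*)$ already satisfies $u(X(B^*,c^*))\le u(Y(B^*,c^*))$, that partition is in fact \emph{exactly} PMMS (no approximation needed); only in the remaining case does one merge and invoke the biconnected routine.
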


We first observe that whenever $G$ is \emph{biconnected}, a $(3/4)$-PMMS connected allocation can be computed in polynomial time. Formally, we say that a vertex $v$ is a \emph{cut vertex} of a connected graph $G$ if removing $v$ makes $G$ disconnected. A graph $G$ is \emph{biconnected} if $G$ does not have a cut vertex. A \emph{bipolar ordering} of a graph $G$ is an enumeration $v_1, v_2, \dots, v_k$ of $G$'s vertices such that the subgraph induced by any initial or final segment of the enumeration is connected in $G$, i.e., both $\{v_1,\ldots,v_{j}\}$ and $\{v_{j +1},\ldots,v_{k}\}$ are connected in $G$ for every $j \in [k]$. It is known that $G$ admits a bipolar ordering between any pair of vertices if $G$ is biconnected.

\begin{lemma}[\cite{Lempel1967}]\label{lem:bipolar}
For a biconnected graph $G$ and any pair $v,w$ of vertices in $G$, a bipolar ordering over $G$ that starts with $v$ and ends with $w$ exists and can be computed in polynomial time. 
\end{lemma}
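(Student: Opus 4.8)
The plan is to reduce the statement to the classical \emph{st-numbering} problem on a biconnected graph, which is exactly what~\cite{Lempel1967} supplies. First I would record the folklore reformulation: an enumeration $v_1,\dots,v_k$ of $V(G)$ is a bipolar ordering if and only if every \emph{internal} vertex $v_i$ (with $1<i<k$) has a neighbor in $\{v_1,\dots,v_{i-1}\}$ and a neighbor in $\{v_{i+1},\dots,v_k\}$. One direction builds connectivity of each prefix (resp.\ suffix) by induction on its size, attaching the new vertex through its lower (resp.\ higher) neighbor; the other observes that a connected proper prefix extended by one vertex stays connected only if that vertex attaches to it, and symmetrically for suffixes. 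With this equivalence, ``bipolar ordering from $v$ to $w$'' is precisely ``st-numbering with $s=v$ and $t=w$''.

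Second, I would reduce to the case where $\{v,w\}$ is an edge. If it is not, add it, obtaining $G'$; since for every vertex $x$ we have $G'-x \supseteq G-x$ and the latter is connected, $G'$ is again biconnected. Any bipolar ordering of $G'$ from $v$ to $w$ is already one of $G$: the extra edge $\{v,w\}$ lies inside a prefix or a suffix only when that set contains both $v=v_1$ and $w=v_k$, i.e.\ only for the whole vertex set, so connectivity of every \emph{proper} prefix and suffix is unaffected by deleting it. The cases $|V(G)|\le 2$ ($K_1$ and $K_2$) are handled directly.

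Third, the heart of the argument: for a biconnected $G$ with $|V(G)|\ge 3$ and a designated edge $\{s,t\}$, construct an st-numbering in polynomial time. I would use an open ear decomposition $P_0,P_1,\dots,P_\ell$ of $G$ with $P_0$ a cycle through the edge $\{s,t\}$ (such a decomposition exists by Whitney's theorem, since every edge of a $2$-connected graph on $\ge 3$ vertices lies on a cycle, and it is computable in polynomial time from a DFS). Writing $G_i=P_0\cup\cdots\cup P_i$, I maintain an st-numbering $\sigma_i$ of $G_i$ with endpoints $s,t$. For $G_0$, the cycle $P_0$ is the edge $st$ together with an internal $s$-$t$ path $Q$ of length $\ge 2$ (by simplicity), and listing $s$, then the internal vertices of $Q$ in path order, then $t$, is an st-numbering. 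In the inductive step, given ear $P_i$ with endpoints $a,b\in G_{i-1}$, say $\sigma_{i-1}(a)<\sigma_{i-1}(b)$, I insert the internal vertices $x_1,\dots,x_m$ of $P_i$ consecutively, in path order from $a$, immediately after $a$: old vertices keep their relative order and hence their lower/higher neighbors; $x_1$ gets lower neighbor $a$ and higher neighbor $x_2$ (or $b$ if $m=1$); each interior $x_j$ gets $x_{j-1}$ and $x_{j+1}$; and $x_m$ gets lower neighbor $x_{m-1}$ and higher neighbor $b$, which still follows $a$ and hence $x_m$. After processing $P_\ell$ we reach $G_\ell=G$, and the whole process is clearly polynomial.

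I expect the main obstacle to be bookkeeping rather than depth: pinning down the two reformulation steps (bipolar ordering versus st-numbering, and the ``add the edge $\{v,w\}$'' reduction) and checking that the insertion rule never destroys the lower/higher-neighbor property of an already-placed vertex, together with making sure $P_0$ can be routed through the prescribed edge and that the small-order corner cases are covered. Since existence and polynomial-time (indeed linear-time) computability of st-numberings is exactly~\cite{Lempel1967}, the above is a reconstruction of that classical result rather than new work, and in the paper it suffices to cite it.
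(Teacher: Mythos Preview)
The paper does not prove this lemma at all; it simply states it and cites~\cite{Lempel1967} as the source, treating existence and polynomial-time computability of st-numberings as a black box. Your proposal is a correct reconstruction of that classical result---the reformulation of bipolar orderings as st-numberings, the edge-addition reduction to make $\{v,w\}$ adjacent, and the ear-decomposition induction are all standard and sound---and you yourself note in the final sentence that a citation suffices, which is exactly what the paper does.
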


Consider Algorithm~\ref{alg:PMMS:two:biconnected}. For a biconnected graph $G$, the algorithm either chooses a valuable item and its complement to construct a $(3/4)$-PMMS connected allocation, or computes a bipolar ordering over the graph and applies the discrete cut-and-choose algorithm.\footnote{This algorithm is used to construct an EF1 connected allocation of $G$ possessing a bipolar ordering; see Proposition 3.4 of \cite{BiloCaFl22}.} Note that Line~\ref{line:bipolar} of the algorithm is well-defined since $u(\{v_1,\ldots,v_k\}) \geq u(\emptyset)$. 

\begin{algorithm}                      
\caption{Algorithm for constructing a $(3/4)$-PMMS connected allocation for two agents with identical additive utilities $u$ when $G=(V,E)$ is biconnected}  
\label{alg:PMMS:two:biconnected}                          
\begin{algorithmic}[1]  
\STATE Choose vertices $v^*,w^*$ with highest, $u(v^*)$, and second highest, $u(w^*)$, utility in $G$. Set $Y=\{v^*\}$ and $X=V \setminus Y$. \label{line:valuablevertex}
\IF{$u(Y) < \frac{3}{8} u(V)$}
\STATE Compute a bipolar ordering $(v_1,v_2,\ldots,v_{k})$ over $G$ with $v_1=v^*$ and $v_k=w^*$.\label{line:computebipolar}
\STATE Find smallest $j$ with $u(\{v_1,\ldots,v_j\}) \geq u(\{v_{j+1},\ldots,v_k\})$.\label{line:bipolar}
\IF{$u(\{v_1,\ldots,v_{j-1}\}) \geq u(\{v_{j+1},\ldots,v_k\})$}
\STATE Set $Y=\{v_1,\ldots,v_{j-1}\}$ and $X=\{v_{j},\ldots,v_k\}$. 
\ELSE 
\STATE Set $Y=\{v_1,\ldots,v_{j}\}$ and $X=\{v_{j+1},\ldots,v_k\}$. 
\ENDIF 
\ENDIF
\RETURN $(X,Y)$
\end{algorithmic}
\end{algorithm}

\begin{lemma}\label{lem:PMMS:two:biconnected}
Let $G$ be a biconnected graph. Suppose that there are two agents with identical additive utility function $u$. Then, Algorithm~\ref{alg:PMMS:two:biconnected} computes in polynomial time, a $(3/4)$-PMMS connected allocation $(X,Y)$ where 
\begin{itemize}
\item[$(${\rm i}$)$] $Y$ consists of a single vertex and $u(Y) \geq \frac{3}{8}u(V)$, 
\item[$(${\rm ii}$)$] $\min \{ u(X), u(Y) \} \geq \frac{3}{8} u(V)$, or 
\item[$(${\rm iii}$)$] $\min \{ u(X), u(Y) \} \geq \frac{1}{2}u(V \setminus \{z^*\})$, where $z^* \in \argmax \{\, u(v)\mid v \in V \setminus \{v^*,w^*\} \,\}$.
\end{itemize}
\end{lemma}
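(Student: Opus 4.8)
The plan is to establish three things about the allocation $(X,Y)$ returned by Algorithm~\ref{alg:PMMS:two:biconnected}: that it is connected, that it is produced in polynomial time, and that it is $(3/4)$-PMMS because it always falls into one of the cases (i)--(iii). Polynomial time is immediate: choosing $v^*$ and $w^*$ and locating the cut index $j$ take linear time, and a bipolar ordering from $v^*$ to $w^*$ is computed in polynomial time by \cref{lem:bipolar}. Connectivity is also easy: if the algorithm does not enter the \textbf{if}-block (i.e.\ $u(v^*)\ge\frac{3}{8}u(V)$) it returns $(V\setminus\{v^*\},\{v^*\})$, where $\{v^*\}$ is connected and $V\setminus\{v^*\}$ is connected precisely because $G$ is biconnected; otherwise $X$ and $Y$ are a final and an initial segment of a bipolar ordering, hence connected by definition. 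The degenerate cases $|V|\le 2$ and $u(V)=0$ are trivial and fall under case (i), so I would assume $|V|\ge 3$ and $u(V)>0$ hereafter.

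The first step is to show the output satisfies one of (i)--(iii). If $u(v^*)\ge\frac{3}{8}u(V)$ the output $(V\setminus\{v^*\},\{v^*\})$ is exactly case (i). Otherwise the algorithm enters the bipolar branch, and here I would exploit the minimality of $j$. Writing $Y$ for the initial-segment bundle and $X$ for the final-segment bundle actually returned, the two sub-branches give respectively $u(X)-u(v_j)\le u(Y)<u(X)$ and $u(X)\le u(Y)<u(X)+u(v_j)$; combined with $u(X)+u(Y)=u(V)$, both yield $\min\{u(X),u(Y)\}\ge\frac{1}{2}(u(V)-u(v_j))$. It then remains to check that $v_j\notin\{v^*,w^*\}$, so that $u(v_j)\le u(z^*)$ and the bound becomes $\min\{u(X),u(Y)\}\ge\frac{1}{2}u(V\setminus\{z^*\})$, i.e.\ case (iii). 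This reduces to $2\le j\le k-1$: if $j=1$ the defining inequality at index $1$ reads $u(v^*)\ge u(V)-u(v^*)$, contradicting $u(v^*)<\frac{3}{8}u(V)$; and if $j=k$, minimality makes the inequality fail at index $k-1$, giving $u(v^*)\le u(\{v_1,\dots,v_{k-1}\})<u(v_k)=u(w^*)\le u(v^*)$, a contradiction.

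The second step is to show each of (i)--(iii) implies $(3/4)$-PMMS. For two agents with identical utility, $\PMMS_i(A_i\cup A_j)=\PMMS(V)$, so I must show $\min\{u(X),u(Y)\}\ge\frac{3}{4}\PMMS(V)$. I would use three upper bounds: $(a)$ $\PMMS(V)\le\frac{1}{2}u(V)$, valid for any additive utility; $(b)$ $\PMMS(V)\le u(V)-u(v^*)$, since the most valuable item lies in one part of every partition, leaving at most $u(V)-u(v^*)$ for the other; and the key bound $(c)$ $\PMMS(V)\le\frac{2}{3}(u(V)-u(z^*))$. Granting these: case (ii) gives $\min\ge\frac{3}{8}u(V)=\frac{3}{4}\cdot\frac{1}{2}u(V)\ge\frac{3}{4}\PMMS(V)$; case (i) gives $u(Y)=u(v^*)\ge\frac{3}{8}u(V)\ge\frac{3}{4}\PMMS(V)$ and $u(X)=u(V)-u(v^*)\ge\PMMS(V)$ by $(b)$; and case (iii) gives $\min\ge\frac{1}{2}u(V\setminus\{z^*\})=\frac{3}{4}\cdot\frac{2}{3}(u(V)-u(z^*))\ge\frac{3}{4}\PMMS(V)$ by $(c)$.

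The main obstacle is the key bound $(c)$, which I would prove by a pigeonhole argument on the three most valuable items $v^*,w^*,z^*$. If $u(z^*)\le\frac{1}{4}u(V)$, then $\frac{2}{3}(u(V)-u(z^*))\ge\frac{2}{3}\cdot\frac{3}{4}u(V)=\frac{1}{2}u(V)\ge\PMMS(V)$ by $(a)$. Otherwise $u(v^*),u(w^*),u(z^*)$ all exceed $\frac{1}{4}u(V)$; take a connected partition $(P,Q)$ attaining $\PMMS(V)=u(P)\le u(Q)$. At least two of $v^*,w^*,z^*$ lie in the same part, and that part cannot be $P$, since two items each worth more than $\frac{1}{4}u(V)$ would force $u(P)>\frac{1}{2}u(V)\ge u(Q)$. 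Hence $Q$ contains at least two of them, so $u(Q)\ge 2u(z^*)$ and $u(P)=u(V)-u(Q)\le u(V)-2u(z^*)$, which is at most $\frac{2}{3}(u(V)-u(z^*))$ because this inequality is equivalent to $u(z^*)\ge\frac{1}{4}u(V)$. A secondary nuisance is the bookkeeping around the boundary indices $j\in\{1,k\}$ and around tiny graphs, which I would dispatch up front as indicated.
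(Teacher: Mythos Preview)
Your proof is correct and follows essentially the same approach as the paper: both arguments use the bipolar ordering to obtain $\min\{u(X),u(Y)\}\ge\tfrac12\bigl(u(V)-u(v_j)\bigr)$, show $2\le j\le k-1$ so that $u(v_j)\le u(z^*)$, and then bound $\PMMS(V)$ via the pigeonhole on the three heaviest vertices. The one organizational difference is that the paper splits the bipolar branch into cases (ii) and (iii) according to whether $u(z^*)\le\tfrac14u(V)$, whereas you observe that case (iii) always holds there and absorb the same two-way split into your bound~$(c)$; this is a tidy repackaging but not a genuinely different route.
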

\begin{proof}
By \cref{lem:bipolar}, Algorithm~\ref{alg:PMMS:two:biconnected} clearly runs in polynomial time.
To see that it returns a $(3/4)$-PMMS connected allocation, let $(X,Y)$ be the resulting allocation of Algorithm~\ref{alg:PMMS:two:biconnected}. 

Suppose that there is a vertex $v$ with $u(\{v\}) \geq \frac{3}{8}u(V)$. Then by the choice of $v^*$ in Line~\ref{line:valuablevertex}, $u(\{v^*\}) \geq \frac{3}{8}u(V)$ and therefore the resulting allocation $(X,Y)$ satisfies $(${\rm i}$)$. This also means that $u(\{v^*\}) \geq \frac{3}{4}(\frac{1}{2}u(V)) \geq \frac{3}{4} \PMMS(V)$. Further, in any PMMS partition $(X',Y')$, at least one part is contained in $X=V \setminus \{v^*\}$, meaning that $u(X)$ is at least $\PMMS(V)$. Thus, $(X,Y)$ is $(3/4)$-PMMS. Since $G$ is biconnected, $X$ is connected in $G$ and hence $(X,Y)$ is a connected allocation. 

Suppose that $u(\{v\}) < \frac{3}{8}u(V)$ for each vertex $v \in V$. In this case, the algorithm computes a bipolar ordering in Line~\ref{line:computebipolar}. By definition, $(X,Y)$ is a connected allocation. 
Observe that there is at most one vertex $z^* \in V \setminus \{v^*,w^*\}$ with utility $u(z^*) > \frac{1}{4}u(V)$ since otherwise, there would be four vertices with utility greater than $\frac{1}{4}u(V)$. 

If there is no such vertex, by the fact that $\min \{u(X),u(Y)\} \geq \max \{u(X),u(Y)\} - u(v_j)$ and $u(V \setminus \{v_j\}) \geq \frac{3}{4}u(V)$, we obtain $\min \{u(X),u(Y)\} \geq \frac{1}{2} u(V \setminus \{v_j\})  \geq \frac{1}{2}(\frac{3}{4}u(V))$ and $(${\rm ii}$)$ holds, which means that $\min \{u(X),u(Y)\} \geq \frac{3}{4}(\frac{1}{2}u(V)) \geq \frac{3}{4} \PMMS(V)$. 

Now, consider the case when there is a vertex $z^* \in V \setminus \{v^*,w^*\}$ with utility $u(z^*) > \frac{1}{4}u(V)$. Then, the vertex $v_j$ computed in Line~\ref{line:bipolar} of the algorithm has utility at most $u(z^*)$. Indeed, since $u(v^*) < \frac{3}{8}u(V) < \frac{1}{2}u(V)$, we have $u(v_1) < u(V \setminus \{v_1\})$ and thus $j>1$. Also, since $u(w^*) < \frac{1}{2}u(V)$, we have $u(\{v_1,\ldots,v_{k-1}\}) > u(\{v_k\})$ and thus $j<k$. Thus, we obtain $\min \{u(X),u(Y)\} \geq \frac{1}{2} u(V \setminus \{v_j\}) \geq \frac{1}{2} u(V \setminus \{z^*\})$ and $(${\rm iii}$)$ holds. Further, we claim that $\PMMS(V) \leq u(V)- 2u(z^*)$. Indeed, in any PMMS partition, one bundle contains at least two vertices of $\{v^*,w^*,z^*\}$, each of which has utility at least $ u(z^*)$, implying that the other bundle has utility at most $u(V)- 2u(z^*)$. Thus, 
\begin{align*}
\min \{u(X),u(Y)\} 
&\geq \frac{1}{2} u(V \setminus \{z^*\})\\
&\geq \frac{1}{2} (u(V) -2 u(z^*)) + \frac{1}{2} u(z^*)\\
&\geq \frac{1}{2} \PMMS(V) + \frac{1}{8} u(V)\\
&\geq \frac{1}{2} \PMMS(V) + \frac{1}{4} \PMMS(V)\\
&\geq \frac{3}{4} \PMMS(V).  
\end{align*}
We conclude that $(X,Y)$ is a $(3/4)$-PMMS connected allocation. 
\end{proof}

For the case when $G$ is not necessarily biconnected, we cannot use the technique described in the proof of Lemma~\ref{lem:PMMS:two:biconnected}, since removing a single vertex of $G$ may disconnect the graph or $G$ may not admit a bipolar ordering (e.g., $G$ may be a star). Nevertheless, we can exploit the acyclic structure of the so-called \emph{block decomposition} and obtain sufficient conditions under which a PMMS connected allocation exists. Formally, a \emph{decomposition} of a graph $G=(V,E)$ is a family $\{F_1,F_2,\ldots,F_t\}$ of edge-disjoint subgraphs of $G$ such that $\bigcup^t_{i=1}E(F_i)=E$ where $E(F_i)$ is the set of edges of $F_i$. A \emph{block} of $G$ is a maximal biconnected subgraph of $G$. 

For a connected graph $G$, consider a bipartite graph $B(G)$ with bipartition $(\mathcal{B},S)$, where $\mathcal{B}$ is the set of blocks of $G$ and $S$ the set of cut vertices of $G$; a block $B$ and a cut vertex $v$ are adjacent in $B(G)$ if and only if $v$ belongs to $B$. Since every cycle of a graph is included in some block, the graph $B(G)$ is a tree:

\begin{lemma}[Prop.~5.3 in \cite{Bondy:2008}]\label{lem:blocktree}
The set of blocks forms a decomposition of a connected graph $G$ and the graph $B(G)$ is a tree. 
\end{lemma}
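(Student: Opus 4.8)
The plan is to prove the two assertions separately, both resting on a single \emph{gluing observation}: if $H_1$ and $H_2$ are biconnected subgraphs of $G$ sharing at least two vertices, then $H_1\cup H_2$ is biconnected. This holds because $H_1\cup H_2$ is connected and, for every vertex $x$, the graphs $H_1-x$ and $H_2-x$ are each connected (deleting one vertex from a biconnected graph on at least two vertices leaves it connected) and still share a vertex, so $(H_1\cup H_2)-x=(H_1-x)\cup(H_2-x)$ is connected; hence $H_1\cup H_2$ has no cut vertex.

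For the decomposition claim I would first note that every edge, viewed as a $K_2$, is biconnected and hence lies inside some maximal biconnected subgraph, i.e.\ a block, so the blocks cover $E(G)$; and if two distinct blocks shared an edge they would share both of its endpoints, so by the gluing observation their union would be biconnected and strictly larger than each of them (neither contains the other, both being maximal and distinct), contradicting maximality. Thus the blocks are pairwise edge-disjoint and form a decomposition of $G$. The same argument shows that two distinct blocks meet in at most one vertex and that a vertex lying in two distinct blocks $B,B'$ must be a cut vertex of $G$ --- if such a $v$ were not a cut vertex, one takes a shortest path $Q$ in $G-v$ from $V(B)\setminus\{v\}$ to $V(B')\setminus\{v\}$ and checks that $B\cup B'\cup Q$ is biconnected, again contradicting maximality.

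For the second assertion I would first argue that $B(G)$ is connected: given edges $e\in B$ and $e'\in B'$, a path in $G$ from an endpoint of $e$ to an endpoint of $e'$ passes through a chain of blocks whose consecutive members share a cut vertex, and this is exactly a walk from $B$ to $B'$ in $B(G)$. For acyclicity, I would suppose $B(G)$ had a cycle; since $B(G)$ is bipartite it reads $B_1,v_1,B_2,\dots,B_k,v_k,B_1$ with $k\geq 2$, the $B_i$ distinct blocks and the $v_i$ distinct cut vertices with $v_i\in V(B_i)\cap V(B_{i+1})$ (indices mod $k$). A preliminary step is to check that every $B_i$ has at least three vertices: were $B_i$ the single edge $v_{i-1}v_i$, the other blocks on the cycle already supply a $v_{i-1}$--$v_i$ path avoiding that edge, placing it on a cycle $C$, and then $B_i\cup C$ is biconnected and strictly larger than $B_i$ --- impossible.

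It then remains to show that $H:=B_1\cup\cdots\cup B_k$ is biconnected: it is connected, and for any vertex $x$ the graph $H-x$ is still connected, because inside each (biconnected, $\geq 3$-vertex) block one can detour around any single vertex while the cyclic arrangement of the blocks lets one reroute around whichever block contains $x$. Since $H$ strictly contains $B_1$ (it includes edges of $B_2$ not in $B_1$), this contradicts the maximality of $B_1$; so $B(G)$ is acyclic, and being connected it is a tree. The main obstacle will be exactly this last step --- making rigorous that a cycle in $B(G)$ forces the union of its blocks to be biconnected. I expect the cleanest route to be an ear-type argument rather than the vertex-deletion sketch above: extract a cyclic spine from the closed walk that traverses the ring of blocks, then attach the remaining part of each block as an ear, each attachment preserving biconnectivity.
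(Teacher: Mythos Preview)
The paper does not give its own proof of this lemma: it is quoted verbatim as Proposition~5.3 from \cite{Bondy:2008} and used as a black box, so there is no argument in the paper to compare against. Your outline is a standard and essentially correct route to this classical fact; the one place you should be a bit more careful, and you already flag it, is the acyclicity step. When you take a cycle $B_1,v_1,\dots,B_k,v_k,B_1$ in $B(G)$, you implicitly use that it is \emph{induced} (so that each $v_i$ lies only in $B_i$ and $B_{i+1}$ among the $B_j$'s) --- otherwise your vertex-deletion argument for the connectivity of $H-x$ can run into trouble when $x$ happens to lie in several of the $B_j$'s. Passing to a shortest (hence chordless) cycle in the bipartite graph $B(G)$ fixes this cleanly; your proposed ear-decomposition alternative would also work.
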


Now, observe that since $B(G)$ is a tree, removing an edge between an arbitrary cut vertex $c$ and its adjacent block $B$ in $B(G)$ results in two connected components $X'$ and $Y'$, where one part $X'$ contains $B$ and another part $Y'$ contains $c$. This partition of the block graph induces a partition $(X(B,c),Y(B,c))$ of the original vertices in $G$, where $X(B,c)$ is the set of vertices in $G$ that belong to $X'$ except for $c$ and $Y(B,c)$ is the set of vertices in $G$ that belong to $Y'$. 
See Figure \ref{fig:partition} for an illustration. 

\begin{figure*}
\centering
\begin{minipage}[b]{\columnwidth}
\centering
\begin{tikzpicture}[scale=0.6, transform shape,every node/.style={minimum size=6mm, inner sep=1pt}]

\draw [rounded corners,fill=gray!20] (2,0)--(4,0)--(3,-1)--cycle;

\draw[dotted,thick] (6,0.3) ellipse (2.6cm and 2cm);
\node at (6.3,1) {\Large $Y(B,c)$};

\draw[rotate=-45,dotted,thick] (1.4,0.6) ellipse (2.3cm and 1.6cm);
\node at (1,-0.45) {\Large $X(B,c)$};

\node at (2.9,-0.35) {\Large $B$};

\node[draw, circle,fill=white](1) at (0,0) {};
\node[draw, circle,fill=white](2) at (2,0) {};
\node[draw, circle,fill=white](3) at (4,0) {\Large $c$};
\node[draw, circle,fill=white](4) at (6,0) {};
\node[draw, circle,fill=white](5) at (8,0) {};

\node[draw, circle,fill=white](6) at (1,1) {};
\node[draw, circle,fill=white](7) at (4,1) {};
\node[draw, circle,fill=white](8) at (5,1) {};

\node[draw, circle,fill=white](9) at (1,-1) {};
\node[draw, circle,fill=white](10) at (3,-1) {};
\node[draw, circle,fill=white](11) at (5,-1) {};
\node[draw, circle,fill=white](12) at (7,-1) {};
\node[draw, circle,fill=white](13) at (2,-2) {};

\draw[-, >=latex,thick] (1)--(2) (2)--(3) (3)--(4) (4)--(5) (1)--(6) (2)--(6) (7)--(3) (3)--(8) (7)--(8);

\draw[-, >=latex,thick] (2)--(10) (10)--(3) (9)--(10) (9)--(13) (13)--(10) (3)--(11) (11)--(4) (12)--(4);
\end{tikzpicture}
\caption{A partition $(X(B,c),Y(B,c))$ of a graph.}
\label{fig:partition}
\end{minipage}\hspace{\columnsep}%
\begin{minipage}[b]{\columnwidth}
\centering
\begin{tikzpicture}[scale=0.6, transform shape,every node/.style={minimum size=6mm, inner sep=1pt}]

\draw [rounded corners,fill=gray!20] (2,0)--(4,0)--(3,-1)--cycle;
\node at (3,-0.35) {\Large $B^*$};

\draw[dotted,thick] (6,0.3) ellipse (2.6cm and 2cm);
\node at (6.3,1) {\large $Y(B^*,c)$};

\draw [rounded corners,dotted,thick] (-0.8,-0.3)--(2.8,-0.3)--(1,1.8)--cycle;
\node at (1.1,0.3) {\large $Y(B^*,a)$};

\draw [rounded corners,dotted,thick] (0.3,-0.7)--(3.7,-0.7)--(2,-2.7)--cycle;
\node at (2,-1.3) {\large $Y(B^*,b)$};

\node[draw, circle,fill=white](1) at (0,0) {};
\node[draw, circle,fill=white](2) at (2,0) {\Large $a$};
\node[draw, circle,fill=white](3) at (4,0) {\Large $c$};
\node[draw, circle,fill=white](4) at (6,0) {};
\node[draw, circle,fill=white](5) at (8,0) {};

\node[draw, circle,fill=white](6) at (1,1) {};
\node[draw, circle,fill=white](7) at (4,1) {};
\node[draw, circle,fill=white](8) at (5,1) {};

\node[draw, circle,fill=white](9) at (1,-1) {};
\node[draw, circle,fill=white](10) at (3,-1) {\Large $b$};
\node[draw, circle,fill=white](11) at (5,-1) {};
\node[draw, circle,fill=white](12) at (7,-1) {};
\node[draw, circle,fill=white](13) at (2,-2) {};

\draw[-, >=latex,thick] (1)--(2) (2)--(3) (3)--(4) (4)--(5) (1)--(6) (2)--(6) (7)--(3) (3)--(8) (7)--(8);

\draw[-, >=latex,thick] (2)--(10) (10)--(3) (9)--(10) (9)--(13) (13)--(10) (3)--(11) (11)--(4) (12)--(4);

\node[ultra thick] at (9.3,0) {\Huge $\Rightarrow$};

\begin{scope}[xshift=230,yshift=0]
    \node[draw, circle,fill=white](2') at (2,0) {\Large $a$};
    \node[draw, circle,fill=white](3') at (4,0) {\Large $c$};
    \node[draw, circle,fill=white](10') at (3,-1) {\Large $b$};
    \draw[-, >=latex,thick]  (2')--(3') (3')--(10') (2')--(10');
\node at (3,-0.35) {\Large $G'$};
		        \end{scope}
\end{tikzpicture}
\caption{Merge operation applied to a block $B^*$.}
\label{fig:merge}
\Description{Shows the merge operation applied to the block B* = \{a, b, c\} in the graph G = (\{0, 1, 2, 3, 4, 5, 6, 7, 8, 9, a, b, c\}, \{\{0, 1\}, \{0, a\}, \{1, a\}, \{2, 3\}, \{2, b\}, \{3, b\}, \{a, b\}, \{b, c\}, \{a, c\}, \{c, 4\}, \{4, 5\}, \{c, 5\}, \{c, 6\}, \{6, 7\}, \{c, 7\}, \{7, 8\}, \{7, 9\}\}). Highlights Y(B*, a) = \{0, 1, a\}, Y(B*, b) = \{2, 3, b\} and Y(B*, c) = \{4, 5, 6, 7, 8, 9, c\}. Finally, shows the resulting graph G' = (\{a, b, c\}, \{\{a, b\}, \{b, c\}, \{c, a\}\} from the merge operator.}
\end{minipage}%
\end{figure*}

It turns out that if no ``local" improvement is possible, an allocation of form $(X(B,c),Y(B,c))$ is in fact a PMMS connected allocation with respect to identical utility function $u$.

\begin{restatable}{lemma}{lemPMMStwoidentical}\label{lem:PMMS:two}
Let $G$ be a connected graph. Suppose that there are two agents with identical additive utility function $u$. Let $(B,c)$ be a pair of a block $B$ and a cut vertex $c$ included in $B$ such that $u(X(B,c)) \leq u(Y(B,c)) $ and $u(X(B,c)) \geq \min \{u(X(B',c)),u(Y(B',c))\}$ for every block $B'$ containing $c$. Then, the partition $(X(B,c)),Y(B,c))$ is a PMMS connected allocation. 
\end{restatable}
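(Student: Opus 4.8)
The plan is to show that the partition $A = (X(B,c), Y(B,c))$ — call the two bundles $X$ and $Y$ with $u(X) \le u(Y)$ — is PMMS for the (single, identical) utility function $u$, i.e.\ that $\min\{u(X),u(Y)\} = u(X) \ge \PMMS(V)$, which suffices since both agents have utility $u$ and the allocation is connected by construction (both $X$ and $Y$ are connected in $G$ because $B(G)$ is a tree and $X(B,c)\cup\{c\}$, $Y(B,c)$ each correspond to connected subtrees, and within $G$ these are connected subgraphs). So the whole task reduces to the lower bound $u(X) \ge \PMMS(V)$.

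Fix an optimal PMMS partition $(P,Q)$ of $V$, so $\PMMS(V) = \min\{u(P),u(Q)\}$ and $P, Q$ are each connected in $G$. The key structural fact I would use is that the cut vertex $c$ lies in exactly one of $P,Q$, say $c \in P$. Now consider how $P$ and $Q$ interact with the block decomposition around $c$. Each block $B'$ containing $c$ gives a partition $(X(B',c), Y(B',c))$ of $V$; the bundles $Y(B',c)$ for the various blocks $B'\ni c$ are pairwise disjoint and their union with $\{c\}$ is all of $V$. Since $Q$ is connected and does not contain $c$, the set $Q$ must be contained entirely inside a single $Y(B',c)$ for some block $B' \ni c$ — a connected subgraph avoiding $c$ cannot straddle two different "branches" hanging off $c$ in the block tree. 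Hence $u(Q) \le u(Y(B',c))$ for that block $B'$, and symmetrically, since $P \supseteq \{c\}$ is connected, $P$ contains $X(B',c)$ is not quite right — rather $P \subseteq V \setminus (Q) $, so $u(P) = u(V) - u(Q) \ge u(V) - u(Y(B',c)) = u(X(B',c))$ (using that $X(B',c), Y(B',c), \{c\}$ partition $V$ minus nothing: $u(V) = u(X(B',c)) + u(Y(B',c)) + u(c)$, so actually $u(P) \ge u(X(B',c)) + u(c) \ge u(X(B',c))$).

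Putting this together: $\PMMS(V) = \min\{u(P),u(Q)\} \le \max$ of the two ways to bound it. I want $\PMMS(V) \le u(X)$. From the above, either $u(Q) \le u(Y(B',c))$, or we bound $\PMMS(V) \le u(P)$... the cleaner route: since $Q \subseteq Y(B',c)$ and $Q$ together with $\{c\}$ must leave $X(B',c)$ on the $P$-side, we get $\PMMS(V) = \min\{u(P), u(Q)\} \le \min\{u(X(B',c)) + u(c),\, u(Y(B',c))\}$, hence $\PMMS(V) \le \min\{u(X(B',c)), u(Y(B',c))\} + u(c)$? That stray $u(c)$ is the thing to be careful about. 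The honest bound is $\PMMS(V) \le \min\{u(P),u(Q)\}$ where $u(Q)\le u(Y(B',c))$ and $u(P) \ge u(X(B',c))$; so if $u(Q) \le u(X(B',c))$ we are almost done via the hypothesis $u(X) \ge \min\{u(X(B',c)), u(Y(B',c))\}$, and if instead $u(Q) > u(X(B',c))$ then since $\PMMS(V)\le u(Q)$ doesn't directly help, I instead use $\PMMS(V) \le u(P)$ only when $u(P) \le u(Q)$. I'd organize it as: $\PMMS(V) \le \min\{u(P), u(Q)\} \le \min\{\,u(V) - u(Y(B',c)) + u(c)\;(\text{no}),\; u(Y(B',c))\,\}$ — cleanest is $\min\{u(P),u(Q)\} \le \min\{\,u(V)-u(Q),\,u(Q)\,\}$ with $u(Q) \in [0, u(Y(B',c))]$, and the function $q \mapsto \min\{u(V)-q, q\}$ is maximized at $q = u(V)/2$; combined with $q \le u(Y(B',c))$ we get $\PMMS(V) \le \min\{u(V) - u(Y(B',c)), u(Y(B',c))\}$ when $u(Y(B',c)) \le u(V)/2$, and $\PMMS(V) \le u(V)/2$ otherwise. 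In the first case $u(V) - u(Y(B',c)) = u(X(B',c)) + u(c) \ge u(X(B',c))$ and $u(Y(B',c)) \ge u(X(B',c))$ here, so $\PMMS(V) \le$ something $\ge \min\{u(X(B',c)),u(Y(B',c))\} \le u(X)$ by hypothesis — but I need the inequality to go the right way, so I should phrase it as $\PMMS(V) \le \max\{\ldots\}$ carefully. The main obstacle, as this rambling shows, is exactly this bookkeeping: pinning down which of $P,Q$ is the minorant, handling the extra weight of $c$, and invoking the maximality hypothesis on $u(X(B,c))$ (that among all blocks $B'\ni c$, the chosen $B$ maximizes $\min\{u(X(B',c)),u(Y(B',c))\}$, while also $u(X(B,c)) \le u(Y(B,c))$) to conclude $u(X) = u(X(B,c)) \ge \min\{u(X(B',c)),u(Y(B',c))\} \ge \PMMS(V)$. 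I would do the case analysis on whether $c\in P$ or $c \in Q$ and, within that, on the sign of $u(Y(B',c)) - u(V)/2$, and in each case reduce to the hypothesis. A final subtlety to double-check is the degenerate case where $X(B,c)$ or $Q$ is empty, and the case where $P$ or $Q$ equals all of $V$ minus a vertex (so $\PMMS(V)$ could be $0$), which is trivial.
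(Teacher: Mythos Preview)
Your approach is the same as the paper's: take an optimal PMMS partition $(P,Q)$, observe that the side not containing $c$ lies inside a single branch at $c$, and try to bound $\PMMS(V)$ via $\min\{u(X(B',c)),u(Y(B',c))\}$ and the maximality hypothesis. Before anything else, fix two definitional slips. The pairwise-disjoint ``branches'' at $c$ are the sets $X(B',c)$, not the $Y(B',c)$'s --- each $Y(B',c)$ already contains $c$ together with all other branches. And $(X(B',c),Y(B',c))$ is a genuine bipartition of $V$ with $c\in Y(B',c)$, so $u(V)=u(X(B',c))+u(Y(B',c))$ with no extra $u(c)$ term.

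The substantive problem is the step you keep circling and never close: getting from ``$Q\subseteq X(B',c)$'' to ``$\PMMS(V)\le \min\{u(X(B',c)),u(Y(B',c))\}$''. You correctly derive $u(Q)\le u(X(B',c))$ and $u(P)\ge u(Y(B',c))$, but when $u(X(B',c))>u(Y(B',c))$ (the $B'$-branch carries more than half the total weight) neither inequality bounds $\min\{u(P),u(Q)\}$ from above by $u(Y(B',c))$: the optimal cut can sit strictly inside that heavy branch and beat every partition of the form $(X(\cdot,c),Y(\cdot,c))$. The paper's proof makes the same unjustified jump (``we have $u(X(B,c))\ge u(Y)$''), and in fact the lemma as stated is false. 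On the path $b_1\text{--}b_2\text{--}c\text{--}a_1\text{--}a_2\text{--}a_3$ with utilities $3,2,0,2,3,5$, take $B=\{b_2,c\}$: then $X(B,c)=\{b_1,b_2\}$ has value $5$, $Y(B,c)$ has value $10$, and for both blocks $B'$ at $c$ one gets $\min\{u(X(B',c)),u(Y(B',c))\}=5$, so the hypotheses hold; yet $\PMMS(V)=7$ via $(\{b_1,b_2,c,a_1\},\{a_2,a_3\})$. No amount of bookkeeping will close your case analysis --- the hypothesis needs to be global over all cut vertices, not just over blocks at the fixed $c$, for the intended conclusion to have a chance.
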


\begin{proof}
Consider any PMMS partition $(X,Y)$ with $X \neq X(B,c)$ and $Y \neq Y(B,c)$. Without loss of generality, assume that $Y \cap Y(B,c) \neq \emptyset$. Since any pair of the blocks adjacent to $c$ is connected through $c$ only, either $X$ or $Y$ includes at least one of $X(B,c)$ and $Y(B,c)$. If $Y(B,c) \subsetneq Y$, then we have $X \subseteq X(B,c)$ and thus $u(X) \leq u(X(B,c)) \leq u(Y(B,c)) \leq u(Y)$, which implies that $(X(B,c),Y(B,c))$ is a PMMS partition. Hence, assume $X(B,c) \subsetneq X$. Then we should have $c \in X$, meaning that $X$ contains all the blocks containing $c$ except for $B$ and $Y$ intersects with at most one block adjacent to $c$. By the assumption that 
$$
u(X(B,c)) \leq u(Y(B,c))~\mbox{and}~u(X(B,c)) \geq \min \{u(X(B',c)),u(Y(B',c))\}
$$
for every block $B'$ containing $c$, we have $u(X(B,c)) \geq u(Y)$. Thus, together with the fact that $u(X) \geq u(X(B,c))$, we obtain 
\[
u(X(B,c)) \geq \min \{u(X),u(Y)\}=u(Y),
\]
implying that $(X(B,c),Y(B,c))$ is a PMMS partition. 
\end{proof}


Now, we are ready to prove Theorem~\ref{thm:halfPMMS-polynomial:twoagents:identical}. 

\begin{proof}
Let $G$ be a connected graph and $u$ be an identical additive utility function. 
If $G$ is biconnected, by Lemma~\ref{lem:PMMS:two:biconnected}, a $(3/4)$-PMMS connected allocation can be found in polynomial time.
Suppose that $G$ is not biconnected. Let $(B^*,c^*)$ be a pair of block and cut vertex where $(X(B^*,c^*),Y(B^*,c^*))$ maximizes the minimum of $u(X(B,c))$ and $u(Y(B,c))$ over all pairs of cut vertices $c$ and its adjacent block $B$. If we have $u(X(B^*,c^*)) \leq u(Y(B^*,c^*))$, then $(X(B^*,c^*),Y(B^*,c^*))$ is a PMMS partition by Lemma~\ref{lem:PMMS:two} and by the choice of $(B^*,c^*)$. 

Thus, assume that $u(X(B^*,c^*)) \geq u(Y(B^*,c^*))$. Note that for each cut vertex $c$ of $B^*$ with $c \neq c^*$, we have $Y(B^*,c*) \subseteq X(B^*,c)$, and further 
\[
\min \{u(X(B^*,c)),u(Y(B^*,c)) \}  \leq \min \{u(X(B^*,c)^*),u(Y(B^*,c^*))\} 
\]
which implies $u(Y(B^*,c)) \leq u(Y(B^*,c^*))$. 
For each cut vertex $c$ adjacent to $B^*$, merge $Y(B^*,c)$ into $c$, namely, we replace the vertices in $Y(B^*,c)$ with a single vertex $c$ and there is an edge between $c$ and another vertex $w$ in the new graph whenever $w$ is adjacent to $c$ in the original graph (see \Cref{fig:merge} for an illustration). Let $G'$ denote the resulting graph. It is easy to see that $G'$ is biconnected. Moreover, we define a new additive utility function $u'$ on $G'$ where the utility of an agent for each vertex $v$ of $G'$ is equal to her utility for all the vertices of $G$ that are merged into $v$. Note that for any connected allocation $(X,Y)$ of $G'$, it is not difficult to see that the corresponding allocation $(X^*,Y^*)$ of the original vertices in $G$ is connected. 

Apply Algorithm~\ref{alg:PMMS:two:biconnected} for $G'$ and $u'$. We will show that the resulting allocation corresponds to a $(3/4)$-PMMS connected allocation of the original vertices. Let $(X,Y)$ denote the output of Algorithm~\ref{alg:PMMS:two:biconnected} for $G'$ and $u'$. 
Let us denote by $X^*$ (resp. $Y^*$) the set of vertices of $G$ merged into some vertex in $X$ (resp. $Y$). As discussed before, $(X^*,Y^*)$ is a connected allocation in $G$. 

Suppose that $(X,Y)$ satisfies the condition~$(${\rm i}$)$~of Lemma~\ref{lem:PMMS:two:biconnected}. If $Y^*$ consists of a single vertex in $G$, by a similar argument to the proof of Lemma~\ref{lem:PMMS:two:biconnected}, it can be easily verified that $(X^*,Y^*)$ satisfies $(3/4)$-PMMS. Indeed, since the total utility under $G'$ is the same as that under $G$, we have $u(Y) \geq \frac{3}{8}u(V)$, which means $u(Y) \geq \frac{3}{4}(\frac{1}{2}u(V)) \geq \frac{3}{4} \PMMS(V)$. Further, in any PMMS partition $(X',Y')$ of $G$,  one part $X'$ is contained in $X^*$, meaning that $u(X') \leq u(X^*)=u(V \setminus Y )$ and the utility under $X^*$ is at least $\PMMS(V)$. Thus, $(X,Y)$ is $(3/4)$-PMMS. 

If $Y^*=Y(B^*,c)$ for some cut vertex $c$ that belongs to $B^*$, then by the choice of $(B^*,c^*)$ and $v^*$ in Algorithm~\ref{alg:PMMS:two:biconnected}, we have that $u(Y^*)=u(Y(B^*,c))=u(Y(B^*,c^*))$,  
and by additivity, $u(X^*)=u(X(B^*,c^*))$. Thus, we have $u(Y^*) \geq \frac{3}{8}u(V)$ and $u(Y^*) \leq u(X^*)$, where the latter inequality holds as a result of the assumption that $u(X(B^*,c^*)) \geq u(Y(B^*,c^*))$. Thus, we have $u(X^*) \geq \frac{1}{2}u(V) \geq \PMMS(V)$ and $u(Y^*) \geq \frac{3}{4}\PMMS(V)$, concluding that $(X^*,Y^*)$ satisfies $(3/4)$-PMMS. 

Suppose that $(X,Y)$ satisfies the condition~$(${\rm ii}$)$~of Lemma~\ref{lem:PMMS:two:biconnected}. Then since $u'(X)=u(X^*)$ and $u'(Y)=u(Y^*)$, we have that 
$$
\min \{ u(X^*), u(Y^*) \} \geq \frac{3}{8} u(V).
$$
Similar to the above, $(X^*,Y^*)$ satisfies $(3/4)$-PMMS. 

Suppose that $(X,Y)$ satisfies the condition~$(${\rm iii}$)$~of Lemma~\ref{lem:PMMS:two:biconnected}. 
Let $v^*,w^*,z^*$ be the vertices with highest, second highest, third highest utility in $G'$. For any connected allocation $(X',Y')$ of the original graph $G$, one bundle, say $X'$, must contain at least two original vertices that correspond to $v^*,w^*$, or $z^*$. Let $V^*$, $W^*$, and $Z^*$ denote the set of original vertices in $G$ merged into $v^*,w^*$, and $z^*$, respectively. If $X'$ contains two vertices from $v^*,w^*,z^*$ and $Y'$ contains one vertex from $v^*,w^*,z^*$, then at least two sets of the merged vertices $V^*$, $W^*$, and $Z^*$ are fully contained in $X'$. Similarly, if $X'$ contains all the three vertices $v^*,w^*,z^*$, then $Y'$ is fully contained in one of the sets $V^*$, $W^*$, and $Z^*$. Thus, at least one bundle of $X'$ and $Y'$ has utility less than $u(V) - 2 u(Z^*)$, which means that $\PMMS(V) \leq u(V) -2 u(Z^*)$. Thus, 
\begin{align*}
\min \{u(X^*),u(Y^*)\} 
&\geq \frac{1}{2} u(V \setminus Z^*)\\
&\geq \frac{1}{2} (u(V) -2 u(Z^*)) + \frac{1}{2} u(Z^*)\\
&\geq \frac{1}{2} \PMMS(V) + \frac{1}{2} u(Z^*)\\
&\geq \frac{1}{2} \PMMS(V) + \frac{1}{8} u(V)\\
&\geq \frac{1}{2} \PMMS(V) + \frac{1}{4} \PMMS(V)\\
&\geq \frac{3}{4} \PMMS(V).  
\end{align*}
We conclude that $(X^*,Y^*)$ is a $(3/4)$-PMMS connected allocation. 

It remains to analyze the running time of the above procedure. Note that one can compute the block graph $B(G)$ of a connected graph $G$ in polynomial time~\cite{Hopcroft1973}. Further, there are at most $|V|^2$ pairs of blocks and cut vertices. Hence, we can compute $(B^*,c^*)$ in polynomial time. The remaining procedure clearly runs in polynomial time by Lemma~\ref{lem:PMMS:two:biconnected}. 
\end{proof}

\begin{proof}[Proof of Theorem \ref{thm:halfPMMS-polynomial:twoagents}]
Suppose that there are two agents with utility functions $u_1,u_2$. By Theorem~\ref{thm:halfPMMS-polynomial:twoagents:identical}, we can compute a $(3/4)$-PMMS connected allocation $(X,Y)$ with respect to $u_1$ in polynomial time. Then, the allocation that assigns to agent $2$ a preferred bundle among $X$ and $Y$ and the rest to agent $1$ is $(3/4)$-PMMS. 
\end{proof}

\section{The case of three agents}\label{sec:paths}

In this section, we prove that a PMMS connected allocation always exists and can be found in polynomial time for three agents with additive utilities on a path. While limited, this case is interesting as the existence of PMMS remains open for three agents with additive utilities in the standard setting.


\begin{restatable}{theorem}{theoremThreeAgentsPath}\label{thr:three-agents-path}
For a path and three agents with additive utilities, a PMMS connected allocation always exists and can be found in polynomial time.
\end{restatable}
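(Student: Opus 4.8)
The plan is to view the path $P = (v_1, \dots, v_m)$ and parametrize connected allocations to three agents by the two "cut points": a connected allocation of a path to three agents corresponds (up to relabeling which agent gets which of the three intervals) to choosing $0 \le p \le q \le m$ and handing out the prefix $L = \{v_1,\dots,v_p\}$, the middle block $M = \{v_{p+1},\dots,v_q\}$, and the suffix $R = \{v_{q+1},\dots,v_m\}$. Crucially, under this structure only \emph{two} adjacency relations can occur: the left agent is adjacent to the middle agent, and the middle agent is adjacent to the right agent (the left and right agents are never adjacent unless one interval is empty, which is handled by the empty-bundle clause of the PMMS definition). So PMMS here reduces to four inequalities: the left and middle agents are each PMMS-happy with respect to $L \cup M$, and the middle and right agents are each PMMS-happy with respect to $M \cup R$. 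Since a path on two intervals is itself a path, $\PMMS_i$ of an agent over $L\cup M$ or $M \cup R$ is just the best "cut it into two connected halves and take the worse half" value, which is easy to compute.

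First I would set up the search over the two cut points $(p,q)$ and, for each, over the $3! = 6$ (or fewer, accounting for empty bundles) assignments of agents to intervals, and argue that checking PMMS for a fixed candidate is polynomial — each $\PMMS_i(\cdot)$ over an interval is computed by scanning the $O(m)$ possible internal cuts. That gives a polynomial-time \emph{verification} procedure and an $O(m^2 \cdot n^{O(1)})$ exhaustive search; the real content is proving a PMMS allocation \emph{exists}. For existence I would use a potential/exchange argument: start from any connected allocation and repeatedly apply local improvements — e.g. among the two "edges" of the configuration, if some agent is PMMS-unhappy with its neighbor over their combined interval, re-cut that combined interval at the appropriate point and/or swap the two agents' bundles, which is exactly a cut-and-choose fix on a sub-path. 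One shows such a fix strictly increases a suitable monotone quantity (for instance the lexicographically sorted utility vector, i.e. a leximin-type potential, or the number of satisfied agent–neighbor pairs with ties broken by total utility), so the process terminates at a PMMS allocation. One has to be careful that a leximin-optimal connected allocation need \emph{not} be PMMS in general (Proposition~\ref{prop:leximin}), so the potential must be chosen so that PMMS-violations always admit a strictly-improving move; a natural choice is: maximize the minimum utility, then among those consider the induced sub-instance on whichever pair still violates PMMS and recurse — effectively a careful case analysis on which of the three agents is the "poor" one and whether moving the boundary helps the poor agent without hurting the other below its threshold.

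The main obstacle I expect is the existence proof, specifically showing that local cut-and-choose repairs cannot cycle and must reach a globally PMMS configuration — i.e. finding the right monotone potential. The subtlety is that repairing the left edge (re-cutting $L \cup M$) can break the right edge by changing $M$, so one cannot treat the two edges independently; the argument will need to fix the middle agent's bundle cleverly (perhaps always giving the middle agent a \emph{minimal} connected bundle that keeps both its neighbors satisfied, then optimizing the two end agents) or do a direct case analysis on the structure of the path. I would organize it as: (1) reduce to the interval parametrization and the four inequalities; (2) handle the degenerate cases where some bundle is empty via the empty-bundle clause; (3) among all connected allocations maximizing the total utility of the worst-off agent, pick one, and show by exchange arguments that a PMMS-violation there contradicts optimality or can be removed without decreasing the potential; (4) bound the number of steps and the per-step cost to get the polynomial-time claim. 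The bulk of the work — and the part most likely to require a delicate, instance-specific case split — is step (3).
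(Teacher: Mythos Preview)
Your framing of the problem --- parametrize by two cuts, reduce PMMS to four inequalities, and verify a candidate in polynomial time --- is correct and matches the paper's setup. The polynomial-time claim is indeed immediate once existence is shown, since there are only $O(m^2)$ cut pairs and $6$ assignments to try.

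The gap is entirely in your existence argument. You propose a local-improvement scheme driven by a potential, but you yourself identify the obstruction: re-cutting $L\cup M$ changes $M$ and can destroy the $M\cup R$ constraint, so the two repairs interfere. You then suggest a leximin-type potential, correctly note (via Proposition~\ref{prop:leximin}) that the leximin optimum need not be PMMS, and fall back to ``a direct case analysis'' without saying what the cases are or why they close. As written, step~(3) is a statement of intent, not a proof: you have not exhibited a potential that strictly increases under every PMMS-repair, and there is no obvious one, precisely because the two boundaries are coupled through the middle bundle.

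The paper does not use local improvement at all. Instead it computes, for each agent $i$, a PMMS \emph{partition} $(c_1^i,c_2^i)$ of the whole path (this is polynomial on trees by the SMMS machinery of \cref{lem:SMMS-additive-trees-polynomial}). After relabeling so that $c_1^1\le c_1^2\le c_1^3$, there are only $15$ possible relative orderings of the six cut points, grouped into seven cases. In each case one of the three agents' own PMMS partitions --- or a one-parameter modification of it obtained by sliding a single cut inside the middle bundle --- is shown to be a PMMS allocation. The engine is a small set of structural lemmas about PMMS on sub-paths: \cref{lem:sub-and-super-paths} (if you enlarge the left piece and shrink the right piece of a PMMS split, the left piece stays good), and \cref{lem:x-y-three-agents} (if both cuts of one agent's partition lie to the right of another's, then either an outer bundle is good, or the middle bundle is strictly best and a controlled re-cut of the middle produces two good bundles). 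These lemmas are what replace your missing potential: they give, in each case, a \emph{specific} candidate partition and a direct certificate that enough bundles are good for the right agents (\cref{lem:3-agents-sufficient-conditions}).

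So the missing idea is: rather than searching or improving, anchor the construction on the agents' own PMMS partitions and argue case-by-case on the order of their cut points; the hard case (your ``repairing one edge breaks the other'') is handled by \cref{lem:x-y-three-agents}, which shows that when the middle bundle dominates for both outer agents, one can shrink it at one end to make two adjacent bundles simultaneously good.
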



The proof of \cref{thr:three-agents-path} relies on case analysis. Before proceeding with the detailed proof, we briefly outline the method used. First a PMMS partition is computed for each of the three agents. This can be done in polynomial time by \cref{lem:SMMS-additive-trees-polynomial}. Each of the three PMMS partitions can be represented by a pair of edges, namely the edge separating the left and middle bundle and the edge separating the middle and right bundle. \Cref{fig:path-partitions-examples} shows three examples of the distribution of these edges. In each example, the three bars indicate the PMMS partitions of the three agents.


In most cases, such as in \cref{fig:path-partitions-examples-a}, a PMMS allocation can be found by simply allocating the bundles in one of the PMMS partitions to the agents in a certain way. For example, in \cref{fig:path-partitions-examples-a} a PMMS allocation can be obtained by giving the left and right bundles in the topmost PMMS partition (yellow) to the two bottom agents (blue and red) in any way and the middle bundle to the top agent (yellow). This is guaranteed to be a PMMS allocation, as the middle bundle of the topmost PMMS partition is contained within the middle bundle of each of the two other agents, and thus each of the two outer bundles in the topmost PMMS partition satisfy PMMS for the two bottom agents.




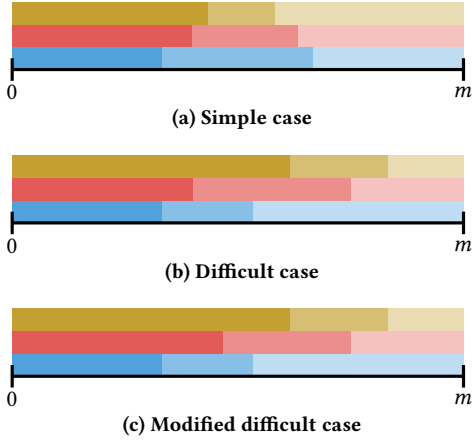
\begin{figure}[t]
    \begin{subfigure}{\columnwidth}
        \centering
        \begin{tikzpicture}
            \filldraw[fill=sblue!80, draw=none] (0, 0) rectangle (2, 0.3);
            \filldraw[fill=sblue!55, draw=none] (2, 0) rectangle (4, 0.3);
            \filldraw[fill=sblue!30, draw=none] (4, 0) rectangle (6, 0.3);
            
            \filldraw[fill=sred!80, draw=none] (0, 0.3) rectangle (2.4, 0.6);
            \filldraw[fill=sred!55, draw=none] (2.4, 0.3) rectangle (3.8, 0.6);
            \filldraw[fill=sred!30, draw=none] (3.8, 0.3) rectangle (6, 0.6);
    
            \filldraw[fill=syellow!80, draw=none] (0, 0.6) rectangle (2.6, 0.9);
            \filldraw[fill=syellow!55, draw=none] (2.6, 0.6) rectangle (3.5, 0.9);
            \filldraw[fill=syellow!30, draw=none] (3.5, 0.6) rectangle (6, 0.9);
            
            \draw[very thick] 
                (0, 0) -- (6, 0)
                (0, -0.15) -- (0, 0.15)
                (6, -0.15) -- (6, 0.15)
                ;
    
            \node at (0, -0.3) {$0$};
            \node at (6, -0.3) {$m$};
        \end{tikzpicture}
        \vspace*{-0.6\baselineskip}
        \caption{Simple case}
        \label{fig:path-partitions-examples-a}
    \end{subfigure}\\%
    \vspace{0.7\baselineskip}%
    \begin{subfigure}{\columnwidth}
        \centering
        \begin{tikzpicture}
            \filldraw[fill=sblue!80, draw=none] (0, 0) rectangle (2, 0.3);
            \filldraw[fill=sblue!55, draw=none] (2, 0) rectangle (3.2, 0.3);
            \filldraw[fill=sblue!30, draw=none] (3.2, 0) rectangle (6, 0.3);
            
            \filldraw[fill=sred!80, draw=none] (0, 0.3) rectangle (2.4, 0.6);
            \filldraw[fill=sred!55, draw=none] (2.4, 0.3) rectangle (4.5, 0.6);
            \filldraw[fill=sred!30, draw=none] (4.5, 0.3) rectangle (6, 0.6);
        
            \filldraw[fill=syellow!80, draw=none] (0, 0.6) rectangle (3.7, 0.9);
            \filldraw[fill=syellow!55, draw=none] (3.7, 0.6) rectangle (5, 0.9);
            \filldraw[fill=syellow!30, draw=none] (5, 0.6) rectangle (6, 0.9);
            
            \draw[very thick] 
                (0, 0) -- (6, 0)
                (0, -0.15) -- (0, 0.15)
                (6, -0.15) -- (6, 0.15)
                ;
        
            \node at (0, -0.3) {$0$};
            \node at (6, -0.3) {$m$};
            
        \end{tikzpicture}
        \vspace*{-0.6\baselineskip}
        \caption{Difficult case}
        \label{fig:path-partitions-examples-b}
    \end{subfigure}\\%
    \vspace{0.7\baselineskip}%
    \begin{subfigure}{\columnwidth}
        \centering
        \begin{tikzpicture}
            \filldraw[fill=sblue!80, draw=none] (0, 0) rectangle (2, 0.3);
            \filldraw[fill=sblue!55, draw=none] (2, 0) rectangle (3.2, 0.3);
            \filldraw[fill=sblue!30, draw=none] (3.2, 0) rectangle (6, 0.3);
            
            \filldraw[fill=sred!80, draw=none] (0, 0.3) rectangle (2.8, 0.6);
            \filldraw[fill=sred!55, draw=none] (2.8, 0.3) rectangle (4.5, 0.6);
            \filldraw[fill=sred!30, draw=none] (4.5, 0.3) rectangle (6, 0.6);
        
            \filldraw[fill=syellow!80, draw=none] (0, 0.6) rectangle (3.7, 0.9);
            \filldraw[fill=syellow!55, draw=none] (3.7, 0.6) rectangle (5, 0.9);
            \filldraw[fill=syellow!30, draw=none] (5, 0.6) rectangle (6, 0.9);
            
            \draw[very thick] 
                (0, 0) -- (6, 0)
                (0, -0.15) -- (0, 0.15)
                (6, -0.15) -- (6, 0.15)
                ;
        
            \node at (0, -0.3) {$0$};
            \node at (6, -0.3) {$m$};
        \end{tikzpicture}
        \vspace*{-0.6\baselineskip}
        \caption{Modified difficult case}
        \label{fig:path-partitions-examples-c}
    \end{subfigure}
    
    \caption{Three possible layouts of PMMS partitions for 3 agents on a path.}
    \label{fig:path-partitions-examples}
    \Description{Contains three sub-figures. Each sub-figure shows the path and a PMMS partition of each of the three agents. Let a1 (a2) be the edge connecting the first (second) and second (third) bundle in the PMMS partition of agent 1. Let b1 and b2, and c1 and c2 be the same for agent 2 and 3, respectively. Sub-figure (a) shows a case in which a1 < b1 < c1 < c2 < b2 < a2. Sub-figure (b) shows a case in which a1 < b1 < a2 < c1 < b2 < c2. This case is repeated in sub-figure (c). The only difference to sub-figure (b) is that b1 has been changed to b1' with b1 < b1' < a2.}
\end{figure}

In some cases, such as in \cref{fig:path-partitions-examples-b}, it may be that in every one of the three PMMS partitions, there is only a single bundle that satisfies PMMS for each of the two other agents. If this is the same bundle for both agents, another method must be used to construct a PMMS allocation. In these few cases, it can be shown that one of the PMMS partitions is such that the middle bundle has the highest utility for both of the other agents. For example, in the case in \cref{fig:path-partitions-examples-b}, this would be the case for the middle bundle of the middle agent. If this was not the case for the bottom (top) agent, then since the left (right) bundle covers more than an entire bundle in the agent's own PMMS partition, we can show that the left (right) bundle in the middle PMMS partition satisfies PMMS for the agent.

If both other agents think the middle bundle has the most utility, we can show that we can select one of the two and let her move the border between two of the bundles in the partition in a way that shrinks the middle bundle. Since the middle bundle had most utility, we can guarantee that the new partition will satisfy PMMS for the agent if she receives either of the modified bundles. \Cref{fig:path-partitions-examples-c} shows an example of such a modification to the case in \cref{fig:path-partitions-examples-b}.

We now proceed with the detailed proof of \cref{thr:three-agents-path}. First, we introduce some additional notation for use in the proof. Given a path $P$ of $m$ vertices, number the edges from left to right by the numbers 1, 2, $\dots$, $m - 1$. Let $[x, y]$ denote the bundle consisting of all items between edge $x$ and $y$, where the numbers $0$ and $m$ represent respectively the start and end of the path. For some $n > 0$, we say that the $(n - 1)$-tuple $(c_1, c_2, \dots, c_{n - 1})$ with $c_i \le c_{i + 1}$, $0 \le c_i \le m$ causes the partition $([0, c_1], [c_1, c_2], \dots, [c_{n - 1}, m])$.
We say that a bundle $B$ is \emph{good for utility function} $u$ in a partition if $u(B) \ge \PMMS(B \cup B')$ for every neighbouring bundle $B'$ in the partition.
That is, the requirements for PMMS is satisfied in the partition for any agent with utility function $u$, if she receives the bundle $B$. We say that a bundle is \emph{good for agent $i$} if the bundle is good for her utility function $u_i$. Any allocation in which all agents receive bundles that are good for them is a PMMS allocation.


First, we need to establish several properties of PMMS on paths under additive utilities. Some of these properties hold for more general graph classes or under monotone utilities. We begin by considering sufficient conditions for existence of a PMMS connected allocation on a path with three agents.

\begin{lemma}\label{lem:tree-monotone-better-than-neighbours}
    For a tree, a monotone utility function $u$, and a partition of the tree into connected bundles, let $B_i$ and $B_j$ be two neighbouring bundles such that $u(B_i) \ge u(B_j)$. Then, $u(B_i) \ge \PMMS(B_i \cup B_j)$.
\end{lemma}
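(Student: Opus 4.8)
The plan is to show that when two neighbouring connected bundles $B_i$ and $B_j$ in a partition of a tree satisfy $u(B_i) \ge u(B_j)$, the very partition $(B_i, B_j)$ of $B_i \cup B_j$ witnesses that $u(B_i) \ge \PMMS(B_i \cup B_j)$. The key observation is that $(B_i, B_j)$ is itself a valid partition of $G[B_i \cup B_j]$ into two connected pieces: both $B_i$ and $B_j$ are connected by hypothesis, and since they are neighbours there is an edge joining them, so $G[B_i \cup B_j]$ is connected and $(B_i, B_j) \in \Pi_2(B_i \cup B_j)$. (Here the tree structure is not even strictly needed for this particular lemma, but it is the ambient setting.)

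First I would recall the definition $\PMMS(B_i \cup B_j) = \mu^2(B_i \cup B_j) = \max\{\min\{u(A_1), u(A_2)\} : (A_1, A_2) \in \Pi_2(B_i \cup B_j)\}$. Since $(B_i, B_j)$ is one such partition, it gives a lower bound on the max only if we take the min over its two parts — but we want an upper bound on $\PMMS$. So the argument must run the other way: I would instead argue that $\PMMS(B_i \cup B_j) \le u(B_i)$ directly. Take any partition $(A_1, A_2) \in \Pi_2(B_i \cup B_j)$; I claim $\min\{u(A_1), u(A_2)\} \le u(B_i)$. Indeed, $u(A_1) + u(A_2) = u(B_i \cup B_j) = u(B_i) + u(B_j) \le 2u(B_i)$ by the assumption $u(B_j) \le u(B_i)$ and additivity (the disjointness of $B_i, B_j$ gives $u(B_i \cup B_j) = u(B_i) + u(B_j)$). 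Hence $\min\{u(A_1), u(A_2)\} \le \tfrac12(u(A_1) + u(A_2)) \le u(B_i)$. Taking the maximum over all $(A_1, A_2) \in \Pi_2(B_i \cup B_j)$ yields $\PMMS(B_i \cup B_j) \le u(B_i)$, which is exactly the claim.

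I would then note that the statement says "monotone utility function" but the clean proof above uses additivity. Since the ambient section is about additive utilities on paths, the intended reading is additive; alternatively, for genuinely monotone (non-additive) $u$ the inequality $u(B_i \cup B_j) \le 2u(B_i)$ need not hold, so I would either restrict to additive $u$ or, to keep it monotone, argue: in $(A_1, A_2)$, one part, say $A_1$, is disjoint from $B_i$ hence contained in $B_j$, so by monotonicity $u(A_1) \le u(B_j) \le u(B_i)$, giving $\min\{u(A_1), u(A_2)\} \le u(A_1) \le u(B_i)$ — wait, this requires one part to be contained in $B_j$, which holds because $A_1 \cap A_2 = \emptyset$ and $A_1 \cup A_2 = B_i \cup B_j$ forces $B_i \subseteq A_1 \cup A_2$, but not that $B_i$ lies in a single part. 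So the genuinely monotone version needs a little more care: at least one of $A_1, A_2$ must be disjoint from at least one vertex of $B_i$... actually the cleanest monotone argument is: one of the two parts contains no vertex of $B_i$ only if $B_i \subseteq$ the other part, which may fail, so I would present the additive computation as the main line and remark that this is all that is needed downstream.

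The main obstacle, then, is not any deep combinatorics — the lemma is essentially an unpacking of the $\PMMS$ definition plus the pigeonhole bound $\min \le \text{average}$ — but rather getting the monotone-versus-additive scope exactly right and making sure I invoke the correct direction of the $\PMMS$ inequality (upper-bounding $\PMMS(B_i\cup B_j)$ by $u(B_i)$, not lower-bounding it). I expect the final write-up to be three or four sentences: establish $(B_i,B_j)\in\Pi_2(B_i\cup B_j)$, bound an arbitrary $2$-partition's min by the average $\tfrac12 u(B_i\cup B_j) = \tfrac12(u(B_i)+u(B_j)) \le u(B_i)$, and conclude.
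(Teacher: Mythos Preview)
Your additive argument (the $\min\le\tfrac12(u(A_1)+u(A_2))$ bound) is correct but does not prove the lemma as stated: the hypothesis is that $u$ is \emph{monotone}, not additive, and for a merely monotone $u$ neither $u(A_1)+u(A_2)=u(B_i\cup B_j)$ nor $u(B_i\cup B_j)\le 2u(B_i)$ need hold. You spotted this yourself, and you were in fact on the right track with ``one part is disjoint from $B_i$ hence contained in $B_j$'' --- but you gave up on it because you did not see why one part must lie inside $B_i$ or $B_j$. That is exactly where the tree hypothesis enters, and it is not optional.

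The missing idea is this: $B_i\cup B_j$ is itself a tree (both pieces are connected subtrees, and since the ambient graph is a tree they are joined by a unique edge). Any partition $(X,Y)\in\Pi_2(B_i\cup B_j)$ into two connected pieces is obtained by deleting a single edge of this tree. That edge lies either inside $B_i$, inside $B_j$, or is the unique $B_i$--$B_j$ edge; in every case one of $X,Y$ is entirely contained in $B_i$ or entirely contained in $B_j$. Monotonicity then gives $u$ of that part $\le u(B_i)$ (directly, or via $u(B_j)\le u(B_i)$), so $\min\{u(X),u(Y)\}\le u(B_i)$ and hence $\PMMS(B_i\cup B_j)\le u(B_i)$. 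This is precisely the paper's proof. Your claim that ``the tree structure is not even strictly needed'' is therefore wrong for monotone $u$: on a cycle, for instance, a connected $2$-partition need not have either part contained in $B_i$ or $B_j$, and the monotone statement can fail.
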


\begin{proof}
    Let $(X, Y) \in \Pi_2(B_i \cup B_j)$ be any PMMS partition of the tree $B_i \cup B_j$. Since $X$ and $Y$ must be connected, it holds that either $X$ or $Y$ is a subset of either $B_i$ or $B_j$. Assume w.l.o.g.\ that this holds for $X$. Then $u(X) \le u(B_i)$ or $u(X) \le u(B_j) \le u(B_i)$ and we get that $\mathrm{PMMS}(B_i \cup B_j) \le u(X) \le u(B_i)$.
\end{proof}

\begin{lemma}\label{lem:tree-monotone-at-least-one}
    For a tree, a monotone utility function $u$ and a partition of the graph, there exists at least one good bundle for $u$.
\end{lemma}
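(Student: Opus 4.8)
The statement to prove is Lemma~\ref{lem:tree-monotone-at-least-one}: for a tree with a monotone utility function $u$ and any partition into connected bundles, at least one bundle is good for $u$ (i.e., it weakly beats each of its neighboring bundles' PMMS).

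\textbf{Plan.} The idea is to contract each connected bundle of the partition to a single vertex; since the original graph is a tree, the resulting \emph{quotient graph} $T'$ on the bundles is also a tree (contracting connected pieces of a tree keeps it acyclic and connected). Now orient each edge of $T'$ from the bundle with smaller utility to the bundle with larger utility, breaking ties arbitrarily but consistently (e.g., by a fixed linear order on bundles, so no $2$-cycles arise). A finite tree under such an acyclic-ish orientation has a vertex of out-degree $0$: formally, because $T'$ is finite, repeatedly following out-edges cannot cycle (utilities are non-decreasing along the walk and ties are broken by a strict total order, so the walk is strictly monotone in the lexicographic sense and cannot revisit a bundle), hence it must terminate at a sink $B^*$. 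This $B^*$ satisfies $u(B^*) \ge u(B')$ for every neighboring bundle $B'$ in the partition.

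\textbf{Finishing step.} By Lemma~\ref{lem:tree-monotone-better-than-neighbours}, applied to $B^*$ and each of its neighbors $B'$ (where $u(B^*) \ge u(B')$), we get $u(B^*) \ge \PMMS(B^* \cup B')$. Since this holds for \emph{every} neighbor $B'$ of $B^*$, the bundle $B^*$ is good for $u$, which proves the lemma.

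\textbf{Main obstacle.} The only real subtlety is justifying the existence of the sink $B^*$, i.e., handling ties in utility correctly so that the "follow the larger neighbor" argument genuinely terminates rather than looping between two bundles of equal utility. This is cleanly resolved by fixing an arbitrary total order on the bundles and, among neighbors of equal utility, declaring the one later in this order to be "larger"; then the walk is strictly increasing with respect to the (utility, order) lexicographic comparison and must hit a sink. Alternatively — and perhaps more simply for the writeup — one can just pick $B^*$ to be a bundle of \emph{maximum} utility among all bundles in the partition: every neighbor $B'$ then trivially satisfies $u(B^*) \ge u(B')$, and Lemma~\ref{lem:tree-monotone-better-than-neighbours} finishes the argument without any need to discuss the quotient tree or orientations at all. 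I would use this second, shorter route: let $B^*$ be a bundle maximizing $u$, invoke Lemma~\ref{lem:tree-monotone-better-than-neighbours} against each neighbor, and conclude.
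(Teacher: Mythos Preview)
Your proposal is correct, and the final route you recommend---pick a bundle $B^*$ of maximum utility and apply Lemma~\ref{lem:tree-monotone-better-than-neighbours} to each neighbor---is exactly the paper's proof. The quotient-tree and orientation discussion is unnecessary scaffolding; drop it and just write the two-line argument.
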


\begin{proof}
    In any partition $(B_1, B_2, \dots, B_n)$ there is at least one bundle $B_i$ with $u(B_i) \ge u(B_j)$ for all $j \in \{1, \dots, n\}$. Applying \cref{lem:tree-monotone-better-than-neighbours} to each neighbour $B_j$ of $B_i$ guarantees that $B_i$ must be good for $u$.
\end{proof}

Given \cref{lem:tree-monotone-at-least-one}, we get the following sufficient conditions for the existence of a PMMS connected allocation. These are similar to the ones used by \citet{Feige:22C} when considering MMS existence for three agents.

\begin{lemma}\label{lem:3-agents-sufficient-conditions}
    For a path and three agents with monotone utility functions, let $(c_1, c_2)$ be a partition of the path, let $G_i$ be the set of bundles in the partition that are good for agent $i$. Then, any one of the following conditions is sufficient for a PMMS connected allocation to exist
    \begin{enumerate}[label=(\arabic*)]
        \item There are two agents $i, j$ such that $|G_i| = 2$, $|G_j| = 2$ and $|G_i \cap G_j| = 1$.\label{item:lem:3-agents-sufficient-conditions:1}
        \item There are two agents $i, j$ such that $|G_i| = 3$ and $|G_j| \ge 2$.\label{item:lem:3-agents-sufficient-conditions:2}
        \item There is some agent $i$ with $|G_i| = 3$, and for the two remaining agents $j, k$ we have that $G_j \setminus G_k \neq \emptyset$.\label{item:lem:3-agents-sufficient-conditions:3}
    \end{enumerate}
\end{lemma}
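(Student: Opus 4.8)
The plan is to prove each of the three sufficient conditions by exhibiting an explicit PMMS allocation. Recall that an allocation in which every agent receives a bundle good for her is PMMS, so in each case it suffices to assign the three bundles $B_1 = [0,c_1]$, $B_2 = [c_1,c_2]$, $B_3 = [c_2,m]$ of the fixed partition to the three agents so that each agent ends up with one of her good bundles. By \cref{lem:tree-monotone-at-least-one} every agent has at least one good bundle, i.e.\ $|G_i| \ge 1$ for each $i$; the conditions just ensure enough overlap structure to find a system of distinct representatives.

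For condition~\ref{item:lem:3-agents-sufficient-conditions:1}, let $i,j$ be the two agents with $|G_i| = |G_j| = 2$ and $|G_i \cap G_j| = 1$. Write $G_i = \{B, B'\}$ and $G_j = \{B, B''\}$ where $B$ is the common good bundle and $B', B''$ are distinct (since $|G_i \cap G_j| = 1$, we have $B' \neq B''$, and $B',B'' \neq B$). Give $B'$ to agent $i$, $B''$ to agent $j$, and the remaining third bundle to the third agent $k$. This is a valid allocation since $B, B', B''$ are three distinct bundles, hence all of $B_1,B_2,B_3$; agent $i$ receives $B' \in G_i$ and agent $j$ receives $B'' \in G_j$, both good, and agent $k$ receives whatever bundle is left, which must be good for $k$ by \cref{lem:tree-monotone-at-least-one} — wait, that is not automatic. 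Here is the subtlety: $k$ might have only one good bundle and it might not be the leftover one. So the argument needs a small fix: we should not fix which of $B'/B''$ goes to $i$ versus $j$ a priori. Since $|G_i|=2$, the bundle $B$ together with exactly one of $B',B''$ lies in $G_i$; likewise $G_j$. The three bundles $\{B,B',B''\}$ exhaust $\{B_1,B_2,B_3\}$. Agent $k$ has some good bundle $C \in G_k$. If $C = B$, give $B$ to $k$; then $i,j$ must split $\{B',B''\}$, and since $B' \in G_i$ and $B'' \in G_j$ this works. If $C \ne B$, say $C = B'$ (the case $C=B''$ is symmetric): give $B'$ to $k$, $B$ to agent $j$ (legal since $B \in G_j$), and $B''$ to agent $i$; we need $B'' \in G_i$, which holds because $G_i = \{B,B'\}$... no, then $B'' \notin G_i$. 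So instead: give $C=B'$ to $k$, and then $i,j$ must split $\{B,B''\}$. Agent $i$ can take $B$ (since $B \in G_i$) and agent $j$ takes $B''$ (since $G_j = \{B,B''\}$). Good. So in all cases a valid PMMS allocation exists; I would write this case-split cleanly.

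For condition~\ref{item:lem:3-agents-sufficient-conditions:2}, agent $i$ has $G_i = \{B_1,B_2,B_3\}$ (all three bundles good) and agent $j$ has some good bundle $C \in G_j$ with a second good bundle as well, but even $|G_j| \ge 1$ suffices: let agent $k$ take any good bundle $C' \in G_k$, let agent $j$ take any good bundle in $G_j \setminus \{C'\}$ — this is nonempty because $|G_j| \ge 2$ — and let agent $i$ take the remaining bundle, which is good for $i$ since all bundles are good for $i$. This is clearly a valid PMMS allocation. For condition~\ref{item:lem:3-agents-sufficient-conditions:3}, again agent $i$ has all three bundles good. Pick $B \in G_j \setminus G_k$; this is nonempty by hypothesis and $B \notin G_k$. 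Let agent $k$ take any $C \in G_k$ (so $C \ne B$), let agent $j$ take $B$, and let agent $i$ take the remaining bundle, which is good for $i$ since $G_i$ contains everything. Since $B \ne C$ this is a proper allocation, and every agent gets a good bundle, so it is PMMS.

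I do not expect a serious obstacle here — the whole lemma is a matching/counting exercise on three bundles — but the one place to be careful is condition~\ref{item:lem:3-agents-sufficient-conditions:1}, where one must not naively assume the leftover bundle is good for the third agent; the case analysis on where agent $k$'s good bundle sits relative to the common bundle $B$ is the only nontrivial point, and I would present it as the three subcases "$k$ wants $B$", "$k$ wants $B'$", "$k$ wants $B''$" (the last two symmetric) to make it transparent.
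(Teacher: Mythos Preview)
Your proposal is correct and follows essentially the same approach as the paper: in each case you give agent $k$ one of her good bundles first, then use the hypothesis to allocate the remaining two bundles to $i$ and $j$. The only difference is cosmetic---the paper's proof of \ref{item:lem:3-agents-sufficient-conditions:1} avoids your false starts by first observing that $|G_i \cup G_j| = 3$, so whatever bundle $B_k$ agent $k$ picks lies in $G_i$ or $G_j$, and then splitting on whether $B_k$ lies in both or just one; you should streamline your three-subcase argument along those lines before writing it up.
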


\begin{proof}
    For \ref{item:lem:3-agents-sufficient-conditions:1} let $B_k$ be a bundle that is good for the last agent, $k$. Give this bundle to $k$. Then for at least one of $i$ and $j$, $B_k$ is good. If both $B_k \in G_i$ and $B_k \in G_j$, then $G_i \cap G_j = \{B_k\}$ and a PMMS allocation is obtained by giving the bundle in $G_i \setminus \{B_k\}$ to $i$ and the bundle in $G_j \setminus \{B_k\}$ to $j$. Otherwise, assume w.l.o.g.\ that $B_k \in G_i$. Then, allocate $B_i$, the only bundle in $G_i \setminus \{B_k\}$, to agent $i$. Since $B_k \notin G_j$, there is some bundle $B_j \in (G_j \setminus \{B_i, B_k\})$ which can be allocated to agent $j$. Thus, each agent receives a bundle that is good for them and we have a PMMS connected allocation.

    For \ref{item:lem:3-agents-sufficient-conditions:2} and \ref{item:lem:3-agents-sufficient-conditions:3}, a PMMS connected allocation can be obtained by allocating some $B_k \in G_k$ to agent $k$. Then, there must be some remaining bundle $B_j \in (G_j \setminus \{B_k\})$ that can be given to agent $j$. The last bundle $B_i$ can be given to agent $i$, as every bundle is good for $i$.
\end{proof}

We now have sufficient conditions for obtaining PMMS connected allocations in the different cases that need to be considered. These cover most of the cases that will be considered in \cref{thr:three-agents-path} directly. In order to handle the remaining cases, we need the following properties of PMMS partitions.

\begin{lemma}\label{lem:move-value-p-mms-split}
    For a path and a monotone utility function $u$, let $x, y, z$ be such that $x \le y \le z$ and $a$ such that $y \le a \le z$. If $([x, y], [y, z])$ is a PMMS partition of $[x, z]$ for $u$ and $u([x, a]) > u([x, y])$, then $u([a, z]) \le u([x, a])$.
\end{lemma}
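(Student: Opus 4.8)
The plan is to exploit the characterization of $\PMMS([x,z]) = \mu^2_u([x,z])$ as a \emph{maximum} of $\min\{u(A_1),u(A_2)\}$ over all connected bipartitions $(A_1,A_2)$ of $[x,z]$, combined with the hypothesis $u([x,a]) > u([x,y])$. Only monotonicity of $u$ is needed; additivity plays no role.

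First I would observe that since $x \le y \le a \le z$, the pair $([x,a],[a,z])$ is itself a connected bipartition of $[x,z]$, i.e.\ $([x,a],[a,z]) \in \Pi_2([x,z])$. Hence, by the definition of the $2$-maximin share,
\[
\min\{u([x,a]),\, u([a,z])\} \;\le\; \mu^2_u([x,z]) \;=\; \PMMS([x,z]).
\]
Second, since $([x,y],[y,z])$ is a PMMS partition of $[x,z]$ for $u$, it attains this maximum, so $\PMMS([x,z]) = \min\{u([x,y]),u([y,z])\} \le u([x,y])$. Chaining these two facts with the hypothesis gives
\[
\min\{u([x,a]),\, u([a,z])\} \;\le\; \PMMS([x,z]) \;\le\; u([x,y]) \;<\; u([x,a]).
\]
So the minimum on the left is strictly smaller than $u([x,a])$; if it were attained by $u([x,a])$ we would get $u([x,a]) < u([x,a])$, a contradiction. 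Therefore $u([a,z]) \le \min\{u([x,a]),u([a,z])\} < u([x,a])$, which yields the desired $u([a,z]) \le u([x,a])$.

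There is essentially no obstacle here: the statement is a one-line consequence of the optimality of a PMMS partition. The only points to double-check are the degenerate endpoints, and these cause no trouble: $a = y$ is excluded by the strict hypothesis $u([x,a]) > u([x,y])$, and $a = z$ makes $[a,z]$ empty (hence of utility $0 \le u([x,a])$), so the inequality holds trivially in that boundary case as well.
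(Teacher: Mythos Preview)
Your proof is correct and follows essentially the same approach as the paper's: both exploit that $([x,a],[a,z])$ is itself a connected bipartition of $[x,z]$ and hence cannot beat the PMMS partition $([x,y],[y,z])$, forcing $\min\{u([x,a]),u([a,z])\}\le u([x,y])<u([x,a])$. The paper phrases this as a proof by contradiction while you argue directly, but the underlying idea is identical.
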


\begin{proof}
    Assume to the contrary that $u([a, z]) > u([x, a])$. Then $u([a, z]) > u([x, a]) > u([x, y])$. Thus, 
    \begin{align*}
        \min\{u([x, a]), u([a, z])\} &= u([x, a]) > u([x, y]) \\
        &\ge \min\{u([x, y]), u([y, z])\}
    \end{align*}
    This is a contradiction as $([x, y], [y, z])$ is by definition then not a PMMS partition of $[x, z]$.
\end{proof}

The following lemma is a stricter version of \cref{lem:move-value-p-mms-split} when utilities are additive.

\begin{lemma}\label{lem:conditions-p-mms-split}
    For a path and an additive utility function $u$, let $x, y, z$ be such that $x < y < z$. If and only if the partition $([x, y]$, $[y, z])$ is a PMMS partition of $[x, z]$ for $u$ it holds that
    \begin{enumerate}[label=(\arabic*)]
        \item $u([x, y]) = 0$ or $u([x, a]) \le u([y, z])$, where $a$ is the largest $a \ge x$ with $u([x, a]) < u([x, y])$
        \item $u([y, z]) = 0$ or $u([b, z]) \le u([x, y])$, where $b$ is the smallest $b \le z$ with $u([b, z]) < u([y, z])$
    \end{enumerate}
\end{lemma}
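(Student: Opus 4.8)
The plan is to prove both directions of the biconditional by relating the quantity $\PMMS([x,z])$ to the two ``moved-border'' values described by $a$ and $b$. First I would record the basic structural fact: since the path $[x,z]$ is partitioned into connected pieces by a single edge, every partition of $[x,z]$ into two connected bundles has the form $([x,t],[t,z])$ for some $x\le t\le z$, and by additivity $u([x,t])$ is nondecreasing in $t$ while $u([t,z])$ is nonincreasing in $t$. Hence $\min\{u([x,t]),u([t,z])\}$ is maximized by choosing $t$ as close as possible to the ``balance point''; concretely, the value $\PMMS([x,z])$ equals $\max\bigl\{\, \min\{u([x,t]),u([t,z])\} : x\le t\le z\,\bigr\}$, and the maximizing $t$ is either the largest $t$ with $u([x,t])\le u([t,z])$ or the smallest $t$ with $u([x,t])\ge u([t,z])$. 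This lets me reduce the PMMS condition on $([x,y],[y,z])$ to: $\min\{u([x,y]),u([y,z])\}$ is at least this max over $t$, equivalently, no single-edge shift away from the cut at $y$ can strictly increase the smaller side.

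Next I would translate ``no beneficial shift'' into the two numbered conditions. Suppose $u([x,y])\le u([y,z])$ (the symmetric case is handled by swapping the roles of the two sides, i.e.\ conditions (1) and (2)). Then the smaller bundle is the left one, with value $u([x,y])$, and the only shifts that could help are those that move the border leftward, shrinking $[x,z]$'s right piece is irrelevant; moving the cut from $y$ to some $t<y$ gives left value $u([x,t])\le u([x,y])$ and so cannot increase the minimum. Moving the cut rightward, from $y$ to $t>y$, gives new minimum $\min\{u([x,t]),u([t,z])\}$; this beats $u([x,y])$ exactly when both $u([x,t])>u([x,y])$ and $u([t,z])>u([x,y])$. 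The smallest such $t$, by additivity (all items to the right of $y$ move across one at a time), is exactly $a+1$ in the discrete indexing, where $a$ is as defined: the largest index with $u([x,a])<u([x,y])$ — wait, I would be careful here: with the additive/discrete structure, ``$u([x,a])$'' for the largest $a$ with $u([x,a])<u([x,y])$ is the value just below the threshold, so $u([x,a+1])\ge u([x,y])$, and $[a+1,z]$ is the complementary bundle; thus the potentially-improving right-shifted partition has min value $\min\{u([x,a+1]),u([a+1,z])\}$, and no improvement is possible iff $u([a+1,z])\le u([x,y])$, i.e.\ iff $u([a,z])\le u([y,z])$ after rewriting via additivity $u([a,z])=u([a,a+1])+u([a+1,z])$ and $u([x,y])=u([x,a+1])-\dots$. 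I would line up the algebra so that ``no rightward improvement'' becomes precisely condition~(1), and symmetrically ``no leftward improvement from the other side'' becomes condition~(2). The degenerate clauses ``$u([x,y])=0$'' and ``$u([y,z])=0$'' cover the cases where the relevant bundle is empty-valued so that the corresponding index $a$ (resp.\ $b$) is not well defined or the inequality is vacuous.

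For the forward direction (PMMS $\Rightarrow$ (1) and (2)), I would run the contrapositive: if (1) fails, then $u([x,y])>0$ and $u([x,a])>u([y,z])$ with $a$ as defined; then $u([x,a])>u([y,z])\ge u([x,y])$ wait that is backwards — I should instead observe $u([x,a])<u([x,y])$ by choice of $a$, so the failure $u([x,a])>u([y,z])$ forces $u([y,z])<u([x,y])$, putting us in the regime where the right bundle is the smaller one, and then the complementary index gives a strictly better split, contradicting PMMS; here I can invoke \cref{lem:move-value-p-mms-split} directly, since it already says that once $u([x,a'])>u([x,y])$ the partition is destroyed unless $u([a',z])\le u([x,a'])$. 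This is the cleanest route: use \cref{lem:move-value-p-mms-split} as the engine and only supply the additivity-based bookkeeping to identify the critical indices $a,b$ and to see that checking these two shifts suffices. The main obstacle I anticipate is the indexing/edge-case care — making sure the definitions of $a$ and $b$ (``largest $a\ge x$ with $u([x,a])<u([x,y])$'', ``smallest $b\le z$ with $u([b,z])<u([y,z])$'') interact correctly with additivity so that checking only the single ``critical'' alternative cut on each side is genuinely equivalent to checking all cuts, and handling the zero-value degeneracies without a false claim. Everything else is monotonicity of $u([x,t])$ and $u([t,z])$ in $t$ plus the reduction of $\PMMS$ to a one-dimensional maximization, which I would state as a preliminary observation and reuse for both directions and both conditions.
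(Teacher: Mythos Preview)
Your overall approach is the paper's: parametrize all connected two-part splits of $[x,z]$ by the single cut $t$, use monotonicity of $u([x,t])$ and $u([t,z])$ in $t$, and reduce PMMS-optimality to checking the two ``critical'' alternative cuts at $a$ and $b$. Your final forward-direction argument (if (1) fails then $u([y,z])<u([x,a])<u([x,y])$, so the split at $a$ strictly beats $\min\{u([x,y]),u([y,z])\}$) is exactly what the paper does, and the backward direction follows the same monotonicity logic once the zero-value degeneracies are dispatched.

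Two points to tighten. First, the middle of your plan has conditions (1) and (2) swapped: when $u([x,y])\le u([y,z])$, only \emph{rightward} shifts can help, and the relevant critical index is $b$, i.e.\ condition~(2); condition~(1) is automatic in that regime since $u([x,a])<u([x,y])\le u([y,z])$. Second, \cref{lem:move-value-p-mms-split} is not the engine you want: it only yields $u([t,z])\le u([x,t])$, which is strictly weaker than the $u([t,z])\le u([x,y])$ you need for condition~(2) (and symmetrically for~(1)). Drop that detour and use the direct construction you already have---exhibit the split at $a$ (resp.\ $b$) and check both pieces exceed the original minimum---which is precisely the paper's argument.
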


\begin{proof}
    We prove both directions. 
    
    \paragraph{$\implies$:}
    Assume that the partition $([x, y], [y, z])$ is a PMMS partition for $u$. If $u([x, y]) = 0$, then (1) holds. Now assume that $u([x, y]) \neq 0$. Since $u([x, x]) = u(\emptyset) = 0$, $u([x, y]) > 0$ and $u$ is additive, there must exists some $x < c \le y$ such that $u([x, c]) = u([x, y])$ and $u([x, c - 1]) < u([x, c])$. Then it follows that $a = c - 1$. If $u([x, a]) > u([y, z])$, the partition $([x, a]$, $[a, z])$ has the property that $$\min\{u([x, a]), u([a, z])\} > u([y, z]) \ge \min\{u([x, y]), u([y, z])\}\;.$$ A contradiction, as $([x, y], [y, z])$ is a PMMS partition. So it must hold that $u([x, a]) \le u([a, z])$. Now assume that $u([y, z]) \neq 0$. By the same argument, we get that $b$ exists and that 
    $$\min\{u([x, b]), u([b, z])\} > u([x, y]) \ge \min\{u([x, y]), u([y, z])\}\;.$$
    This is another contradiction as $([x, y], [y, z])$ is a PMMS partition for $u$.

    \paragraph{$\impliedby$:}
    If both $u([x, y]) = 0$ and $u([y, z]) = 0$, then since $u$ is additive, $u([x, z]) = 0$. Therefore $\PMMS([x, z]) = 0$ and any partition is a PMMS partition. Assume now that $u([x, y]) = 0$ and $u([b, z]) \le u([x, y])$. Then $u([b, z]) = 0$ and there is only a single item in $[x, z]$ with a non-zero utility. Thus, $\PMMS([x, z]) = 0$ and any partition is a PMMS partition. Similarly, if $u([y, z]) = 0$ and $u([x, a]) \le u([y, z])$, then $u([x, a]) = 0$ and there is only a single item in $[x, z]$ with non-zero utility. Thus, $\PMMS([x, z]) = 0$ and any partition is a PMMS partition.
    Finally, assume that $u([x, y]) \neq 0$, $u([y, z]) \neq 0$, $u([x, a]) \le u([y, z])$ and $u([b, z]) \le u([x, y])$. Then, there can not be any partition $([x, c]$, $[c, z])$ with $\min\{u([x, c]), u([c, z])\} > \min\{u([x, y]), u([y, z])\}$. Any such partition would have to increase the utility of either the left or the right bundle, whichever has the least utility, in which case it follows by the additive utilities that $u([x, c]) \le u([x, a])$ or $u([c, z]) \le u([b, z])$. Thus,
    \begin{align*}
    \min\{u([x, c]), u([c, z])\} 
    &\le \max\{u([x, a]), u([b, z])\} \\
    &\le \min\{u([x, y]), u([y, z])\}
    \end{align*}
\end{proof}

\begin{corollary}\label{cor:conditions-p-mms-split}
    For a path and an additive utility function $u$, let $x, y, z$ be such that $x < y < z$. If the partition $([x, y]$, $[y, z])$ is a PMMS partition for $u$ it holds that
    \begin{enumerate}[label=(\arabic*)]
        \item If $u([x, y]) \neq 0$, then $u([x, a]) < u([a, z])$, where $a$ is the largest $a \ge x$ with $u([x, a]) < u([x, y])$
        \item If $u([y, z]) \neq 0$, then $u([b, z]) < u([x, b])$, where $b$ is the smallest $b \le z$ with $u([b, z]) < u([y, z])$
    \end{enumerate}
\end{corollary}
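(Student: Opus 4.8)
The plan is to derive the two strict inequalities directly from Lemma~\ref{lem:conditions-p-mms-split} combined with the defining property of $a$ and $b$, then convert back using additivity. First I would record that $a$ and $b$ are well-defined under the stated hypotheses: since $u([x,x]) = u(\emptyset) = 0 < u([x,y])$ (using $u([x,y]) \neq 0$ in item~(1)), the set of $a \ge x$ with $u([x,a]) < u([x,y])$ is non-empty and bounded above by $y$, so its maximum exists; symmetrically for $b$ when $u([y,z]) \neq 0$. Then I would invoke Lemma~\ref{lem:conditions-p-mms-split}: because $([x,y],[y,z])$ is a PMMS partition and $u([x,y]) \neq 0$, the lemma's first condition forces $u([x,a]) \le u([y,z])$; symmetrically, $u([y,z]) \neq 0$ gives $u([b,z]) \le u([x,y])$.

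Next I would strengthen these weak inequalities using the defining strict inequalities $u([x,a]) < u([x,y])$ and $u([b,z]) < u([y,z])$. Adding the two facts about $a$ yields $2u([x,a]) < u([x,y]) + u([y,z])$, and adding the two facts about $b$ yields $2u([b,z]) < u([x,y]) + u([y,z])$.

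Finally I would translate back via additivity. Since $[x,z] = [x,a] \cup [a,z]$ is a disjoint union and $u([x,z]) = u([x,y]) + u([y,z])$, we get $u([a,z]) = u([x,y]) + u([y,z]) - u([x,a])$, so the claim $u([x,a]) < u([a,z])$ is precisely $2u([x,a]) < u([x,y]) + u([y,z])$, which was just established; this proves~(1). The identical computation with $u([x,b]) = u([x,y]) + u([y,z]) - u([b,z])$ proves~(2).

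I do not expect any real obstacle here: the corollary is essentially a one-line sharpening of Lemma~\ref{lem:conditions-p-mms-split}, where the extra strictness is bought by the strict defining inequalities of $a$ and $b$. The only points needing a little care are selecting the correct branch of the disjunction in Lemma~\ref{lem:conditions-p-mms-split} (which is exactly what the hypotheses $u([x,y]) \neq 0$ and $u([y,z]) \neq 0$ secure) and confirming that $a$ and $b$ exist so that those strict inequalities are available.
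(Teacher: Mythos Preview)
Your proof is correct and matches the paper's intent: the paper states this as a corollary of Lemma~\ref{lem:conditions-p-mms-split} with no separate proof, and your argument is exactly the natural derivation, combining the weak inequality $u([x,a]) \le u([y,z])$ from the lemma with the strict defining inequality $u([x,a]) < u([x,y])$ and using additivity to rewrite $u([a,z])$.
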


Next we consider how the cut point for a PMMS partition of a subpath to two agents changes as we expand the subpath in one direction and shrink it in the opposite end. More specifically, we simply establish that if we move the start or/and end of the subpath left, while maintaining the same cut point or moving the cut point rightwards, then the left bundle will remain good for the utility function in the split of the subpath. While not very useful for bundles in the middle of the path, it is useful when considering if the outermost bundles on a path are good for an agent.

\begin{lemma}\label{lem:sub-and-super-paths}
    For a path and an additive utility function $u$, let $([x, y], [y, z])$ be a PMMS partition of $[x, z]$ for $u$, where $x \le y \le z$. Let $a$, $b$ and $c$ be such that $a \le x$, $b \ge y$ and $b < c \le z$. Then, $u([a, b]) \ge \PMMS([a, c])$.
\end{lemma}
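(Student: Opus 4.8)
The plan is to reduce the statement to a comparison over cut points. On a path, every connected two-part partition of $[a,c]$ has the form $([a,s],[s,c])$ for some $s$ with $a \le s \le c$, and under additive (hence monotone, since utilities are nonnegative) utilities the PMMS value of a bundle is the minimum of the two parts in an optimal split, so $\PMMS([a,c]) = \max_{a \le s \le c}\min\{u([a,s]),u([s,c])\}$. Thus it suffices to show that $\min\{u([a,s]),u([s,c])\} \le u([a,b])$ for every $s$ with $a \le s \le c$.

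First I would handle the easy case $s \le b$: since $[a,s] \subseteq [a,b]$, monotonicity gives $u([a,s]) \le u([a,b])$, so the minimum is at most $u([a,b])$. The substance of the argument is the case $b < s \le c$, where I would invoke the hypothesis that $([x,y],[y,z])$ is a PMMS partition of $[x,z]$. Because $x \le y \le b < s \le c \le z$, the value $s$ is a legal cut point of $[x,z]$, so optimality of $([x,y],[y,z])$ yields $\min\{u([x,s]),u([s,z])\} \le \min\{u([x,y]),u([y,z])\} \le u([x,y])$; hence at least one of $u([x,s])$, $u([s,z])$ is at most $u([x,y])$.

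In the subcase $u([s,z]) \le u([x,y])$, I use $c \le z$ and $[x,y]\subseteq[a,b]$ (which holds since $a \le x$ and $y \le b$) to get $u([s,c]) \le u([s,z]) \le u([x,y]) \le u([a,b])$, so the minimum is at most $u([a,b])$. In the subcase $u([x,s]) \le u([x,y])$, additivity together with $x \le y \le s$ gives $u([y,s]) = u([x,s]) - u([x,y]) \le 0$, hence $u([y,s]) = 0$, and therefore $u([b,s]) = 0$ since $[b,s] \subseteq [y,s]$; then additivity with $a \le b \le s$ gives $u([a,s]) = u([a,b]) + u([b,s]) = u([a,b])$, so again the minimum is at most $u([a,b])$. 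Combining all cases, $\min\{u([a,s]),u([s,c])\} \le u([a,b])$ for every $s$, which proves $\PMMS([a,c]) \le u([a,b])$.

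The only real obstacle is the case where the cut point $s$ lies strictly between $b$ and $c$: here one cannot just use monotonicity of the expanding interval $[a,\cdot]$, and must instead use PMMS-optimality of the original split on the larger path $[x,z]$ plus additivity to argue that pushing the left cut from $b$ out to $s$ cannot create utility — precisely the two-way subcase split above. (One could alternatively route the strict-increase situation through \cref{lem:move-value-p-mms-split}, but the direct additivity argument is cleaner and keeps the proof self-contained.)
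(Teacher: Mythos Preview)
Your proof is correct. Both arguments reduce to analysing a cut point $s$ (the paper calls it $d$) strictly to the right of $b$, since $s \le b$ is immediate by monotonicity. From there the routes diverge slightly: the paper argues by contradiction and invokes the characterisation in \cref{lem:conditions-p-mms-split} to deduce $u([d,z]) \le u([x,y])$ from $u([b,d]) > 0$, yielding the contradiction in one stroke. You instead appeal directly to the PMMS-optimality of the cut $y$ inside $[x,z]$ to get $\min\{u([x,s]),u([s,z])\} \le u([x,y])$ and then split on which side realises the minimum; the subcase $u([x,s]) \le u([x,y])$ forces $u([y,s]) = 0$ by nonnegativity, so the left bundle does not grow past $u([a,b])$. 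Your version is self-contained and does not need \cref{lem:conditions-p-mms-split}, at the cost of the extra subcase; the paper's version is shorter but leans on the prior lemma. Both are equally valid.
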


\begin{proof}
    Since $u$ is monotone, it holds that $u([a, b]) \ge u([x, y])$ and $u([b, c]) \le u([y, z])$. Assume that $u([a, b]) < \PMMS([a, c])$. Then, there exists $d$ with $b < d < c$ such that $$\min(u([a, d]), u([d, c])) > u([a, b]) \ge u([x, y])\;.$$ Since $u$ is additive, $u([b, d]) > 0$ and by \cref{lem:conditions-p-mms-split} it must hold that $u([x, y]) \ge u([d, z]) \ge u([d, c])$. This is a contradiction, as it requires $u([x, y]) > u([x, y])$ and no such $d$ can exist.
\end{proof}

Note that \cref{lem:sub-and-super-paths} does not hold for general monotone utilities, as is highlighted in the following example.

\begin{example}
    Consider a path with five items and the monotone utility function
    \[
    u(S) =
    \begin{cases}
        1 & \text{if } [0, 3] \subseteq S \text{ or } [3, 5] \subseteq S \\
        0 & \text{otherwise}
    \end{cases}
    \]
    Then, $([1, 2], [2, 5])$ is a PMMS partition of the subpath $[1, 5]$ for $u$. However, for the left bundle to be good for $u$ on the (sub)path $[0, 5]$, it must at least contain $[0, 3]$. This is a contradiction to \cref{lem:sub-and-super-paths}, which states that the bundle $[0, 2]$ should be good for $u$ in $[0, 5]$. Thus, \cref{lem:sub-and-super-paths} does not hold for monotone utility functions.
\end{example}

The next result extends the result of \cref{lem:sub-and-super-paths} to the 3-agent case.

\begin{lemma}\label{lem:x-y-three-agents}
    For a path, three agents and an additive utility function $u$, let $(c_1, c_2)$ cause a PMMS partition for $u$ and $(x, y)$ be such that $x \ge c_1$ and $y \ge c_2$. Then either \ref{item:lem:x-y-three-agents:1} holds or \ref{item:lem:x-y-three-agents:2}, \ref{item:lem:x-y-three-agents:3} and \ref{item:lem:x-y-three-agents:4} hold.
    \begin{enumerate}[label=(\arabic*)]
        \item In the partition caused by $(x, y)$ the bundle $[0, x]$ is good for $u$.\label{item:lem:x-y-three-agents:1}
        \item In the partition caused by $(x, y)$, the bundle $[x, y]$ is good for $u$.\label{item:lem:x-y-three-agents:2}
        \item $u([0, x]) < u([x, y])$ and $u([y, m]) < u([x, y])$.\label{item:lem:x-y-three-agents:3}
        \item There exists $z$ with $x < z \le y$ such that at least one of the following holds\label{item:lem:x-y-three-agents:4}
        \begin{enumerate}
            \item In the partition caused by $(z, y)$ the bundles $[0, z]$ and $[z, y]$ are good for $u$, and $z < y$.\label{item:lem:x-y-three-agents:4a}
            \item In the partition caused by $(x, z)$ the bundles $[x, z]$ and $[z, m]$ are good for $u$.\label{item:lem:x-y-three-agents:4b}
        \end{enumerate}
    \end{enumerate}
\end{lemma}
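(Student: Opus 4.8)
The plan is to analyze the PMMS partition caused by $(c_1, c_2)$ together with the candidate cuts $(x, y)$, where $x \ge c_1$ and $y \ge c_2$, and to apply the one-dimensional lemmas (especially \cref{lem:sub-and-super-paths}, \cref{lem:conditions-p-mms-split}, and \cref{cor:conditions-p-mms-split}) bundle by bundle. Since $([0, c_1], [c_1, c_2])$ restricted to $[0, c_2]$ is a PMMS partition, and $x \ge c_1$, $y \ge c_2$, I would first use \cref{lem:sub-and-super-paths} on the subpath $[0, c_2]$ with super/sub-path endpoints $a = 0 \le 0$, $b = x \ge c_1$, $c = y > x$: this is exactly the configuration needed to conclude that $[0, x]$ is good versus its right neighbour $[x, y]$, provided $x < y$. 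Symmetrically, $([c_1, c_2], [c_2, m])$ restricted to $[c_1, m]$ is a PMMS partition, and applying \cref{lem:sub-and-super-paths} in the mirrored direction should give that $[y, m]$ is good versus $[x, y]$. So the only way $[0, x]$ fails to be good is if $[0, x]$ loses its comparison with $[x, y]$, i.e. $u([0,x]) < u([x,y])$; similarly $[y,m]$ good fails only if $u([y,m]) < u([x,y])$. That immediately splits the argument: if $[0,x]$ is good we are in case \ref{item:lem:x-y-three-agents:1}, and otherwise we have both strict inequalities of \ref{item:lem:x-y-three-agents:3}. (The degenerate case $x = y$ must be handled separately: then $[x,y]=\emptyset$, $u([x,y])=0$, and both outer comparisons trivially hold, so \ref{item:lem:x-y-three-agents:1} holds.)

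**Deriving \ref{item:lem:x-y-three-agents:2}.**
Assuming we are in the second branch, so $u([0,x]) < u([x,y])$ and $u([y,m]) < u([x,y])$ with $x < y$: the bundle $[x,y]$ has strictly larger utility than each of its two neighbours in the partition caused by $(x,y)$, so by \cref{lem:tree-monotone-better-than-neighbours} (applied on the path, a tree), $[x,y]$ is good for $u$. This gives \ref{item:lem:x-y-three-agents:2}. Note this reasoning does not even need additivity — it follows from the maximal-bundle argument.

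**Deriving \ref{item:lem:x-y-three-agents:4} — the main obstacle.**
This is the hard part: I need to exhibit a cut $z$ with $x < z \le y$ such that sliding the cut $(x,y) \mapsto (z,y)$ (shrinking $[x,y]$ from the left, so the middle bundle becomes $[z,y]$) yields both $[0,z]$ and $[z,y]$ good, with $z<y$ — or else sliding $(x,y)\mapsto(x,z)$ (shrinking $[x,y]$ from the right) yields $[x,z]$ and $[z,m]$ good. The idea is to take $z$ to be the cut that makes $([0,z],[z,m])$ — or a restriction thereof — a PMMS partition for $u$, or more precisely the cut characterized via \cref{cor:conditions-p-mms-split}: since $([0,c_1],[c_1,c_2])$ is part of a PMMS partition and $[x,y]$ dominates its neighbours, there should be a "balancing" point strictly inside $[x,y]$. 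Concretely, I expect to define $z$ as the largest value with $x < z \le y$ and $u([0,z]) \le u([x,y])$ (or dually the smallest $z$ with $u([z,m]) \le u([x,y])$), then argue that since $u([0,x]) < u([x,y])$ and $u([y,m]) < u([x,y])$, at least one of the two "halves" of $[x,y]$ can be reallocated to the adjacent side while keeping the newly merged outer bundle good, and the remaining middle piece $[z,y]$ (resp. $[x,z]$) stays good because it is still large relative to what it now neighbours. The delicate point will be checking the strict inequality $z < y$ in sub-case \ref{item:lem:x-y-three-agents:4a} and verifying goodness of the middle piece via \cref{lem:move-value-p-mms-split} or \cref{lem:conditions-p-mms-split} — precisely where additivity is essential, since the counterexample after \cref{lem:sub-and-super-paths} shows monotone utilities are not enough. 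I would organize this final part into two symmetric sub-cases (shrink-left vs.\ shrink-right) driven by which outer bundle, $[0,x]$ or $[y,m]$, has the larger utility, and in each sub-case choose $z$ to be the appropriate extreme cut supplied by additivity, then discharge the four goodness checks (two outer, two middle) using the cited lemmas.
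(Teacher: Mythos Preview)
Your application of \cref{lem:sub-and-super-paths} in the first paragraph is invalid. With the PMMS partition $([0,c_1],[c_1,c_2])$ of $[0,c_2]$, the lemma (in its own notation) has $z=c_2$ and requires $c\le z=c_2$; you set $c=y$ but are given $y\ge c_2$, so the hypothesis fails whenever $y>c_2$. The mirrored application to $([c_1,c_2],[c_2,m])$ fails for the same reason (it would need $y\le c_2$). Indeed, if your claimed application worked, it would prove that \ref{item:lem:x-y-three-agents:1} \emph{always} holds, which is false and would trivialize the lemma.

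Consequently your derivation of \ref{item:lem:x-y-three-agents:3} has a real gap. The first inequality $u([0,x])<u([x,y])$ does follow immediately from the failure of \ref{item:lem:x-y-three-agents:1} via \cref{lem:tree-monotone-better-than-neighbours}; no appeal to \cref{lem:sub-and-super-paths} is needed there. But the second inequality $u([y,m])<u([x,y])$ does \emph{not} follow from ``(1) fails'' by any symmetry: (1) is a statement about $[0,x]$, not about $[y,m]$, and knowing only $u([0,x])<u([x,y])$ tells you nothing about $u([y,m])$ a priori. The paper establishes this second inequality by a genuine case analysis on whether $u([0,c_1])=u([0,x])$ and whether $u([c_2,y])>0$, each time invoking \cref{lem:conditions-p-mms-split} (or its corollary) on the appropriate two-bundle PMMS partition coming from $(c_1,c_2)$; in the main case one gets the chain $u([y,m])\le u([c_1,c_2])\le u([0,x])<u([x,y])$. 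This is exactly the step where the PMMS structure of $(c_1,c_2)$ is used nontrivially, and it is missing from your outline.

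Your plan for \ref{item:lem:x-y-three-agents:4} is structurally correct in that the paper also splits on whether $u([0,x])\ge u([y,m])$ or not. However, the paper does not take $z$ to be ``the largest value with $u([0,z])\le u([x,y])$''; instead it takes $z$ to be a PMMS cut of $[0,y]$ (in the first sub-case) or of $[x,m]$ (in the second). This choice makes the goodness of both resulting bundles almost automatic, since one bundle is good by the PMMS property itself and the other dominates the remaining outer bundle by the inequality assumed in that sub-case. Your proposed $z$ might be made to work, but you would then have to re-verify all four goodness conditions from scratch rather than getting two of them for free.
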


\begin{proof}
    Assume that \ref{item:lem:x-y-three-agents:1} does not hold. It must then hold that $u([0, x]) < u([x, y])$. To show that \ref{item:lem:x-y-three-agents:3} holds, we must show that also $u([y, m]) < u([x, y])$. Note that \ref{item:lem:x-y-three-agents:2} follows directly from \ref{item:lem:x-y-three-agents:3} and \cref{lem:tree-monotone-better-than-neighbours}.
    
    If $u([0, c_1]) = u([0, x])$ it must hold that $u([x, y]) > u([c_1, c_2])$. Otherwise, both $[c_1, x]$ and $[c_2, y]$ only contain items with zero utility, and $([0, x], [x, y])$ must form a PMMS partition of $[0, y]$, causing \ref{item:lem:x-y-three-agents:1} to hold. Since $u([c_1, x]) = 0$ and $(c_1, c_2)$ causes a PMMS partition for $u$, we know that $(x, c_2)$ also causes a PMMS partition for $u$. Since $u([x, y]) > u([c_1, c_2]) = u([x, c_2])$ it follows from \cref{cor:conditions-p-mms-split} that $u([x, y]) > u([y, m])$, and \ref{item:lem:x-y-three-agents:3} holds when $u([0, c_1]) = u([0, x])$.

    Now assume that $u([0, c_1]) < u([0, x])$. By \cref{lem:conditions-p-mms-split} it must hold that $u([0, x]) \ge u([c_1, c_2])$. Additionally, it must hold that $u([c_2, y]) > 0$, as otherwise by \cref{cor:conditions-p-mms-split} we have $u([0, x]) > u([x, c_2]) = u([x, y])$ and \ref{item:lem:x-y-three-agents:1} holds. Thus, by \cref{lem:conditions-p-mms-split} it holds that $u([y, m]) \le u([c_1, c_2])$, and we get that $u([x, y]) > u([0, x]) \ge u([c_1, c_2]) \ge u([y, m])$. Thus, \ref{item:lem:x-y-three-agents:3} holds when \ref{item:lem:x-y-three-agents:1} does not.

    Continuing to assume that \ref{item:lem:x-y-three-agents:1} does not hold, we wish to show that given \ref{item:lem:x-y-three-agents:2} and \ref{item:lem:x-y-three-agents:3}, \ref{item:lem:x-y-three-agents:4} holds. We will show that if $u([0, x]) \ge u([y, m])$, then \ref{item:lem:x-y-three-agents:4a} holds, and otherwise \ref{item:lem:x-y-three-agents:4b} holds.
    
    Assume that $u([0, x]) \ge u([y, m])$. We wish to show that there exists $z$ with $x < z < y$ such that in the partition caused by $(z, y)$ the bundles $[0, z]$ and $[z, y]$ are good for $u$. Since \ref{item:lem:x-y-three-agents:1} does not hold, for any PMMS partition $([0, z], [z, y])$ of $[0, y]$ for $u$ it must be that $\min\{u([0, z]), u([z, y])\} > u([0, x]) \ge u([y, m])$. Moreover, by monotonicity, $z > x$ and since $u([z, y]) > 0$, $z < y$. Since $u([z, y]) > u([y, m])$ it follows from \cref{lem:tree-monotone-better-than-neighbours} and that $([0, z], [z, y])$ is a PMMS partition, that both $[0, z]$ and $[z, y]$ are good for $u$ in the partition caused by $(z, y)$. Consequently, \ref{item:lem:x-y-three-agents:4a} holds in this case.

    Now, assume that $u([0, x]) < u([y, m])$. By \ref{item:lem:x-y-three-agents:3} $u([x, y]) > u([y, m])$ and for any PMMS partition $([x, z], [z, m])$ of $[x, m]$, it holds that 
    $$ 
    \min\{u([x, z]), u([z, m])\} \ge u([y, m]). 
    $$    
    Moreover, $u([y, m]) > u([0, x]) \ge 0$. Thus, neither of the bundles are empty and it holds that $x < z < m$. By monotonicity, there must be at least one PMMS partition of $[x, m]$ with $z \le y$, as either $([x, y], [y, m])$ is a PMMS partition, or $\min\{u([x, z]), u([z, m])\} > u([z, m])$ for any PMMS partition. Let $z$ be such that $([x, z], [z, m])$ is a PMMS partition of $[x, m]$ and $z \le y$. Then
    $u([x, z]) \ge u([y, m]) > u([0, x])$ and both $[x, z]$ and $[z, m]$ must be good for $u$ in the partition caused by $(x, z)$. Since $x < z \le y$, \ref{item:lem:x-y-three-agents:4b} holds in this case.
\end{proof}

We now have all the tools needed to show PMMS existence in the 3-agent case on a path.



\theoremThreeAgentsPath*

\begin{table}[t]
    \centering
    \begin{tabular}{ccc}
        \toprule
        \textbf{} & \textbf{Cut Order} & \textbf{Covering Case} \\
        \midrule
        \multirow{5}{*}{$c_2^2 \le c_1^3$} & $(c_1^1, c_2^1, c_1^2, c_2^2, c_1^3, c_2^3)$ & \multirow{5}{*}{Case 1} \\
        & $(c_1^1, c_1^2, c_2^1, c_2^2, c_1^3, c_2^3)$ & \\
        & $(c_1^1, c_1^2, c_2^2, c_2^1, c_1^3, c_2^3)$ & \\
        & $(c_1^1, c_1^2, c_2^2, c_1^3, c_2^1, c_2^3)$ & \\
        & $(c_1^1, c_1^2, c_2^2, c_1^3, c_2^3, c_2^1)$ & \\  
        \midrule
        \multirow{4}{*}{\begin{tabular}{c}$c_2^2 > c_1^3$ \\ $c_2^1 < c_1^3$\end{tabular}} & $(c_1^1, c_2^1, c_1^2, c_1^3, c_2^3, c_2^2)$ & \multirow{2}{*}{Case 2}\\
        & $(c_1^1, c_1^2, c_2^1, c_1^3, c_2^3, c_2^2)$ & \\ \cmidrule{2-3}
        & $(c_1^1, c_2^1, c_1^2, c_1^3, c_2^2, c_2^3)$ & Case 3 \\ \cmidrule{2-3}
        & $(c_1^1, c_1^2, c_2^1, c_1^3, c_2^2, c_2^3)$ & Case 4 \\ 
        \midrule
        \multirow{6}{*}{\begin{tabular}{c}$c_2^2 > c_1^3$ \\ $c_2^1 \ge c_1^3$\end{tabular}} & $(c_1^1, c_1^2, c_1^3, c_2^1, c_2^2, c_2^3)$ & Case 4 \\ \cmidrule{2-3}
        & $(c_1^1, c_1^2, c_1^3, c_2^1, c_2^3, c_2^2)$ & Case 5 \\ \cmidrule{2-3}
        & $(c_1^1, c_1^2, c_1^3, c_2^2, c_2^1, c_2^3)$ & \multirow{2}{*}{Case 6} \\
        & $(c_1^1, c_1^2, c_1^3, c_2^2, c_2^3, c_2^1)$ & \\ \cmidrule{2-3}
        & $(c_1^1, c_1^2, c_1^3, c_2^3, c_2^1, c_2^2)$ & \multirow{2}{*}{Case 7} \\
        & $(c_1^1, c_1^2, c_1^3, c_2^3, c_2^2, c_2^1)$ & \\
        \bottomrule
    \end{tabular}
    \caption{Cases in the proof of \cref{thr:three-agents-path}.}
    \label{tab:proof-three-agents-path-cases}
\end{table}

\begin{proof}
    Fix some PMMS partition for each agent and let $(c_1^1, c_2^1)$, $(c_1^2, c_2^2)$ and $(c_1^3, c_2^3)$ be the tuples that cause these PMMS partitions for agent 1, 2 and 3, respectively. We assume w.l.o.g.\ that $c_1^1 \le c_1^2 \le c_1^3$, as otherwise the agents can be renamed in this manner. We now consider seven different cases, depending on the value of $c_2^1$, $c_2^2$ and $c_2^3$ in relation to each other. These seven cases cover all 15 possible orderings $(a, b, c, d, e, f)$ of the cut points in the PMMS partitions, when $c_1^1 \le c_1^2 \le c_1^3$, as can be seen from \cref{tab:proof-three-agents-path-cases}. The orderings are also illustrated in \cref{fig:cases-path} in the appendix, where the PMMS partition of agent 3 is on top (yellow), agent 2 in the middle (red) and agent 1 at the bottom (blue).

    \paragraph{Case 1: $\mathbf{c_2^2 \le c_1^3}$} We know that $[c_1^3, c_2^3] \subseteq [c_1^3, m] \subseteq [c_2^2, m]$ and $[c_1^2, c_2^2] \subseteq [0, c_1^3]$. By \cref{lem:sub-and-super-paths} it follows that the bundle $[c_2^2, m]$ is good for agent 3 in the partition caused by $(c_1^2, c_2^2)$. If $c_2^2 \le c_2^1$, we have that $[0, c_1^1] \subseteq [0, c_1^2]$ and $[c_1^2, c_2^2] \subseteq [c_1^1, c_2^1]$. By \cref{lem:sub-and-super-paths} it follows that the bundle $[0, c_1^2]$ is good for agent 1 in the partition $(c_1^2, c_2^2)$. Thus, a PMMS connected allocation exists by \ref{item:lem:3-agents-sufficient-conditions:3} in \cref{lem:3-agents-sufficient-conditions}. Else $c_2^2 > c_2^1$ and by \cref{lem:x-y-three-agents} at least one of bundles $[0, c_1^2]$ and $[c_1^2, c_2^2]$ is good for agent 1 in the partition $(c_1^2, c_2^2)$ and a PMMS connected allocation exists by \ref{item:lem:3-agents-sufficient-conditions:2} or \ref{item:lem:3-agents-sufficient-conditions:3} in \cref{lem:3-agents-sufficient-conditions}.

    \paragraph{Case 2: $\mathbf{c_2^1 < c_1^3 < c_2^2}$ and $\mathbf{c_2^3 \le c_2^2}$}
    We know that $[c_1^3, c_2^3] \subseteq [c_1^2, c_2^2]$. Thus, both $[0, c_1^2] \subseteq [0, c_1^3]$ and $[c_2^2, m] \subseteq [c_2^3, m]$. By \cref{lem:sub-and-super-paths} both $[0, c_1^3]$ and $[c_2^3, m]$ are good for agent 2 in the partition $(c_1^3, c_2^3)$. Since $(c_1^3, c_2^3)$ is a PMMS partition of agent 3, a PMMS connected allocation exists by \ref{item:lem:3-agents-sufficient-conditions:2} in \cref{lem:3-agents-sufficient-conditions}.

    \paragraph{Case 3: $\mathbf{c_2^1 < c_1^3 < c_2^2 < c_2^3}$ and $\mathbf{c_2^1 \le c_1^2}$}
    We know that $[c_1^2, m] \subseteq [c_2^1, m]$ and $[c_1^1, c_2^1] \subseteq [0, c_1^2]$. Thus by \cref{lem:sub-and-super-paths}, $[0, c_1^2]$ is good for agent 1 in the partition caused by $(c_1^2, c_2^2)$. Since $c_1^2 \le c_1^3$ and $c_2^2 < c_2^3$ it follows from \cref{lem:x-y-three-agents} that at least one of $[c_1^2, c_2^2]$ and $[c_2^2, m]$ is good for agent 3 in the partition caused by $(c_1^2, c_2^2)$. Thus, a PMMS connected allocation exists by \ref{item:lem:3-agents-sufficient-conditions:2} or \ref{item:lem:3-agents-sufficient-conditions:3} in \cref{lem:3-agents-sufficient-conditions}.

    \paragraph{Case 4: $\mathbf{c_2^1 < c_2^2 < c_2^3}$ and $\mathbf{c_2^2 \ge c_1^3}$} Since $c_1^1 \le c_1^2$ and $c_2^1 \le c_2^2$ we know by \cref{lem:x-y-three-agents} that at least one of $[0, c_1^2]$ and $[c_1^2, c_2^2]$ is good for agent 1 in the partition caused by $(c_1^2, c_2^2)$. Similarly, since $c_1^2 \le c_1^3$ and $c_2^2 \le c_2^3$ we know by \cref{lem:x-y-three-agents} that at least one of $[c_2^2, m]$ and $[c_1^2, c_2^2]$ is good for agent 3 in the partition caused by $(c_1^2, c_2^2)$. If the case is that $[0, c_1^2]$ is good for agent 1, then a PMMS allocation exists by \ref{item:lem:3-agents-sufficient-conditions:3} in \cref{lem:3-agents-sufficient-conditions}. Otherwise, by \ref{item:lem:x-y-three-agents:4} in \cref{lem:x-y-three-agents} one of \ref{item:lem:x-y-three-agents:4a} and \ref{item:lem:x-y-three-agents:4b} holds. If \ref{item:lem:x-y-three-agents:4a} holds there exists a $z$ with $c_1^2 < z < c_2^2$ such that both $[0, z]$ and $[z, c_2^2]$ are good for agent 1 in the partition caused by $(z, c_2^2)$. Since $[z, c_2^2] \subset [c_1^2, c_2^2]$ and $[0, c_1^2] \subset [0, z]$, by \cref{lem:sub-and-super-paths} both $[0, z]$ and $[c_2^2, m]$ are good for agent 2 in the partition caused by $(z, c_2^2)$. If \ref{item:lem:x-y-three-agents:4b} holds there exists a $z$ with $c_1^2 < z \le c_2^2$ both $[c_1^2, z]$ and $[z, m]$ are good for agent 1 in the partition caused by $(c_1^2, z)$. Since $[c_1^2, z] \subset [c_1^2, c_2^2]$ and $[c_2^2, m] \subset [z, m]$, by \cref{lem:sub-and-super-paths} both $[0, c_1^2]$ and $[z, m]$ are good for agent 2 in the partition caused by $(c_1^2, z)$. Since agents 1 and 2 prefer different pairs of bundles in the suggested partition in both cases, a PMMS connected allocation exist by \ref{item:lem:3-agents-sufficient-conditions:1} in \cref{lem:3-agents-sufficient-conditions}.
    
    \vspace{\baselineskip}
    \noindent We can now in the remaining cases assume that $c_1^3 \le c_2^1$ and $c_1^3 \le c_2^2$.
    
    \paragraph{Case 5: $\mathbf{c_2^1 \le c_2^3 \le c_2^2}$}
    We know that both $[c_2^2, m] \subseteq [c_2^3, m]$ and $[c_1^3, c_2^3] \subseteq [c_1^2, c_2^2]$. By \cref{lem:sub-and-super-paths} the bundle $[c_2^3, m]$ is good for agent 2 in the partition caused by $(c_1^3, c_2^3)$. Since $c_1^1 \le c_1^3$ and $c_2^1 \le c_2^3$ by \ref{item:lem:x-y-three-agents:1} and \ref{item:lem:x-y-three-agents:2} in \cref{lem:x-y-three-agents} we know that at least one of $[0, c_1^3]$ and $[c_1^3, c_2^3]$ is good for agent 1 in the partition caused by $(c_1^3, c_2^3)$. Thus, a PMMS connected allocation exists by \ref{item:lem:3-agents-sufficient-conditions:2} or \ref{item:lem:3-agents-sufficient-conditions:3} in \cref{lem:3-agents-sufficient-conditions}.

    \paragraph{Case 6: $\mathbf{c_2^3 \le c_2^1}$ and $\mathbf{c_2^3 \le c_2^2}$}
    We know that $[c_1^3, c_2^3] \subseteq [c_1^1, c_2^1]$ and $[c_1^3, c_2^3] \subseteq [c_1^2, c_2^2]$. Thus, it follows that $[0, c_1^1] \subseteq [0, c_1^3]$ and $[c_2^2, m] \subseteq [c_2^3, m]$. By \cref{lem:sub-and-super-paths} we therefore know that $[0, c_1^3]$ is good for agent 1 and $[c_2^3, m]$ for agent 2 in the partition caused by $(c_1^3, c_2^3)$. Thus, a PMMS connected allocation exists by \ref{item:lem:3-agents-sufficient-conditions:2} or \ref{item:lem:3-agents-sufficient-conditions:3} in \cref{lem:3-agents-sufficient-conditions}.

    \paragraph{Case 7: $\mathbf{c_2^2 \le c_2^1}$ and $\mathbf{c_2^2 \le c_2^3}$}
    We know that $[0, c_1^1] \subseteq [0, c_1^2]$ and $[c_1^2, c_2^2] \subseteq [c_1^1, c_2^1]$. By \cref{lem:sub-and-super-paths}, the bundle $[0, c_1^2]$ is good for agent 1 in the partition caused by $(c_1^2, c_2^2)$. Since $c_1^2 \le c_1^3$ and $c_2^2 \le c_2^3$, by \ref{item:lem:x-y-three-agents:1} and \ref{item:lem:x-y-three-agents:2} in \cref{lem:x-y-three-agents} we know that at least one of $[c_1^2, c_2^2]$ or $[c_2^2, m]$ is good for agent 3 in the partition caused by $(c_1^2, c_2^2)$. Thus, a PMMS connected allocation exists by \ref{item:lem:3-agents-sufficient-conditions:2} or \ref{item:lem:3-agents-sufficient-conditions:3} in \cref{lem:3-agents-sufficient-conditions}.
    \vspace{\baselineskip}
    
    \noindent    
    \Cref{lem:SMMS-additive-trees-polynomial} allows us to find a PMMS partition for each of the three agents in polynomial time. Given these PMMS partitions, the order of the cut points and consequently the applicable case can easily be determined. We can verify that each of the cases can be performed in polynomial time, as they consist of checking if simple properties hold and in certain cases moving a cut point along a path until some simple properties hold. Therefore, a PMMS allocation can be found in polynomial time.
\end{proof}
 
\section{Identical utilities}\label{sec:identical}

When agents have identical utilities, a PMMS and MMS connected allocation is guaranteed to exist for any graph. The proof is similar to that of Theorem~4.2 in \citet{Plaut:20}. 

\begin{theorem}\label{lem:identical-utility-functions}
For a connected graph $G$ and $n$ agents with identical utilities, a PMMS and MMS connected allocation exists.
\end{theorem}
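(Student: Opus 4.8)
The plan is to prove that \emph{any leximin connected allocation} $A$ is simultaneously PMMS and MMS; such an allocation exists since $G$ has only finitely many connected allocations and at least one. I would first isolate the single technical ingredient, an \textbf{exchange lemma}: if $A$ is leximin and $i\neq j$, one cannot repartition $M:=A_i\cup A_j$ into two connected bundles $(X,Y)$ of $G[M]$ and reassign $X,Y$ to $i,j$ (freezing every other bundle) so that $\min\{u(X),u(Y)\}>t:=\min\{u(A_i),u(A_j)\}$. To see this, observe that such a move leaves the multiplicity of every utility value $<t$ in the sorted vector ${\bf u}(\cdot)$ unchanged, while strictly decreasing the multiplicity of the value $t$ (at least one copy of $t$, namely the one contributed by $A_i$ or $A_j$, is destroyed and no value $\le t$ is created); hence the first coordinate at which ${\bf u}$ changes increases, i.e.\ we get a leximin improvement, contradicting leximinality of $A$. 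Since the repartitioned bundles are connected in $G[M]\subseteq G$ and all other bundles are untouched, this move produces a valid connected allocation, as required. Note this argument uses identicalness of utilities only to speak of a single function $u$, and never uses additivity.

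Given the lemma, I would handle MMS as follows: the first coordinate of ${\bf u}(\cdot)$ is the minimum utility, so the absence of a leximin improvement with $k=1$ already gives ${\bf u}(A)_1 \ge {\bf u}(B)_1$ for every connected allocation $B$; taking $B$ to be an MMS partition of $V$ (a connected allocation with ${\bf u}(B)_1 = \MMS(V)$) yields $u(A_i)\ge \MMS(V)$ for all $i$. For PMMS, I would take any pair $i,j$ with $A_i=\emptyset$ or $i,j$ adjacent under $A$; in either case $M=A_i\cup A_j$ is connected ($M=A_j$ in the first case, and $M$ contains an edge between $A_i$ and $A_j$ in the second), so a partition $(X,Y)$ of $M$ into connected parts with $\min\{u(X),u(Y)\}=\PMMS(M)$ exists. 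If PMMS failed for this pair, $u(A_i)<\PMMS(M)$, and then
\[
  \min\{u(A_i),u(A_j)\}\le u(A_i)<\PMMS(M)=\min\{u(X),u(Y)\},
\]
so reassigning $M$ between $i$ and $j$ as $(X,Y)$ would violate the exchange lemma; hence $u(A_i)\ge\PMMS(A_i\cup A_j)$ for every such pair.

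The step I expect to be the only real work is the exchange lemma — precisely, arguing that raising the minimum utility within one pair while all other bundles stay fixed strictly advances the leximin order as it is defined in Section~\ref{sec:model}. Everything else (existence of a leximin allocation, connectivity of the repartitioned bundles, the case split $A_i=\emptyset$ vs.\ adjacent) is routine, and since the single allocation $A$ witnesses both properties, the theorem follows at once.
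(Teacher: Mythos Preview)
Your proposal is correct and follows essentially the same approach as the paper: take a leximin connected allocation, argue that a PMMS violation between any eligible pair $i,j$ yields a leximin-improving repartition of $A_i\cup A_j$, and handle MMS by noting that a leximin allocation maximizes the minimum utility, hence dominates any MMS partition. Your write-up is simply more explicit than the paper's (spelling out the multiplicity argument for the leximin improvement and the connectivity of $A_i\cup A_j$), but the core idea is identical.
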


\begin{proof}
    Let $A$ be a connected leximin allocation. Assume that $A$ is not PMMS. Let $i$ and $j$ be a pair of agents for which PMMS is not satisfied and $(A'_i, A'_j)$ a PMMS partition of $A_i \cup A_j$. Let $A'$ be the allocation obtained by exchanging $A_i$ and $A_j$ for $A'_i$ and $A'_j$ in $A$. Then $u(A'_k) = u(A_k)$ for every agent $k \in N \setminus \{i, j\}$, and $\min\{u(A'_i), u(A'_j)\} > \min\{u(A_i), u(A_j)\}$. Thus, $A$ is not leximin, a contradiction, and it follows that $A$ must be PMMS. Moreover, $A$ is MMS, as otherwise any MMS allocation, of which at least one exists for identical utilities, would be a leximin improvement of $A$.  
\end{proof}

\Cref{lem:identical-utility-functions} shows that a leximin allocation is both MMS and PMMS for identical utilities, no matter the utility function and graph. For many combinations of graph classes and utility functions, finding a leximin allocation is hard. However, with monotone utilities on a path, a leximin allocation can be found in polynomial time \cite{BiloCaFl22}. This yields the following corollary.

\begin{corollary}\label{cor:PMMS+MMS:path:monotone}
    For a path and $n$ agents with identical monotone utilities, a PMMS and MMS connected allocation can be found in polynomial time.
\end{corollary}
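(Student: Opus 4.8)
The plan is to derive the statement directly from \Cref{lem:identical-utility-functions} together with the known polynomial-time algorithm for computing leximin connected allocations on paths. The key observation is that the proof of \Cref{lem:identical-utility-functions} does more than assert existence: it shows that \emph{every} leximin connected allocation is simultaneously PMMS and MMS when the agents share a common monotone utility function. Hence it suffices to exhibit one leximin connected allocation and to argue that it can be produced efficiently when the item graph is a path.

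First I would invoke the result of \citet{BiloCaFl22}, which gives a polynomial-time algorithm that, on a path with $n$ agents having identical monotone utilities, outputs a connected allocation that is leximin-optimal among all connected allocations. The algorithm needs only the single shared utility function (accessed via a value oracle), so the identical-utilities hypothesis is exactly what is available. Running it yields, in polynomial time, a connected allocation $A$ that is leximin.

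Then I would apply \Cref{lem:identical-utility-functions} to $A$. Its argument runs as follows: if $A$ were not PMMS, there would be a pair of agents $i,j$ and a connected two-partition $(A'_i,A'_j)$ of $A_i\cup A_j$ with $\min\{u(A'_i),u(A'_j)\} > \min\{u(A_i),u(A_j)\}$; since $A_i$ and $A_j$ are connected and adjacent on the path, $A_i\cup A_j$ is connected, so replacing $A_i,A_j$ by $A'_i,A'_j$ and leaving every other bundle unchanged keeps the allocation connected and is a leximin improvement, contradicting leximin-optimality. The same leximin-optimality rules out $A$ failing MMS (an MMS connected allocation, guaranteed to exist for identical utilities, would otherwise leximin-improve $A$). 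Therefore the computed $A$ is both PMMS and MMS, and the whole procedure runs in polynomial time.

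I expect no substantive obstacle: the combinatorial work is already encapsulated in \Cref{lem:identical-utility-functions} and in the leximin algorithm of \citet{BiloCaFl22}. The only points worth a line of verification are that the algorithm of \citet{BiloCaFl22} is leximin-optimal over the restricted family of \emph{connected} allocations (rather than over all allocations), and that the leximin-improvement step used in \Cref{lem:identical-utility-functions} preserves connectivity on a path — both of which hold for the reasons noted above.
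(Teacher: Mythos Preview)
Your proposal is correct and matches the paper's own argument: the paper likewise observes that the proof of \Cref{lem:identical-utility-functions} shows any leximin connected allocation is PMMS and MMS under identical utilities, and then invokes \citet{BiloCaFl22} to compute such an allocation on a path in polynomial time. Your additional remarks verifying that the leximin is taken over connected allocations and that the local swap preserves connectivity are sound and simply make explicit what the paper leaves implicit.
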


Note that if \cref{cor:PMMS+MMS:path:monotone} is relaxed to require only PMMS, there exists a moving-knife style algorithm that solves the problem (see \cref{sec:moving-knife}). Also, note that \cref{lem:identical-utility-functions} can be used to show the existence of a PMMS and MMS connected allocation when one of the agents is different from the rest and has additive utilities. If the graph is a tree, it is sufficient for PMMS that the utility function is monotone.

\begin{lemma}
    If all but one agent have identical utility function $u_1$, and the last agent has additive utility function $u_2$, a PMMS and MMS connected allocation exists.
\end{lemma}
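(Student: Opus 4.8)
The plan is to reduce to the identical-utilities case by treating all $n$ agents as if they shared the utility function $u_1$, invoking \cref{lem:identical-utility-functions} for that auxiliary instance, and then letting the one special agent (call her agent $n$, with additive utility $u_2$) pick her favourite bundle from the resulting partition.

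Concretely, I would first apply \cref{lem:identical-utility-functions} to the auxiliary instance in which every one of the $n$ agents has utility $u_1$. Its proof produces a connected leximin allocation $(B_1,\dots,B_n)$ that is simultaneously PMMS and MMS for $u_1$; in particular, for any two bundles $B_p,B_q$ that are adjacent in $G$ (or with $B_p=\emptyset$) we have $\min\{u_1(B_p),u_1(B_q)\}\ge \PMMS_{u_1}(B_p\cup B_q)$, and $u_1(B_k)\ge \MMS_{u_1}(V)$ for every $k$. Next, let agent $n$ take a bundle $B^*\in\argmax_k u_2(B_k)$, and distribute the remaining $n-1$ bundles among the identical agents by an arbitrary bijection.

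It then remains to verify PMMS and MMS for the resulting allocation. For each of the $n-1$ identical agents, every pairwise constraint she is involved in --- whether against another identical agent or against agent $n$ --- concerns two bundles of $(B_1,\dots,B_n)$ that are adjacent in $G$ (or one empty), so it is inherited verbatim from the PMMS property of the auxiliary leximin allocation; likewise $u_1(B_p)\ge \MMS_{u_1}(V)$ gives these agents MMS. For agent $n$ and a neighbour holding bundle $B_q$, additivity of $u_2$ together with the choice of $B^*$ gives $u_2(B^*)\ge u_2(B_q)$, hence $u_2(B^*)\ge \tfrac12 u_2(B^*\cup B_q)\ge \PMMS_n(B^*\cup B_q)$ by the observation that $\tfrac12 u_i(X)\ge \PMMS_i(X)$ under additive utilities (the empty-bundle clause of the definition is handled the same way, since then $u_2(V)=0$). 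Finally, $u_2(B^*)=\max_k u_2(B_k)\ge \tfrac1n u_2(V)\ge \MMS_n(V)$, where the last step uses that the minimum part of any connected $n$-partition is at most the average under additive utilities.

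The argument is largely bookkeeping; the one place that needs care is agent $n$, where the single favourite-bundle choice must be shown to secure PMMS and MMS at the same time --- PMMS through the factor-$\tfrac12$ slack available for additive utilities, and MMS through the averaging bound $\MMS_n(V)\le \tfrac1n u_2(V)$. Both of these fail for general monotone $u_2$, which is precisely why the statement assumes $u_2$ additive; as noted immediately after the lemma, on a tree one can instead invoke \cref{lem:tree-monotone-better-than-neighbours} to recover PMMS (though not MMS) for monotone $u_2$.
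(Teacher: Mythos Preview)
Your proposal is correct and follows essentially the same approach as the paper: take a leximin (hence PMMS and MMS) allocation for $u_1$, let the special agent pick her $u_2$-favourite bundle, and verify PMMS/MMS for her via additivity. The only cosmetic difference is in the MMS step for agent $n$: you pass through the average bound $\MMS_n(V)\le \tfrac1n u_2(V)\le \max_k u_2(B_k)$, whereas the paper argues directly that no connected $n$-partition can have every bundle strictly above $u_2(B^*)$ --- the two are the same averaging argument phrased differently.
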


\begin{proof}
    Let $A$ be a leximin allocation for $u_1$ and $A_i$ the bundle in $A$ with the highest utility for $u_2$. Since $u_2$ is additive, it holds that $u_2(A_i) \ge u_2(A_i \cup A_j)/2 \ge u_2(A_j)$ for every bundle $A_j$ adjacent to $A_i$. Further, $\PMMS(A_i \cup A_j) \le u_2(A_i \cup A_j)/2$. Hence, giving $A_i$ to the agent with utility function $u_2$ and the remaining bundles to the other agents is a PMMS connected allocation. Moreover, by \cref{lem:identical-utility-functions} the allocation is MMS if the agent with utility function $u_2$ receives a bundle with utility at least equal to her MMS. Since $u_2$ is additive, and $A_i$ is the bundle in the allocation with the highest utility for $u_2$, it follows that there cannot be any partition in which every bundle has a utility of surpassing $u_2(A_i)$. Consequently, the agent receives a bundle with utility at least equal to her MMS and the allocation is MMS.
\end{proof}

\begin{lemma}
    If all but one agent have identical utility function $u_1$, the last agent has monotone utility function $u_2$ and the graph is a tree, then a PMMS connected allocation exists.
\end{lemma}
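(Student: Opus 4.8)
The plan is to combine \cref{lem:identical-utility-functions} with \cref{lem:tree-monotone-better-than-neighbours}. Write agent $n$ for the agent with utility $u_2$ and agents $1,\dots,n-1$ for the agents sharing the utility $u_1$. First I would pass to the auxiliary instance in which \emph{all} $n$ agents have utility $u_1$ and, invoking \cref{lem:identical-utility-functions}, fix a connected allocation $A=(A_1,\dots,A_n)$ that is PMMS for that instance: for every pair of agents $k,l$ with $A_k=\emptyset$ or with $A_k$ adjacent to $A_l$ under $A$, $u_1(A_k)\ge\PMMS_k(A_k\cup A_l)$, where $\PMMS$ is taken with respect to $u_1$. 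The key observation is that this is a property of the partition of $V$ into the bundles $A_1,\dots,A_n$ alone, independent of which agent holds which bundle.

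Second, among $A_1,\dots,A_n$ I would choose a bundle $A_{i^*}$ of maximum $u_2$-value, hand $A_{i^*}$ to agent $n$, and distribute the remaining $n-1$ bundles arbitrarily among agents $1,\dots,n-1$; call the resulting connected allocation $A'$. To check that $A'$ is PMMS I would split into two cases. For an identical agent $k$ with $A'_k=\emptyset$, or with $A'_k$ adjacent to some bundle $A'_l$ under $A'$: since $A'$ merely permutes the bundles of $A$ and adjacency depends only on the bundles, the inequality $u_1(A'_k)\ge\PMMS_k(A'_k\cup A'_l)$ is exactly the PMMS property of $A$ applied to the corresponding $A$-bundles. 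For agent $n$, holding $A'_n=A_{i^*}$ and neighbouring some bundle $A'_l$ under $A'$: the union $A_{i^*}\cup A'_l$ is a connected subgraph of the tree $G$, hence itself a tree, and $u_2(A_{i^*})\ge u_2(A'_l)$ by the choice of $i^*$, so \cref{lem:tree-monotone-better-than-neighbours} applied with the utility $u_2$ gives $u_2(A_{i^*})\ge\PMMS_n(A_{i^*}\cup A'_l)$. This is the only place where the tree hypothesis is genuinely used: on a general graph a $2$-partition of $A_{i^*}\cup A'_l$ into connected parts need not place one of its parts inside $A_{i^*}$ or inside $A'_l$, so \cref{lem:tree-monotone-better-than-neighbours} fails there; this is also why the additivity-based argument used for the previous lemma (where $u_2(A_{i^*})\ge u_2(A_{i^*}\cup A'_l)/2\ge\PMMS_n(A_{i^*}\cup A'_l)$) cannot simply be re-run with plain monotonicity.

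The only delicate point, and the main (minor) obstacle, is the empty-bundle clause of PMMS, which forces an agent with an empty bundle to compare against \emph{every} other agent rather than only its neighbours. For the identical agents this is already absorbed into the PMMS property of $A$ supplied by \cref{lem:identical-utility-functions} (the part of its proof that relies on leximin rather than on \cref{lem:tree-monotone-at-least-one}). For agent $n$, if $A_{i^*}=\emptyset$ then, since $A_{i^*}$ maximizes $u_2$, every $A_l$ satisfies $u_2(A_l)\le u_2(\emptyset)=0$, so by monotonicity $u_2$ vanishes on every subset of every $A_l$; hence $\PMMS_n(A'_n\cup A'_l)=\PMMS_n(A'_l)=0\le u_2(A'_n)$ for all $l$. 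With this case dispatched, $A'$ is a PMMS connected allocation, and I do not anticipate any deeper difficulty.
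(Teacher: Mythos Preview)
Your proposal is correct and follows essentially the same approach as the paper: start from a PMMS allocation for $n$ identical $u_1$-agents (\cref{lem:identical-utility-functions}), then hand the $u_2$-agent a bundle that is ``good'' for $u_2$ against all its neighbours. The only cosmetic difference is that the paper invokes \cref{lem:tree-monotone-at-least-one} directly, whereas you inline its proof by picking the bundle of maximum $u_2$-value and applying \cref{lem:tree-monotone-better-than-neighbours} to each neighbour; your added discussion of the empty-bundle clause is a detail the paper leaves implicit.
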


\begin{proof}
    By \cref{lem:identical-utility-functions} a PMMS connected allocation exists when all agents have an identical utility function. Let $A$ be any such allocation for $n$ agents with utility function $u_1$. By \cref{lem:tree-monotone-at-least-one} there is at least one bundle $A_i \in A$ that is worth PMMS for $u_2$ when compared to any of its adjacent bundles in $A$. Thus, allocating $A_i$ to the agent with utility function $u_2$ and allocating the remaining $n - 1$ bundles to the remaining $n - 1$ agents in any way produces a PMMS connected allocation.
\end{proof}

\subsection{$\mathbf{3/4}$-PMMS on general graphs}\label{sec:PMMS:identical:anygraph}
Finding a leximin allocation is strongly NP-hard for identical utilities~\cite{Garey:90}.
Unless P$=$NP, this excludes the existence of a pseudo-polynomial time algorithm for finding a PMMS connected allocation via a leximin allocation even when agents have identical utilities. We show on the other hand that there exists a pseudo-polynomial time algorithm for a $(3/4)$-PMMS connected allocation based on a sequence of local improvements between pairs of neighboring agents.

\begin{restatable}{theorem}{thmPMMSpseudopoly}\label{thm:PMMS-pseudopolynomial}
For a connected graph $G$ and $n$ agents with identical additive utility function $u \colon V \rightarrow \mathbb{Z}_{+}$, a $(3/4)$-PMMS connected allocation can be found in pseudo-polynomial time.
\end{restatable}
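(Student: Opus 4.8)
The goal is a pseudo-polynomial algorithm producing a $(3/4)$-PMMS connected allocation for $n$ agents sharing an identical additive utility $u \colon V \to \mathbb{Z}_+$. Since the utilities are integer-valued, the total utility $U = u(V)$ is polynomial in the input magnitude, so a potential-function argument that increases an integer-valued quantity bounded by a polynomial in $U$ at each step will terminate in pseudo-polynomial time. The natural potential here is a lexicographic / leximin-type vector of the bundle utilities: take an arbitrary connected allocation $A = (A_1, \dots, A_n)$ (e.g. give everything to one agent and empty bundles to the rest is not connected-friendly, so instead start from any connected partition, which exists for any connected $G$, e.g. by repeatedly peeling leaves of a spanning tree), and repeatedly look for a \emph{local improvement} between two neighboring bundles.

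\textbf{The local improvement step.} Suppose $A$ is \emph{not} $(3/4)$-PMMS: there is a pair $i,j$ with $A_i = \emptyset$ or $i,j$ adjacent under $A$, and $u(A_i) < \tfrac34 \PMMS(A_i \cup A_j)$. The key subroutine is to re-divide the connected subgraph $G[A_i \cup A_j]$ — which is connected since $i,j$ are adjacent (or $A_i=\emptyset$, in which case $A_i \cup A_j = A_j$ is connected) — into two connected pieces $(B, C)$ such that $\min\{u(B), u(C)\} \ge \tfrac34 \PMMS(A_i \cup A_j)$ and, more importantly, such that the multiset $\{u(B), u(C)\}$ is a \emph{leximin improvement} over $\{u(A_i), u(A_j)\}$, i.e. the smaller of $u(B),u(C)$ strictly exceeds $\min\{u(A_i),u(A_j)\} = u(A_i)$. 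Here is where the earlier machinery does the heavy lifting: $G[A_i \cup A_j]$ is a connected graph with two agents having identical additive utility, so by \cref{thm:halfPMMS-polynomial:twoagents:identical} (and its proof via Algorithm~\ref{alg:PMMS:two:biconnected} and the block decomposition of \cref{lem:PMMS:two,lem:PMMS:two:biconnected}) we can compute in polynomial time a $(3/4)$-PMMS connected partition $(B,C)$ of $G[A_i \cup A_j]$. By definition of $(3/4)$-PMMS on this two-agent instance, $\min\{u(B),u(C)\} \ge \tfrac34 \PMMS(A_i \cup A_j) > u(A_i) = \min\{u(A_i),u(A_j)\}$, so replacing $A_i, A_j$ by $B, C$ (assign $i$ the piece adjacent to the rest of its former neighbors if needed — actually assignment within the pair is free) yields a connected allocation whose sorted utility vector strictly leximin-dominates that of $A$.

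\textbf{Termination and correctness.} Each local improvement strictly increases the sorted utility vector $\mathbf{u}(A)$ in the leximin order; since all utilities are nonnegative integers bounded by $U = u(V)$, the number of distinct sorted vectors is at most $\binom{U+n}{n}$, which is not obviously polynomial — so the cruder bound needs care. A cleaner argument: track the integer potential $\Phi(A) = \sum_{k} (n+1)^{-\,?}$ won't be integral, so instead bound the number of improvements directly by noting that the \emph{minimum bundle utility} $\min_k u(A_k)$ is nondecreasing and, each time it stays the same, the number of bundles attaining that minimum strictly decreases; since $\min_k u(A_k) \le U/n$ and the count is at most $n$, the total number of steps is $O(nU)$ — pseudo-polynomial. (One must double-check that an improvement never \emph{decreases} $\min_k u(A_k)$: it doesn't, because both new pieces have utility $\ge \tfrac34\PMMS(\cdot) > u(A_i) \ge \min_k u(A_k)$ and no other bundle changes.) When the loop halts, no improving pair exists, which is exactly the statement that $A$ is $(3/4)$-PMMS. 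Each iteration costs polynomial time (one run of the two-agent algorithm plus a scan over all $O(n^2)$ pairs to find a violating one), giving an overall pseudo-polynomial running time.

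\textbf{Main obstacle.} The delicate point is the termination bound: naively "the leximin vector strictly increases" only bounds the number of iterations by the number of sorted integer vectors, which is exponential in $n$ for large $U$. The fix is the refined monotonicity argument above — $\min_k u(A_k)$ is monotone and its multiplicity drops whenever it is flat — which I expect to require the observation that the improvement guarantees $\min\{u(B),u(C)\} > u(A_i)$ \emph{strictly} (not just $\ge$), and this strictness is exactly what \cref{thm:halfPMMS-polynomial:twoagents:identical} provides only when $\PMMS(A_i\cup A_j) > 0$; the case $\PMMS(A_i\cup A_j) = 0$ forces $u(A_i) \ge 0 = \tfrac34\PMMS(A_i\cup A_j)$, so no violation occurs there, and the argument is consistent. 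A secondary subtlety is ensuring connectivity of the \emph{global} allocation after a local swap — this is immediate because only the vertex set $A_i \cup A_j$ is repartitioned and its two new parts are each connected, while all other bundles are untouched, and adjacencies between the pair's parts and the rest of the graph are irrelevant to connectivity of individual bundles.
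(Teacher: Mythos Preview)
Your algorithmic skeleton is exactly the paper's: start from an arbitrary connected allocation and, whenever some pair $(i,j)$ fails $(3/4)$-PMMS, re-split $A_i\cup A_j$ using the polynomial-time two-agent routine of \cref{thm:halfPMMS-polynomial:twoagents:identical}. Correctness upon termination and the per-iteration cost are fine.

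The genuine gap is your termination bound. Your refined claim --- that whenever $\min_k u(A_k)$ stays the same, the number of bundles attaining that minimum strictly decreases --- is false. Suppose the global minimum $m$ is attained only by a bundle $A_\ell$ that happens to satisfy $(3/4)$-PMMS against all its current neighbours, while some other pair $(i,j)$ with $u(A_i)>m$ violates it. Improving $(i,j)$ replaces $(A_i,A_j)$, both of value strictly above $m$, by $(B,C)$, both of value strictly above $u(A_i)>m$; the minimum is still $m$, still attained only by $A_\ell$, so neither branch of your dichotomy fires. Nothing prevents this from happening for many consecutive steps, so the $O(nU)$ bound is unjustified.

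The paper repairs this with a different potential: $\Phi(A)=\sum_k u(A_k)^2$. If the two-agent routine returns $(X,Y)$ with $u(X)\le u(Y)$ and we set $\alpha=u(X)-u(A_i)>0$, then by additivity the pair's values change from $(u(A_i),u(A_j))$ to $(u(A_i)+\alpha,\,u(A_j)-\alpha)$, and since $u(X)\le\tfrac12 u(A_i\cup A_j)$ one gets $u(A_j)-u(A_i)-\alpha\ge 1$. Hence $\Phi$ drops by $2\alpha\bigl(u(A_j)-u(A_i)-\alpha\bigr)\ge 2$. As $\Phi$ is a nonnegative integer bounded by $n\,u(V)^2\le n(|V|u_{\max})^2$, the number of iterations is pseudo-polynomial. (A secondary point: your while-condition ``$u(A_i)<\tfrac34\PMMS(A_i\cup A_j)$'' is not polynomially checkable, since computing $\PMMS$ is NP-hard; the paper instead tests ``$u(A_i)<g(A_i\cup A_j)$'' where $g$ is the smaller value in the two-agent routine's output --- poly-time computable and always $\ge\tfrac34\PMMS$, so the loop still exits only at a $(3/4)$-PMMS allocation.)
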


\begin{proof}
For each connected subset $V'$ of vertices, we write 
$$
g(V')=\min \{u(X),u(Y)\}
$$
for a $(3/4)$-PMMS connected allocation $(X,Y)$ of $V'$ guaranteed to be polynomial-time computable in Theorem~\ref{thm:halfPMMS-polynomial:twoagents:identical} for the induced graph $G[V']$ and identical utility $u$. Consider Algorithm~\ref{alg:halfPMMS:n:identical}. 
\begin{algorithm}                      
\caption{Algorithm for constructing $3/4$-PMMS connected allocation for a connected graph $G$ and $n$ agents with identical utilities $u$}         
\label{alg:halfPMMS:n:identical}                          
\begin{algorithmic}[1]
\STATE Construct an arbitrary connected allocation $(A_1,A_2,\ldots,A_n)$. 
\WHILE{$u(A_i) < g(A_i \cup A_j)$ for some pair of agents $i$ and $j$ such that $A_i=\emptyset$ or $i$ and $j$ are adjacent under $A$}
\STATE Compute a $(3/4)$-PMMS connected allocation $(X,Y)$ for $G[A_i \cup A_j]$ and $u$. Assume $u(X) \leq u(Y)$.
\STATE Update $A_i=X$ and $A_j=Y$. 
\ENDWHILE
\end{algorithmic}
\end{algorithm}

We claim that $\sum_{i \in N}(u_i(A_i))^2$ decreases by at least $1$ after a polynomial number of steps during the execution of Algorithm~\ref{alg:halfPMMS:n:identical}. To see this, consider any pair of agents $i,j$. Suppose that $g(A_i \cup A_j)=u(A_i)+\alpha$ for some positive $\alpha>0$. This means that $u(A_j) > \frac{1}{2}u(A_i \cup A_j) \geq g(A_i \cup A_j)= u(A_i) + \alpha$.

Let $(X,Y)$ be a $(3/4)$-PMMS connected allocation for $G[A_i \cup A_j]$ and identical utility function $u$. 
Now consider the new allocation $A'$ where $A'_i=X$, $A'_j=Y$, and $A'_k=A_k$ for any agent $k \neq i,j$. We will show that $\sum_{i \in N}(u_i(A'_i))^2 < \sum_{i \in N}(u_i(A_i))^2$. 
\begin{align*}
&\sum_{i \in N}(u_i(A'_i))^2 - \sum_{i \in N}(u_i(A_i))^2\\
&=u(A'_i)^2+u(A'_j)^2 -  u(A_i)^2 - u(A_j)^2 \\
&=(u(A_i) + \alpha )^2+(u(A_j) - \alpha))^2 -  u(A_i)^2 - u(A_j)^2 \\
&= 2\alpha (u(A_i) - u(A_j) + \alpha) \leq 1. 
\end{align*}
Thus, the number of iterations of the {\bf while}-loop is at most $n u_i(V)^2 \leq n(|V| u_{max})^2$ where $u_{max} = \max_{v \in V}u(v)$. Further, by Theorem~\ref{thm:halfPMMS-polynomial:twoagents:identical}, each step of the {\bf while}-loop can be implemented in polynomial time. This establishes the claim. 
\end{proof}


\subsection{PMMS on trees}\label{sec:PMMS:trees}


Forgoing leximin, we can construct an algorithm that guarantees both PMMS and MMS for identical utilities. For this purpose, we make use of the following strengthening of MMS introduced by \citet{BiloCaFl22}.
If agents have identical utilities, an allocation $A$ is \emph{strongly maximin share} (SMMS) if it is MMS and minimizes the number of agents $i$ with $u(A_i) = \MMS(V)$ among all MMS allocations. An agent $i$ with $u(A_i) = \MMS(V)$ is called a \emph{loser}.
\Cref{alg:PMMS+SMMS} finds a PMMS and MMS connected allocation by finding an SMMS allocation, fixing the bundles of the losers and repeating the process for the remaining agents. Since SMMS is a stronger requirement than MMS, \cref{alg:PMMS+SMMS} is not polynomial in the general case, unless P$=$NP. However, for trees and additive utilities, we can show that the algorithm is polynomial. Note that the output of the algorithm may not be leximin since PMMS and SMMS may not imply leximin.

\begin{lemma}\label{lem:PMMS:SMMS:relation}
    Even for a path and three agents with identical additive utilities, PMMS and SMMS do not imply leximin.
\end{lemma}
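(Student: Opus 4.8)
The plan is to exhibit a single explicit instance: a path on a small number of items with three agents sharing one identical additive utility function $u$, together with a connected allocation $A$ that is both PMMS and SMMS but not leximin. Since leximin always implies MMS (and in fact SMMS), the only way PMMS $+$ SMMS can fail to imply leximin is if there are \emph{two} MMS allocations with the same minimum number of losers but different sorted utility vectors, and moreover the lexicographically worse of the two happens to still satisfy PMMS. So the construction needs: (i) a unique value of $\MMS(V)$, (ii) at least two MMS allocations achieving the minimum loser count, (iii) these two allocations having distinct vectors $\mathbf{u}(\cdot)$, and (iv) the dominated one being PMMS.

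First I would try to make $\MMS(V)$ small and force exactly one loser. A clean way is to have total utility not divisible into three equal parts, with one ``heavy'' block of items that any connected MMS partition must hand to a single agent, leaving the other two agents to split the remainder — and arrange the remainder so that it can be split in two genuinely different ways (one balanced, one unbalanced), both of which keep the two non-loser agents above $\MMS(V)$. Concretely, something like six items on a path with values such as $(1, 3, 3, 1, 1, 4)$ (values to be tuned) where $\MMS(V)$ works out to a value $t$, the rightmost heavy item(s) must go to the loser, and the left part $[0, c]$ admits both a split giving vector $(t, t, t+2)$ and a split giving vector $(t, t+1, t+1)$. The first has loser count one and sorted vector $(t, t, t+2)$; the second also has loser count one but sorted vector $(t, t+1, t+1)$, which is a leximin improvement. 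So the first allocation is SMMS and MMS but not leximin.

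Then I would check PMMS for the first (non-leximin) allocation. For the loser with bundle of value $t$, any adjacent neighbor $j$ gives $\PMMS(A_{\text{loser}} \cup A_j) \le \frac{1}{2}(t + u(A_j))$; since $u(A_j) \in \{t, t+2\}$ this is at most $\lfloor (2t+2)/2 \rfloor = t+1$ in the worst subcase, so I must be careful — the values must be chosen so that on the \emph{path} the combined bundle $A_{\text{loser}} \cup A_j$ cannot actually be split more evenly than $(t, \cdot)$ because of the connectivity constraint (e.g., the heavy item sits at the end and the loser's bundle is exactly that heavy item, so any connected two-split of loser-plus-neighbor puts the heavy item alone on one side). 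For the agent with the large bundle of value $t+2$, its neighbor's bundle has value $t$, and $\PMMS$ of the union is at most $t+1 \le t+2$, which is fine. The adjacency pattern on the path ($A_1 \mid A_2 \mid A_3$) means each agent has at most two neighbors, so only a handful of pairwise checks are needed.

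The main obstacle will be the simultaneous tuning: the connectivity of the path both helps (it blocks re-splitting, which is what makes the unbalanced allocation PMMS) and constrains (it limits which partitions are even available, which could accidentally destroy the existence of the second, balanced MMS allocation needed to witness non-leximinality). I expect to spend the effort finding utility values — likely with a couple of distinct magnitudes so that $\MMS$ is pinned down uniquely and the loser is forced — such that all of (i)–(iv) hold at once; once a candidate instance is in hand, verifying each property is a short finite computation over the $O(m^2)$ connected partitions.
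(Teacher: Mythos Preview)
Your plan is the same strategy the paper uses: exhibit one concrete path instance and two connected allocations that are both SMMS and PMMS but whose sorted utility vectors differ, so the dominated one witnesses the failure of leximin. That is exactly right, and once a working instance is found the verification is indeed a finite check.

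Two remarks, though. First, your candidate values $(1,3,3,1,1,4)$ do not work: here $\MMS(V)=4$, the only MMS partitions are $(\{1,2\},\{3,4\},\{5,6\})$ and $(\{1,2\},\{3,4,5\},\{6\})$ with utility multisets $\{4,4,5\}$ in both cases, so every SMMS allocation has the same sorted vector and there is no room for a non-leximin SMMS allocation. Your ``heavy item at the right end'' heuristic actually pins the structure down too tightly. Second, the paper's instance is considerably simpler than what you are aiming for: five items on a path with values $1,3,1,1,1$. Here $\MMS(V)=1$, the minimum number of losers is one, and the two allocations $A=(\{1,2\},\{3\},\{4,5\})$ with utilities $(4,1,2)$ and $A'=(\{1\},\{2\},\{3,4,5\})$ with utilities $(1,3,3)$ are both SMMS and both PMMS, yet $A'$ is a leximin improvement of $A$. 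The trick is not a heavy \emph{endpoint} but a single heavy item in position~$2$: it forces $\MMS=1$ (any partition leaves some agent with a lone value-$1$ item) while allowing the remaining value to be split either as $(4,2)$ or $(3,3)$. So your framework is correct, but you should look for the asymmetry inside the path rather than at its boundary.
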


\begin{proof}
    Consider an instance with three agents and five item on a path. Each agent has utility $1$, $3$, $1$, $1$ and $1$ for each of the items, starting from the leftmost. Consider a connected allocation $A$ that allocates the two leftmost items to agent $1$, the middle item to agent $2$ and the two rightmost items to agent $3$. Also consider a connected allocation $A'$ that allocates the leftmost item to agent $1$, the next item to agent $2$ and the three rightmost items to agent $3$.
    Both allocations are PMMS and SMMS. However, only $A'$ is leximin, as $A'$ is a leximin improvement of $A$.
\end{proof}

\begin{algorithm}                      
\caption{Algorithm for constructing a PMMS and SMMS connected allocation for a connected graph $G = (V, E)$ and $n$ agents with identical utilities $u$}
\label{alg:PMMS+SMMS}                          
\begin{algorithmic}[1]
\STATE Find a SMMS connected allocation $A$\label{line:initial-allocation} and create graph $G'$ from $G$ by removing the items allocated to losers in $A$.\label{line:initial-assignment}
\FOR{each connected component $C = (V_C, E_C)$ in $G'$}
\STATE Run \cref{alg:PMMS+SMMS} for $C$ and agents $N_C = \{\, i \mid A_i \subseteq C, i \in [n] \,\}$, and let $A_C$ be the resulting allocation.\label{line:recursive-call}
\STATE For each $i \in N_C$, swap $i$'s bundle in $A$ for $i$'s bundle in $A_C$.\label{line:update-A}
\ENDFOR 
\RETURN A\label{line:return-PMMS-SMMS}
\end{algorithmic}
\end{algorithm}

\begin{lemma}\label{lem:PMMS+SMMS-exists}
    For a connected graph $G$ and $n$ with identical utility function $u$, \cref{alg:PMMS+SMMS} finds a PMMS and SMMS connected allocation. It runs in polynomial time if connected SMMS allocations can be found in polynomial time and the utility $u(X)$ can be computed in polynomial time for each $X \subseteq V$.
\end{lemma}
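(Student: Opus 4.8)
The plan is to prove both claims by induction on the number of agents (equivalently, on the recursion depth), with the workhorse being a local-swap lemma that ties SMMS to PMMS for losers. First I would establish, for an arbitrary connected graph and identical utilities: \emph{if $B$ is an SMMS connected allocation, then for every loser $i$ and every agent $j$ with $B_i=\emptyset$ or $i,j$ adjacent under $B$, $u(B_i)\ge\PMMS(B_i\cup B_j)$.} Indeed, if not, then $\PMMS(B_i\cup B_j)>u(B_i)=\MMS(V)$; since $B_i\cup B_j$ is connected (it is $B_j$ when $B_i=\emptyset$, and otherwise $B_i,B_j$ are connected and joined by an edge), a maximin partition $(X,Y)\in\Pi_2(B_i\cup B_j)$ satisfies $\min\{u(X),u(Y)\}\ge\PMMS(B_i\cup B_j)>\MMS(V)$. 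Reassigning $X$ to $i$ and $Y$ to $j$, keeping all other bundles, gives a connected allocation that is still MMS — every $k\notin\{i,j\}$ keeps its value $\ge\MMS(V)$, and $u(B_j)\ge u(B_i)=\MMS(V)$ because $B$ is MMS — but with strictly fewer losers than $B$, contradicting SMMS.

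For the induction, let $A$ be the SMMS allocation chosen on line~\ref{line:initial-allocation}, let $L$ be its loser set, and let $C=(V_C,E_C)$ range over the components of $G'$. Since the minimum piece value of \emph{every} partition of $V$ into $n$ connected pieces is at most $\MMS(V)$, $A$ has at least one loser; assuming $\MMS(V)>0$ (the case $\MMS(V)=0$, where losers may receive empty bundles, is handled separately and easily), each loser's bundle is nonempty, so $|V(G')|<|V|$, each $N_C$ is a proper subset of $[n]$, and the bundles $(A_i)_{i\in N_C}$ are connected, pairwise disjoint and cover $V_C$; thus $(C,N_C)$ is a well-formed smaller instance and the recursion terminates. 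Restricting $A$ to $N_C$ shows $\mu^{|N_C|}(V_C)\ge\min_{i\in N_C}u(A_i)>\MMS(V)$. By the inductive hypothesis the allocation $A_C$ returned on line~\ref{line:recursive-call} is a PMMS and SMMS allocation of $(C,N_C)$, so on termination every $i\in N_C$ has $u(A^*_i)\ge\mu^{|N_C|}(V_C)>\MMS(V)$ while every $i\in L$ keeps value exactly $\MMS(V)$. Hence $A^*$ is an MMS allocation whose loser set is exactly $L$; as $|L|$ is the minimum number of losers over all MMS allocations, $A^*$ is SMMS.

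To see $A^*$ is PMMS, take any pair $i,j$ with $A^*_i=\emptyset$ or $i,j$ adjacent under $A^*$. If $A^*_i=\emptyset$ then $i\in L$, and when $i\in L$ the swap lemma applied to the SMMS allocation $A^*$ gives $u(A^*_i)\ge\PMMS(A^*_i\cup A^*_j)$. If $i\notin L$ but $j\in L$, then, using the swap lemma for the loser $j$ and its neighbour $i$, $u(A^*_i)\ge u(A^*_j)\ge\PMMS(A^*_j\cup A^*_i)=\PMMS(A^*_i\cup A^*_j)$ (the first inequality because $A^*$ is MMS and $i$ is not a loser). Finally, if $i,j\notin L$, the edge witnessing their adjacency lies in the induced subgraph $G'$, so $A^*_i$ and $A^*_j$ lie in the same component $C$ with $i,j\in N_C$; since $A^*_i\cup A^*_j\subseteq V_C$ and $G[V_C]$ is induced, $\PMMS(A^*_i\cup A^*_j)$ is the same whether computed in $C$ or in $G$, and $u(A^*_i)\ge\PMMS(A^*_i\cup A^*_j)$ follows from the inductive hypothesis that $A_C$ is PMMS. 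The base case (all agents losers, so $A^*=A$) is immediate from the swap lemma. For the running time, each recursive call strictly decreases the agent count, so there are polynomially many calls, and each performs only a bounded amount of work plus one SMMS computation and a decomposition of $G'$ into components — all polynomial under the stated hypotheses.

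I expect the main obstacle to be the middle step: arguing that recombining the recursively computed sub-allocations leaves $A^*$ both SMMS — via the strict inequality $\mu^{|N_C|}(V_C)>\MMS(V)$, which prevents new losers from appearing inside the components, together with the observation that the level-$0$ loser bundles are frozen, so the loser set of $A^*$ is pinned to exactly $L$ — and PMMS across the ``seams'' between a frozen loser bundle and a recursively reallocated bundle. The remaining delicate point is the bookkeeping for $\MMS(V)=0$ and empty bundles, so that each $N_C$ really is a partition of a proper subset of the agents.
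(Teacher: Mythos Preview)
Your proof is correct and follows essentially the same inductive strategy as the paper: both establish $\mu^{|N_C|}(V_C)>\MMS(V)$ to preserve SMMS, and both use the same swap argument to rule out a PMMS violation involving a loser. The only notable difference is packaging---you extract the swap argument as a standalone lemma and also use it for the base case, whereas the paper argues the swap inline and handles the all-losers base case by observing that such an allocation is leximin and invoking \cref{lem:identical-utility-functions}.
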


\begin{proof}
    We prove PMMS and SMMS by induction on recursive calls to \cref{alg:PMMS+SMMS}. When every agent is a loser in $A$---our inductive base case---$A$ must be both SMMS and leximin, as any improvement would result in fewer losers. By \cref{lem:identical-utility-functions}, $A$ is thus PMMS.
    
    For our inductive step, we claim that if each $A_C$ found is a PMMS and SMMS connected allocation, then $A$ is a PMMS and SMMS connected allocation when returned on Line~\ref{line:return-PMMS-SMMS}. 
    Since an SMMS connected allocation is found on Line~\ref{line:initial-allocation}, every non-loser agent belongs to exactly one set $N_C$ and $A$ remains a connected allocation. Also, $\mu^{|N_C|}(V_C) > \mu^n(V)$ must hold, as $u(A_i) > \mu^n(V) = \MMS(V)$ for every $i \in N_C$. Thus, every agent $i$ that was not a loser on Line~\ref{line:initial-assignment} receives a bundle with $u(A_i) > \MMS(V)$ and $A$ remains SMMS.

    Since $A_C$ is PMMS, the only way PMMS can not hold in $A$ is between a loser $i$ and an agent $j \in N_C$ for some connected component $C$. Assume this is the case when $A$ is returned and let $(B_i, B_j)$ be a PMMS partition of $A_i \cup A_j$. Replacing $A_i$ and $A_j$ by $B_i$ and $B_j$ would reduce the number of losers in $A$ or increase the utility of the worst off agent, as $\min\{u(A_i), u(A_j)\} < \min\{u(B_i), u(B_j)\}$. This is a contradiction, as then $A$ is not SMMS. Hence, $A$ is PMMS.

    Since any SMMS allocation has at least one loser and every non-loser belongs to exactly one set $N_C$, \cref{alg:PMMS+SMMS} is ran at most $n$ times. Under the stated assumptions, we can easily verify that every operation in the algorithm can be performed in polynomial time. Thus, the algorithm must run in polynomial time.
\end{proof}


We now show that an SMMS connected allocation can be found for identical additive utilities on trees in polynomial time.

\begin{theorem}\label{lem:SMMS-additive-trees-polynomial}
    For a tree $G$ and $n$ agents with identical additive utilities, a SMMS connected allocation can be found in polynomial time.
\end{theorem}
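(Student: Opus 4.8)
The plan is to split the construction into two phases: first compute the maximin value $T := \MMS(V) = \mu^n(V)$, and then, with $T$ fixed, run a tree dynamic program that outputs a partition of $G$ into exactly $n$ connected parts, each of weight at least $T$, minimizing the number of parts of weight exactly $T$. The observation linking the two phases is that, since $T$ is by definition the \emph{maximum} over $\Pi_n(V)$ of the smallest part weight, a connected partition into $n$ parts is an MMS partition if and only if every part has weight $\ge T$ (such a partition then automatically has smallest part weight $=T$, and conversely). Hence the set of MMS allocations is exactly the set of partitions of $G$ into $n$ connected parts all of weight $\ge T$; each of them has at least one loser, and an SMMS allocation is precisely one of these that minimizes the number of parts of weight exactly $T$, which is what the second phase computes.

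For the first phase, root $G$ arbitrarily. For a threshold $\tau$, a standard bottom-up greedy computes $g(\tau)$, the maximum number of connected pieces each of weight $\ge \tau$ into which $G$ can be split: in post-order, maintain for each vertex $v$ a residual weight equal to $u(v)$ plus the residuals of its not-yet-detached children, and detach a child (counting one more completed piece) as soon as its residual reaches $\tau$; the root residual, if $\ge\tau$, is one more piece, otherwise it is absorbed into a neighbor. A short exchange argument shows the greedy is optimal, and contracting pieces and deleting leaves of the resulting piece-tree shows $g(\tau)\ge n$ iff $G$ can be split into \emph{exactly} $n$ connected pieces each $\ge\tau$; thus $T=\max\{\tau : g(\tau)\ge n\}$. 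Since $g$ is a non-increasing step function and (clearing denominators, so WLOG integral utilities) $T$ is an integer in $[0,u(V)]$, binary search with $O(\log u(V))$ greedy evaluations finds $T$ in polynomial time. (For general real weights one can instead use parametric search, or note $T$ is the weight of one of the polynomially many connected subtrees encountered during the greedy; I will not dwell on this.)

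For the second phase, root $G$ again and, for each vertex $v$ and each $p,\ell\in\{0,\dots,n-1\}$, let $R(v,p,\ell)$ be the maximum residual weight of $v$'s component over all ways of partitioning the subtree at $v$ into $p$ detached parts, each of weight $\ge T$, exactly $\ell$ of which have weight exactly $T$, plus the residual component containing $v$; set $R(v,p,\ell)=-\infty$ when infeasible. The table $R(v,\cdot,\cdot)$ is built by merging the children of $v$ one at a time, knapsack-style: a child $w$ with entry $(p',\ell',r')$ is either \emph{detached}, turning $r'$ into a new part (allowed only if $r'\ge T$, a loser iff $r'=T$), or \emph{kept}, adding $r'$ to $v$'s residual; both options must be retained even when $r'>T$. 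For fixed $(p,\ell)$ a larger residual weakly dominates a smaller one (every later sequence of keep/detach decisions stays feasible and never creates more losers), so storing only the maximum residual per $(p,\ell)$ loses nothing. At the root one takes $\min_\ell\bigl(\ell+[\,R(\rho,n-1,\ell)=T\,]\bigr)$ over cells with $R(\rho,n-1,\ell)\ge T$, and standard backtracking recovers the partition, which is then assigned to the $n$ agents in any order. Correctness follows from the characterization of MMS partitions above, and the running time is $O(|V|\,n^{4})$ for the DP plus the cost of phase one, hence polynomial.

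The main obstacle is getting the second-phase DP exactly right: one must verify that the summary $(v,p,\ell)\mapsto$ ``best residual'' is a sufficient state (i.e. that the domination argument is valid and that both a detach and a keep of a child with residual exceeding $T$ are genuinely needed), and that the feasibility tests ($\ge T$ at every detachment and for the root residual) carve out precisely the MMS partitions so that minimizing losers over DP-feasible solutions really yields SMMS. A secondary, more cosmetic issue is pinning down the exact value of $T$ in the real-valued model, which I would sidestep by assuming rational (WLOG integral) utilities and binary searching as above.
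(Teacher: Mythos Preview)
Your approach is correct and essentially the same as the paper's: the paper computes $\MMS(V)$ via the Perl--Schach max--min tree-partitioning algorithm (your greedy-plus-binary-search) and then runs the same dynamic program, there called $P_{i,j,\ell}$, indexed by vertex, number of completed pieces, and number of losers, maximizing the residual weight of the piece containing the current vertex and combining children in the same knapsack style. The only cosmetic difference is that the paper tracks ``at most $\ell$'' losers rather than your ``exactly $\ell$,'' which makes storing just the maximum residual per cell trivially justified rather than needing your domination argument (with ``exactly $\ell$'' the DP does not literally compute $R(v,p,\ell)$ as you defined it, since detaching a child whose stored max residual exceeds $T$ shifts you to a cell with one fewer loser, but as your domination remark correctly implies this only moves feasible states toward smaller $\ell$ and the final $\min_\ell$ is unaffected).
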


Finding an SMMS allocation is trivial if $\MMS(V) = 0$; Simply give as many agents as possible items with non-zero utility. When $\MMS(V) > 0$, we solve a related problem:  First, root the tree in some arbitrary vertex $r$. For any vertex $v_i$ and pair of integers $0 \le j,\ell \le n$, we are interested in finding, if it exists, a partition $P_{i,j,\ell} = (B_1, \dots, B_{j + 1})$ of the subtree rooted in $v_i$ into connected bundles that subject to the following conditions maximizes $u(B_{j + 1})$:
\begin{enumerate}[label=(\roman*)]
    \item $B_{j + 1} = \emptyset$ or $v_i \in B_{j + 1}$ \label{item:SMMS:tree:T:1}
    \item $u(B_t) \ge \MMS(V)$ for $1 \le t \le j$ \label{item:SMMS:tree:T:2}
    \item $|\{\, B_t \mid u(B_t) = \MMS(V), 1 \le t \le j \,\}| \le \ell$ \label{item:SMMS:tree:T:3}
\end{enumerate}

\noindent
In other words, $P_{i,j,\ell}$ partitions the subtree into $j$ bundles with utility at least $\MMS(V)$ of which at most $\ell$ have a utility of exactly $\MMS(V)$. Finally, the remaining bundle, $B_{j + 1}$, is either empty or contains $v_i$. This property is important, as $B_{j + 1}$, the only bundle not guaranteed to have a utility of at least $\MMS(V)$, can always be combined with a bundle containing $v_i$'s parent. Together with maximizing $u(B_{j + 1})$, this allows for solutions to be found by dynamic programming. Note that since $G$ is a tree, $\MMS(V)$ can be found in polynomial time using an algorithm of \citet{Perl:81}. We now show that solving the problem for $i = r$, $j = n$ and $0 \le \ell \le n$ yields an SMMS connected allocation.

\begin{lemma}\label{lem:SMMS:tree:construct}
    For a tree $G = (V, E)$ rooted in some vertex $r \in V$ and $n$ agents with identical utilties, $P_{r, n, \ell}$ exists for some $0 \le \ell \le n$ and a SMMS connected allocation of $G$ can be obtained from $P_{r, n, \ell^*} = (B_1, \dots, B_n, B_{n + 1})$ by removing $B_{n + 1}$ and redistributing the items in $B_{n + 1}$ to the other bundles in any valid way,  where $\ell^*$ is the smallest $0 \le \ell \le n$ for which $P_{r, n, \ell}$ exists.
\end{lemma}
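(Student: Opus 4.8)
The plan is to argue throughout under the assumption $\MMS(V) > 0$, since when $\MMS(V) = 0$ an SMMS allocation is found trivially (as already noted). I would first dispense with existence of $P_{r,n,\ell}$ for some $\ell$, hence well-definedness of $\ell^*$: by \cite{BouveretCeEl17} an MMS connected allocation of the tree $G$ exists; taking its $n$ bundles as $B_1,\dots,B_n$ and $B_{n+1}=\emptyset$ yields a partition of the subtree rooted at $r$ (which is all of $V$) satisfying \ref{item:SMMS:tree:T:1}--\ref{item:SMMS:tree:T:3} for $\ell=n$, so $P_{r,n,\ell}$ exists and $0\le\ell^*\le n$. Note that in $P_{r,n,\ell^*}$ every $B_t$ with $t\le n$ is nonempty ($u(B_t)\ge\MMS(V)>0$), and $B_{n+1}$, if nonempty, is a connected subtree containing $r$.

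Next I would pin down what $\ell^*$ counts. Let $L^*$ be the minimum number of losers over all MMS connected allocations of $G$ (well-defined by \cite{BouveretCeEl17}). Two easy facts: (a) $\ell^*\le L^*$, because an MMS connected allocation with $L^*$ losers together with $B_{n+1}=\emptyset$ is a valid partition for parameter $L^*$; and (b) the number of bundles among $B_1,\dots,B_n$ with utility exactly $\MMS(V)$ is exactly $\ell^*$ --- it is at most $\ell^*$ by \ref{item:SMMS:tree:T:3}, and if it were strictly smaller the very same partition would be valid for parameter $\ell^*-1$, contradicting minimality of $\ell^*$. I would then show that no loser bundle is adjacent to $B_{n+1}$ unless $u(B_{n+1})=0$: if a loser $B_t$ were adjacent to $B_{n+1}$ with $u(B_{n+1})>0$, then $(B_1,\dots,B_t\cup B_{n+1},\dots,B_n)$ is a connected allocation in which every bundle still has utility at least $\MMS(V)$ but with only $\ell^*-1$ losers, contradicting $\ell^*\le L^*$.

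Finally I would analyse an arbitrary valid redistribution $(B_1',\dots,B_n')$ of $B_{n+1}$'s items, so each $B_t'$ is connected, $B_t\subseteq B_t'$, and they partition $V$. Monotonicity gives $u(B_t')\ge u(B_t)\ge\MMS(V)$, so the allocation is MMS. For the loser count, I would observe that if some vertex $v\in B_{n+1}$ ends up in $B_t'$, then $B_t$ must be adjacent to $B_{n+1}$: the unique path in the tree from $B_t$ to $v$ lies inside the connected set $B_t'$, hence it avoids every other original bundle $B_s$ (which is contained in $B_s'$), so it is contained in $B_t\cup B_{n+1}$ and therefore uses an edge between $B_t$ and $B_{n+1}$. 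Combined with the previous paragraph, a loser $B_t$ can only pick up vertices of $B_{n+1}$ when $u(B_{n+1})=0$, in which case those vertices have zero utility and $B_t$ stays a loser; and a non-loser $B_t$ keeps $u(B_t')\ge u(B_t)>\MMS(V)$ and stays a non-loser. Hence $(B_1',\dots,B_n')$ is MMS with exactly $\ell^*$ losers, so $L^*\le\ell^*$, giving $L^*=\ell^*$, and the allocation is SMMS.

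I expect the main obstacle to be the connectivity bookkeeping in the last step: making rigorous that each original bundle is ``frozen'' inside its successor, which blocks any path from a loser bundle to a far-away positive-utility vertex of $B_{n+1}$, so that losers are preserved under any valid redistribution; once that is in place, everything else is combining it with the minimality of $\ell^*$ and the exchange argument of the middle paragraph.
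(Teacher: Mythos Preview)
Your argument is correct, but it takes a longer route than the paper. Both you and the paper establish existence of $P_{r,n,\ell}$ and the inequality $\ell^* \le L^*$ in essentially the same way (append an empty $(n{+}1)$-th bundle to an MMS/SMMS allocation). The divergence is in the final step. You work hard to show the redistributed allocation has \emph{exactly} $\ell^*$ losers, via the adjacency argument: losers cannot be adjacent to a positive-utility $B_{n+1}$ (merge-and-improve), and any bundle that absorbs part of $B_{n+1}$ must be adjacent to it (tree-path argument). This is correct but is more than you need, and the step ``those vertices have zero utility and $B_t$ stays a loser'' tacitly uses additivity of $u$.

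The paper instead observes directly that the redistributed allocation is MMS (by monotonicity and condition~\ref{item:SMMS:tree:T:2}) with at most $\ell^*$ losers (by monotonicity and condition~\ref{item:SMMS:tree:T:3}); but \emph{any} MMS allocation has at least $L^*$ losers by definition of $L^*$, so $L^* \le (\text{losers in }B) \le \ell^* \le L^*$ forces equality and the allocation is SMMS. No adjacency or connectivity bookkeeping is required, and only monotonicity of $u$ is used. What your approach buys is an explicit structural picture of which bundles can absorb $B_{n+1}$; what the paper's approach buys is brevity and a proof that works for arbitrary monotone identical utilities rather than just additive ones. The ``main obstacle'' you anticipated (preserving losers under any valid redistribution) simply disappears once you use the lower bound $L^*$ on the loser count instead of trying to pin it down exactly.
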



\begin{proof}
    Let $A = (A_1, \dots, A_n)$ be an SMMS connected allocation for $G$ and $\ell_A = |\{\, A_i \mid u(A_i) = \MMS(V), i \in [n] \,\}|$. We claim that $A' = (A_1, \dots, A_n, \emptyset)$ satisfies \ref{item:SMMS:tree:T:1}--\ref{item:SMMS:tree:T:3} for the triple $r, n, \ell_A$. Indeed $B_{j + 1} = \emptyset$ and \ref{item:SMMS:tree:T:1} holds. Since $A$ is SMMS, \ref{item:SMMS:tree:T:2} holds. By the definition of $\ell_A$, \ref{item:SMMS:tree:T:3} also holds. Thus, $P_{r, n, \ell}$ exists for at least one $\ell$.

    Let $B = (B'_1, \dots, B'_n)$ be the allocation obtained from $P_{r, n, \ell^*}$ by removing $B_{n + 1}$ and redistributing the items. By \ref{item:SMMS:tree:T:2}, every bundle $B'_i \in B$ has $u(B'_i) \ge \MMS(V)$. Since $\ell^*$ is minimized, it holds that $\ell^* \le \ell_A$ and by \ref{item:SMMS:tree:T:3} the number of bundles with utility exactly $\MMS(V)$ is no greater in $B$ than in $A$. In fact, it must hold that $\ell^* = \ell_A$ as otherwise $A$ is not SMMS. Thus, $B$ must be SMMS.
\end{proof}

To solve the problem for the selected root $r$, $j = n$ and $0 \le \ell \le n$, we will rely on dynamic programming to combine solutions for the children of $r$. This process is repeated until leaf vertices are reached.


\begin{restatable}{lemma}{lemSMMSTreeInternalNode}\label{lem:SMMS:tree:internal:node}
 For a vertex $v_i$, $P_{i, j, \ell}$ can be computed in polynomial time for any $0 \le j,\ell \le n$ if $P_{h, j_h, \ell_h}$ is known for every child vertex $v_h$ of $v_i$ and pair $0 \le j_h, \ell_h \le n$.
\end{restatable}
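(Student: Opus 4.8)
The plan is to compute the whole table $\{P_{i,j,\ell}\}_{0\le j,\ell\le n}$ for $v_i$ by dynamic programming over the children of $v_i$, processed one at a time, combining them with the tables $P_{h,\cdot,\cdot}$ that are assumed known. Fix an ordering $c_1,\dots,c_d$ of the children of $v_i$, and for $0\le k\le d$ let $T_k$ be the subtree induced by $v_i$ together with the subtrees rooted at $c_1,\dots,c_k$, so $T_0=\{v_i\}$ and $T_d$ is the subtree rooted at $v_i$. For each $k$ and each $0\le j,\ell\le n$ I would compute the optimal value of the analogue of $P_{i,j,\ell}$ for $T_k$ (with $v_i$ as its root), written $\nu_k(j,\ell)$ (and $-\infty$ when infeasible), together with a witnessing partition; the entry at $k=d$ is what the lemma asks for. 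The base case $k=0$ is immediate: the only candidate partitions of $\{v_i\}$ are $(\{v_i\})$, feasible only for $j=0$ with value $u(v_i)$, and $(\{v_i\},\emptyset)$, feasible for $j=1$ exactly when $u(v_i)\ge\MMS(V)$ and, if $u(v_i)=\MMS(V)$, only when $\ell\ge 1$, with value $0$; all other $(j,\ell)$ are infeasible.

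For the inductive step, note that in a partition of $T_k$ the edge $v_ic_k$ is either used or not. Merging the partition of $T_{k-1}$ with a partition of $c_k$'s subtree taken from the known table: if the edge is used we absorb the bundle of $c_k$'s partition that contains $c_k$ into $v_i$'s bundle; if not, the bundle containing $c_k$ must have utility $\ge\MMS(V)$ on its own, which is exactly the case recorded by $P_{c_k,\cdot,\cdot}$ with an empty special bundle. In either case the good bundles of the combined partition are the disjoint union of the good bundles of the two parts, so the numbers of bundles of utility $\ge\MMS(V)$ and of utility exactly $\MMS(V)$ add, and the special bundle's utility is the sum of the two special-bundle utilities. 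Since the feasibility conditions on the two parts are independent and the objective is additive across them, an optimal combined partition restricts to optimal partitions of $T_{k-1}$ and of $c_k$'s subtree for the induced split of the parameters, giving
\[
\nu_k(j,\ell)\;=\;\max_{\substack{j_1+j_2=j\\ \ell_1+\ell_2\le\ell}}\bigl(\nu_{k-1}(j_1,\ell_1)+\nu_{c_k}(j_2,\ell_2)\bigr),
\]
where $\nu_{c_k}(j_2,\ell_2)$ is the optimal value of $P_{c_k,j_2,\ell_2}$ read off the given table; a witness is recovered by storing the maximizing choices.

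The step that requires the most care, and is the main obstacle, is the bookkeeping around the special bundle — in particular the option $B_{j+1}=\emptyset$, in which $v_i$ ends up inside a good bundle rather than a distinguished leftover bundle. One must ensure this option is only cashed in after all $d$ children have been processed (so that $v_i$'s bundle may keep growing during the induction), and that when it is cashed in the count of bundles of utility exactly $\MMS(V)$ is updated correctly, depending on whether the promoted bundle has utility exactly $\MMS(V)$ or strictly more. I would handle this by carrying, alongside $\nu_k(j,\ell)$, a small number of Boolean flags recording whether $T_k$ admits a partition into $j$ bundles each of utility $\ge\MMS(V)$ with $v_i$ in one of them (and how many of these have utility exactly $\MMS(V)$), which obey an analogous recurrence. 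For the running time: there are at most $n$ children, and for each we range over the $O(n^2)$ pairs $(j,\ell)$ and, for each, over the $O(n^2)$ splits $((j_1,\ell_1),(j_2,\ell_2))$, with each combination costing a constant number of additions and comparisons of utilities (which are computable in polynomial time), so the entire table $\{P_{i,j,\ell}\}$, with witnesses, is produced in time polynomial in $n$ and $|V|$. Correctness of the output then follows by induction on $k$ from the recurrence above together with the observation, used in Lemma~\ref{lem:SMMS:tree:construct}, that every partition of $T_k$ decomposes uniquely across the edge $v_ic_k$.
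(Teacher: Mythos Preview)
Your approach is correct and essentially mirrors the paper's: both do dynamic programming over the children of $v_i$, processed one at a time, with the parameters $(j,\ell)$ split additively and the special bundle's utility accumulated as a sum; the paper wraps the inner DP in an explicit two-case split (non-empty $B_{j+1}$ via its Lemma~\ref{lem:SMMS:tree:a:construction}, empty $B_{j+1}$ via a reduction to $P_{i,j-1,\ell}$ or $P_{i,j-1,\ell-1}$ in Lemmas~\ref{lem:SMMS:internal:cases} and~\ref{lem:SMMS:tree:b:construction}), which is exactly what your ``cashing in at the end with flags'' amounts to. Two small slips worth fixing: your base case for $T_0$ should drop the $j=1$ option, since---as you yourself argue---promoting $v_i$'s bundle to a good bundle must wait until $k=d$; and the number of children is bounded by $|V|-1$, not by the number of agents $n$ (harmless for the polynomial bound).
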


In order to prove \cref{lem:SMMS:tree:internal:node}, we rely on several intermediate results. First, note that when $j = 0$, a solution always exists and is trivial to find.

\begin{lemma}\label{lem:SMMS:tree:j:0}
    For a vertex $v_i$, $j = 0$ and any $0 \le \ell \le n$, there is a unique solution to $P_{i, j, \ell}$, the solution in which $B_1$ contains the entire subtree rooted in $v_i$.
\end{lemma}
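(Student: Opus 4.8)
The plan is to observe that for $j = 0$ the constraints defining $P_{i,0,\ell}$ collapse almost entirely, leaving exactly one feasible partition, so both existence and uniqueness are immediate. First I would note that a partition $P_{i,0,\ell} = (B_1, \dots, B_{j+1})$ has $j + 1 = 1$ bundle, so it consists of the single bundle $B_1 = B_{j+1}$. Since this is required to be a partition of the subtree rooted at $v_i$ into connected bundles, and a partition into a single part forces that part to equal the whole ground set, $B_1$ must be precisely the entire subtree rooted at $v_i$. This already determines the partition uniquely, and the subtree rooted at $v_i$ is connected, so $B_1$ is a legal connected bundle.

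Next I would verify feasibility and optimality against the three defining conditions. Conditions~\ref{item:SMMS:tree:T:2} and~\ref{item:SMMS:tree:T:3} are vacuous when $j = 0$, since there is no index $t$ with $1 \le t \le 0$; in particular the set in~\ref{item:SMMS:tree:T:3} is empty, so its cardinality equals $0 \le \ell$ for every $\ell$. For condition~\ref{item:SMMS:tree:T:1}, the subtree rooted at $v_i$ contains $v_i$ and is therefore nonempty, so $B_{j+1} = B_1$ cannot be the empty disjunct and must instead satisfy $v_i \in B_1$; this holds because $B_1$ is the whole subtree. As there is only a single feasible partition, it trivially maximizes $u(B_{j+1})$, and it is exactly the claimed unique solution.

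There is essentially no genuine obstacle here, as the lemma is just the base case of the dynamic program. The only point worth stating carefully is that the single bundle is forced to be the entire subtree rather than empty, precisely because the subtree always contains its root $v_i$; this is what lets condition~\ref{item:SMMS:tree:T:1} be satisfied through its second disjunct ($v_i \in B_{j+1}$) and rules out any alternative.
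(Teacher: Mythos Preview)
Your proof is correct and follows essentially the same approach as the paper's own proof, which simply notes that a partition into $j+1=1$ bundle is forced to be the whole subtree and that this partition satisfies \ref{item:SMMS:tree:T:1}--\ref{item:SMMS:tree:T:3}. You are just more explicit in verifying each condition, in particular spelling out why \ref{item:SMMS:tree:T:2} and \ref{item:SMMS:tree:T:3} are vacuous and why \ref{item:SMMS:tree:T:1} holds via $v_i\in B_1$.
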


\begin{proof}
    Since $j = 0$, $P_{i, j, \ell}$ contains a single bundle. Therefore, there is only a single possible partition, the one in which all items appear in the same bundle. This partition satisfies \ref{item:SMMS:tree:T:1}--\ref{item:SMMS:tree:T:3}. As no other partitions are possible, this partition is a solution to $P_{i, j, \ell}$.
\end{proof}

Next, we show that whenever a solution exists for some subproblem $P_{i, j, \ell}$ one of two conditions hold. Either condition is also a sufficient condition for a solution to $P_{i, j, \ell}$ to exist.

\begin{lemma}\label{lem:SMMS:internal:cases}
    For an internal vertex $v_i$, $1 \le j \le n$ and $0 \le \ell \le n$, a solution exists for $P_{i, j, \ell}$ if and only if at least one of two following cases holds:

    \begin{enumerate}[label=(\alph*)]
        \item There exists a combination of pairs $j_h, \ell_h$, one for each child $v_h$ of $v_i$, with $\sum_{h} j_h = j$ and $\sum_{h} \ell_h = \ell$ such that a solution to $P_{h, j_h, \ell_h}$ exists for every child $v_h$.\label{item:SMMS:tree:internal:structure:1}
        \item There exists (1) a solution to $P_{i, j - 1, \ell}$ where the $j$-th bundle has utility exceeding $\MMS(V)$ or (2) a solution to $P_{i, j - 1, \ell - 1}$ where the $j$-th bundle has a utility of at least $\MMS(V)$.\label{item:SMMS:tree:internal:structure:2}
    \end{enumerate}
\end{lemma}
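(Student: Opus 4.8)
The plan is to prove the stated equivalence by establishing the two implications separately, handling the ``if'' direction constructively and the ``only if'' direction by a case analysis on where $v_i$ sits in a feasible partition. Throughout I use that a solution to $P_{i,j,\ell}$ \emph{exists} exactly when some partition of $v_i$'s subtree satisfies \ref{item:SMMS:tree:T:1}--\ref{item:SMMS:tree:T:3} (maximizing $u(B_{j+1})$ only picks which such partition to name $P_{i,j,\ell}$); I record once the monotonicity fact that a partition feasible for $(h,j_h,\ell_h)$ is also feasible for $(h,j_h,\ell_h')$ whenever $\ell_h'\ge\ell_h$; and I use, as in the surrounding construction, that $\MMS(V)>0$, so any bundle of positive utility is nonempty.

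For the ``if'' direction I exhibit an explicit feasible partition of $v_i$'s subtree. Given \ref{item:SMMS:tree:internal:structure:1}, I glue the children's solutions: take a feasible partition witnessing $P_{h,j_h,\ell_h}$ for each child $v_h$, keep all of their $\sum_h j_h=j$ full bundles (each of utility at least $\MMS(V)$, with at most $\sum_h\ell_h=\ell$ of them tight), and take the last bundle to be $\{v_i\}$ together with the union of the children's leftover bundles; this set contains $v_i$ and is connected, since each child's leftover is empty or contains $v_h$, which is adjacent to $v_i$. Given \ref{item:SMMS:tree:internal:structure:2}(1), I take the solution of $P_{i,j-1,\ell}$ whose $j$-th bundle $B$ has $u(B)>\MMS(V)$, relabel $B$ as a full bundle, and append an empty last bundle; as $u(B)>\MMS(V)$, the relabelled bundle is not tight, so the tight count stays at most $\ell$. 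Given \ref{item:SMMS:tree:internal:structure:2}(2), I do the same starting from a solution of $P_{i,j-1,\ell-1}$ with $u(B)\ge\MMS(V)$; here $B$ may be tight, but this is absorbed by the extra unit of budget in passing from $\ell-1$ to $\ell$.

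For the ``only if'' direction, start from a feasible partition $(B_1,\dots,B_{j+1})$ for $(i,j,\ell)$ and split on whether $B_{j+1}=\emptyset$. If $v_i\in B_{j+1}$: since $G$ is a tree, every full bundle $B_t$ ($t\le j$) avoids $v_i$ and hence lies inside a single child subtree $T_h$, and each intersection $B_{j+1}\cap T_h$ is connected and is empty or contains $v_h$. Reading the partition off inside each $T_h$ gives a feasible partition for $(h,j_h,\ell_h')$, where $j_h$ counts the full bundles inside $T_h$ and $\ell_h'$ the tight ones among them, so $\sum_h j_h=j$ and $\sum_h\ell_h'\le\ell$; by the monotonicity fact together with $\ell\le n\le n\cdot(\text{number of children of }v_i)$, I inflate the $\ell_h'$ to some $\ell_h\le n$ with $\sum_h\ell_h=\ell$, giving \ref{item:SMMS:tree:internal:structure:1}. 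If $B_{j+1}=\emptyset$, relabel so that the bundle containing $v_i$ is $B_j$; then $(B_1,\dots,B_{j-1},B_j)$ is feasible for $(i,j-1,\cdot)$. If $u(B_j)>\MMS(V)$ it witnesses $(i,j-1,\ell)$ with last bundle exceeding $\MMS(V)$, and since $P_{i,j-1,\ell}$ maximizes the last bundle, its last bundle also exceeds $\MMS(V)$, giving \ref{item:SMMS:tree:internal:structure:2}(1). If $u(B_j)=\MMS(V)$, then $B_j$ is one of the at most $\ell$ tight bundles, so $B_1,\dots,B_{j-1}$ contain at most $\ell-1$ tight ones (in particular $\ell\ge1$), witnessing $(i,j-1,\ell-1)$ with last bundle at least $\MMS(V)$, giving \ref{item:SMMS:tree:internal:structure:2}(2).

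I expect the bookkeeping in the ``only if'', $v_i\in B_{j+1}$ case to be the main obstacle: one must carefully invoke the tree structure to argue that restricting a feasible partition to a child subtree yields a feasible subproblem instance, and then reconcile the ``$\sum_h\ell_h'\le\ell$'' that drops out with the exact ``$\sum_h\ell_h=\ell$'' demanded by \ref{item:SMMS:tree:internal:structure:1} --- which is precisely where the $\ell$-monotonicity and the crude bound $\ell\le n$ are used. The remaining steps are routine, provided one does not forget to invoke $\MMS(V)>0$ to rule out empty bundles in the \ref{item:SMMS:tree:internal:structure:2} cases and to read ``a solution to $P_{i,j-1,\ell}$ whose $j$-th bundle exceeds $\MMS(V)$'' simply as ``the last-bundle value of the maximizer $P_{i,j-1,\ell}$ exceeds $\MMS(V)$''.
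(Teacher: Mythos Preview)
Your proposal is correct and follows essentially the same route as the paper: the ``if'' direction by gluing children's solutions (case \ref{item:SMMS:tree:internal:structure:1}) or appending an empty bundle (case \ref{item:SMMS:tree:internal:structure:2}), and the ``only if'' direction by a case split on whether the leftover bundle is empty, reading off a child-by-child decomposition when $v_i\in B_{j+1}$ and promoting the bundle containing $v_i$ to the leftover when $B_{j+1}=\emptyset$. You are in fact slightly more careful than the paper on one point: the paper asserts $\sum_h \ell_h=\ell$ for the actual tight-bundle counts, whereas condition~\ref{item:SMMS:tree:T:3} only gives $\sum_h \ell_h'\le\ell$; your explicit appeal to $\ell$-monotonicity to inflate the $\ell_h'$ to $\ell_h$ with $\sum_h\ell_h=\ell$ closes that small gap.
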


\begin{proof}
    We prove each direction.
    
    \paragraph{$\implies$} Assume that \ref{item:SMMS:tree:internal:structure:1} does not hold, as otherwise the claim holds.
    Then, we claim that in any solution for $P_{i, j, \ell}$ the $(j + 1)$-th bundle is empty. Assume that this is not the case and let $P = (B_1, \dots, B_{j + 1})$ be some solution for $P_{i, j, \ell}$. Since $G$ is a tree, every bundle except $B_{j + 1}$ contains only items from a subtree rooted in a single child $v_h$ of $v_i$. Thus, for every child $v_h$ there is a number of bundles $0 \le j_h \le j$ in $P$ that contains only items from the subtree rooted in $v_h$. Let $\ell_h$ denote the number of these bundles that have utility exactly $\MMS(V)$. Note that by \ref{item:SMMS:tree:T:2} and \ref{item:SMMS:tree:T:3}, it must hold that $\sum_h \ell_h = \ell$ and $\sum_{h} j_h = j$. Moreover, a solution to $P_{h, j_h, \ell_h}$ must exist, as the partition created by taking the $j_h$ bundles in $P$ that belong only to the subtree rooted in $v_h$ and redistributing any items in $B_{j + 1}$ from the subtree in any valid way satisfies \ref{item:SMMS:tree:T:1}--\ref{item:SMMS:tree:T:3} for $j_h$ and $\ell_h$. This is a contradiction, as then \ref{item:SMMS:tree:internal:structure:1} must hold.

    Let $P = (B_1, \dots, B_j, \emptyset)$ be some solution for $P_{i, j, \ell}$. Moreover, let $B_j$ be the bundle that contains $v_i$ and $P' = (B_1, \dots, B_j)$.  
    If $u(B_j) = \MMS(V)$, then $P'$ satisfies \ref{item:SMMS:tree:T:1}--\ref{item:SMMS:tree:T:3} for $P_{i, j - 1, \ell - 1}$ and a solution must exist for $P_{i, j - 1, \ell - 1}$. Similarly, if $u(B_j) > \MMS(V)$, then $P'$ satisfies \ref{item:SMMS:tree:T:1}--\ref{item:SMMS:tree:T:3} for $P_{i, j - 1, \ell}$ and a solution must exist for $P_{i, j - 1, \ell}$. Since $u(B_j) \ge \MMS(V)$ by \ref{item:SMMS:tree:T:3}, at least one of these two cases must hold. As any solution to $P_{i, j - 1, \ell}$ or $P_{i, j - 1, \ell - 1}$ maximizes the utility of the $j$-th bundle, at least one of the two conditions in \ref{item:SMMS:tree:internal:structure:2} holds when \ref{item:SMMS:tree:internal:structure:1} does not.

    \paragraph{$\impliedby$} Assume that \ref{item:SMMS:tree:internal:structure:1} holds. For every child $v_h$ of $v_i$ fix some solution $P_{h, j'_h, \ell'_h} = (B(h, j'_h, \ell'_h)_1, \dots, B(h, j'_h, \ell'_h)_{(j'_h + 1)})$, if one exists, for every pair $0 \le j'_h, \ell'_h \le n$. We claim that the partition consisting of the bundles $B(h, j_h, \ell_h)_1, \dots, B(h, j_h, \ell_h)_{j_h}$ for every child $v_h$ of $v_i$ along with the bundle 
    \[B_{j + 1} = \{v_i\} \cup \left(\bigcup_h B(h, j_h, \ell_h)_{(j_h + 1)}\right)\]
    satisfies \ref{item:SMMS:tree:T:1}--\ref{item:SMMS:tree:T:3} for $P_{i,j,\ell}$. Indeed, there are $1 + \sum_h j_h = j + 1$ bundles of which excluding $B_{j + 1}$ at most $\sum_{h} \ell_h = \ell$ have a utility of exactly $\MMS(V)$, and \ref{item:SMMS:tree:T:3} holds. Moreover, every bundle except $B_{j + 1}$ must have a utility of at least $\MMS(V)$, and \ref{item:SMMS:tree:T:2} holds. Finally, since $v_i \in B_{j + 1}$, \ref{item:SMMS:tree:T:1} holds. Thus, a solution exists for $P_{i, j, \ell}$.

    Assume now that \ref{item:SMMS:tree:internal:structure:2} holds. If (1) holds, then it can easily be verified that adding an empty bundle to this solution for $P_{i, j - 1, \ell}$ satisfies \ref{item:SMMS:tree:T:1}--\ref{item:SMMS:tree:T:3} for $P_{i, j, \ell}$. Similarly if (2) holds, \ref{item:SMMS:tree:T:1}--\ref{item:SMMS:tree:T:3} is satisfied for $P_{i, j, \ell}$ by adding an empty bundle to this solution for $P_{i, j - 1, \ell - 1}$. Thus, in either case $P_{i, j, \ell}$ exists.
\end{proof}

Next we show that if a solution exists, then one can be constructed in polynomial time in one of two ways, depending on if \ref{item:SMMS:tree:internal:structure:1} from \cref{lem:SMMS:internal:cases} holds.

\begin{lemma}\label{lem:SMMS:tree:b:construction}
    Let $v_i$ be an internal vertex and $1 \le j \le n$, $0 \le \ell \le n$ such that a solution exists for $P_{i, j, \ell}$ and \ref{item:SMMS:tree:internal:structure:1} from \cref{lem:SMMS:internal:cases} does not hold. Then a solution for $P_{i, j, \ell}$ can be constructed in polynomial time if a solution for $P_{i, j - 1, \ell}$ is known and if $\ell > 0$, a solution for $P_{i, j - 1, \ell - 1}$ is known.
\end{lemma}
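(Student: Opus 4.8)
The plan is to read off the construction directly from Lemma~\ref{lem:SMMS:internal:cases}. Under the hypothesis, a solution for $P_{i,j,\ell}$ exists while condition~\ref{item:SMMS:tree:internal:structure:1} fails, so Lemma~\ref{lem:SMMS:internal:cases} forces condition~\ref{item:SMMS:tree:internal:structure:2} to hold; moreover, as observed inside the proof of that lemma, when \ref{item:SMMS:tree:internal:structure:1} fails the last bundle of \emph{every} solution to $P_{i,j,\ell}$ is empty. Hence the objective value (the utility of the $(j+1)$-th bundle) is forced to be $0$, so it suffices to exhibit, in polynomial time, one partition of the subtree rooted at $v_i$ into $j+1$ connected bundles satisfying \ref{item:SMMS:tree:T:1}--\ref{item:SMMS:tree:T:3} with the $(j+1)$-th bundle empty; any such partition is automatically optimal.

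First I would inspect the known solution $P=(B_1,\dots,B_{j-1},B_j)$ to $P_{i,j-1,\ell}$, where $B_j$ is the remainder bundle ($B_j=\emptyset$ or $v_i\in B_j$). Since $P$ maximizes $u(B_j)$ over all solutions to $P_{i,j-1,\ell}$, the first clause of \ref{item:SMMS:tree:internal:structure:2} is satisfiable \emph{iff} $u(B_j)>\MMS(V)$ for this particular $P$. If $u(B_j)>\MMS(V)$, I output $(B_1,\dots,B_{j-1},B_j,\emptyset)$: bundles $B_1,\dots,B_{j-1}$ have utility $\ge\MMS(V)$ by \ref{item:SMMS:tree:T:2} for $P_{i,j-1,\ell}$, $B_j$ has utility $>\MMS(V)$, the new remainder is empty, and the number of bundles with utility exactly $\MMS(V)$ is unchanged, hence still $\le\ell$; so \ref{item:SMMS:tree:T:1}--\ref{item:SMMS:tree:T:3} all hold.

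Otherwise $u(B_j)\le\MMS(V)$, so the first clause of \ref{item:SMMS:tree:internal:structure:2} fails, and since \ref{item:SMMS:tree:internal:structure:2} holds the second clause must hold; in particular $\ell\ge 1$, so the known solution $P'=(B'_1,\dots,B'_{j-1},B'_j)$ to $P_{i,j-1,\ell-1}$ is available, and by the same maximality argument $u(B'_j)\ge\MMS(V)$. I then output $(B'_1,\dots,B'_{j-1},B'_j,\emptyset)$: all of $B'_1,\dots,B'_j$ have utility $\ge\MMS(V)$, and at most $(\ell-1)+1=\ell$ of them have utility exactly $\MMS(V)$ (the $\ell-1$ bound from \ref{item:SMMS:tree:T:3} for $P_{i,j-1,\ell-1}$, plus possibly $B'_j$), so \ref{item:SMMS:tree:T:3} holds, while \ref{item:SMMS:tree:T:1} and \ref{item:SMMS:tree:T:2} are immediate. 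In either branch the procedure performs a constant number of (polynomial-time) utility comparisons and appends an empty part, so it runs in polynomial time. The one step that needs care — and the crux of the argument — is this ``maximality transfer'': that whenever a clause of \ref{item:SMMS:tree:internal:structure:2} is satisfiable at all, it is witnessed by the specific optimal solution to $P_{i,j-1,\ell}$ (resp.\ $P_{i,j-1,\ell-1}$) we already hold, so a single inspection of each suffices and no further search over partitions is required.
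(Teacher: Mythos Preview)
Your proof is correct and follows the same approach as the paper: both use the observation from the proof of \cref{lem:SMMS:internal:cases} that, when \ref{item:SMMS:tree:internal:structure:1} fails, every solution to $P_{i,j,\ell}$ has empty $(j+1)$-th bundle (so any feasible partition is optimal), and then build such a partition by appending $\emptyset$ to a known solution of $P_{i,j-1,\ell}$ or $P_{i,j-1,\ell-1}$ according to which clause of \ref{item:SMMS:tree:internal:structure:2} holds. You are somewhat more explicit than the paper in spelling out the ``maximality transfer'' step that lets you decide in constant time which clause applies, but this is exactly the intended mechanism.
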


\begin{proof}
    From the proof of \cref{lem:SMMS:internal:cases} we know that in any solution $P = (B_1, \dots, B_j, B_{j + 1})$ for $P_{i, j, \ell}$, that $B_{j + 1} = \emptyset$. Thus, $u(B_{j + 1}) = u(\emptyset) = 0$ and any partition satisfying \ref{item:SMMS:tree:T:1}--\ref{item:SMMS:tree:T:3} for $P_{i, j, \ell}$ is a solution for $P_{i, j, \ell}$.
    
    The second part of the proof of \cref{lem:SMMS:internal:cases} provides a way to construct a partition satisfying \ref{item:SMMS:tree:T:1}--\ref{item:SMMS:tree:T:3} for $P_{i, j, \ell}$ that can easily be verified to be polynomial when solutions are known for $P_{i, j - 1, \ell}$ and, if $\ell > 0$, for $P_{i, j - 1, \ell - 1}$.
\end{proof}

\begin{lemma}\label{lem:SMMS:tree:a:construction}
    Let $v_i$ be an internal vertex and $1 \le j \le n$, $0 \le \ell \le n$ such that a solution for $P_{i, j, \ell}$ exists and \ref{item:SMMS:tree:internal:structure:1} from \cref{lem:SMMS:internal:cases} holds. Then a solution for $P_{i, j, \ell}$ can be constructed in polynomial time if solutions for $P_{h, j_h, \ell_h}$ are known for every child $v_h$ of $v_i$ and pair $0 \le j_h,\ell_h \le n$, where a solution exists.
\end{lemma}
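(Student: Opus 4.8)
The plan is to reduce the construction to a dynamic program over the children of $v_i$, exploiting that the utility of the distinguished last bundle $B_{j+1}$ decomposes additively across the children's subtrees. Since \ref{item:SMMS:tree:internal:structure:1} of \cref{lem:SMMS:internal:cases} holds, there is at least one \emph{admissible combination}: a choice of pairs $(j_h,\ell_h)$, one per child $v_h$ of $v_i$, with $\sum_h j_h=j$, $\sum_h\ell_h=\ell$, and $P_{h,j_h,\ell_h}$ solvable for every $h$. By the $\impliedby$ direction of \cref{lem:SMMS:internal:cases}, every admissible combination yields a feasible partition for $P_{i,j,\ell}$ whose last bundle is $\{v_i\}\cup\bigcup_h B(h,j_h,\ell_h)_{j_h+1}$, where $B(h,a,b)_{a+1}$ denotes the last bundle of the fixed known solution to $P_{h,a,b}$; this partition has last-bundle utility $u(v_i)+\sum_h u\bigl(B(h,j_h,\ell_h)_{j_h+1}\bigr)$. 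Since the children's subtrees are pairwise disjoint and each $P_{h,j_h,\ell_h}$ maximises the utility of its own last bundle, the largest utility obtainable from an admissible combination equals $u(v_i)+\max_{\mathrm{adm.}}\sum_h u\bigl(B(h,j_h,\ell_h)_{j_h+1}\bigr)$.

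I would then check that this value is in fact the optimum over \emph{all} feasible partitions for $P_{i,j,\ell}$, not just those built this way. If a feasible partition has empty last bundle, its last-bundle utility is $0$, which is dominated since $u(v_i)+\sum_h u(\cdot)\ge 0$. Otherwise $v_i$ lies in the last bundle; because $G$ is a tree, each of the remaining $j$ bundles lies in a single child subtree, and restricting the partition to the subtree of a child $v_h$ produces a feasible partition for some $P_{h,j_h,\ell_h}$ with $\sum_h j_h=j$ and $\sum_h\ell_h\le\ell$, whose last bundle is the portion of the global last bundle inside that subtree (this is exactly the decomposition used in the $\implies$ direction of \cref{lem:SMMS:internal:cases}). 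Hence the global last-bundle utility is at most $u(v_i)+\sum_h u\bigl(B(h,j_h,\ell_h)_{j_h+1}\bigr)$, and after raising some $\ell_h$ until the sum is exactly $\ell$ (solvability and the optimal last-bundle utility are monotone nondecreasing in $\ell_h$, and $\ell\le n$ leaves room) the combination becomes admissible. So an optimal feasible partition is realised by an admissible combination, and it suffices to find the admissible combination attaining the above maximum.

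The remaining task — and the only real obstacle, since there are exponentially many combinations when $v_i$ has many children — is to compute that maximum efficiently. Enumerate the children as $v_{h_1},\dots,v_{h_d}$ and run a dynamic program: let $T[k][j'][\ell']$ be the maximum of $\sum_{t\le k}u\bigl(B(h_t,a_t,b_t)_{a_t+1}\bigr)$ over choices of pairs $(a_t,b_t)$ for $t\le k$ with $\sum_{t\le k}a_t=j'$, $\sum_{t\le k}b_t=\ell'$ and each $P_{h_t,a_t,b_t}$ known to be solvable, with value $-\infty$ if no such choice exists. Then $T[0][0][0]=0$ and $T[k][j'][\ell']=\max_{(a,b)}\bigl(T[k-1][j'-a][\ell'-b]+u(B(h_k,a,b)_{a+1})\bigr)$, the maximum over $0\le a,b\le n$ with $P_{h_k,a,b}$ solvable; the table has $O(dn^2)$ entries and each is filled in $O(n^2)$ time, so the whole computation is polynomial. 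By the previous paragraph $T[d][j][\ell]$ equals $\max_{\mathrm{adm.}}\sum_h u\bigl(B(h,j_h,\ell_h)_{j_h+1}\bigr)$; back-tracking through $T$ recovers an admissible combination $(j_h,\ell_h)_h$, and I would output the partition consisting of the first $j_h$ bundles of each chosen $P_{h,j_h,\ell_h}$ together with $\{v_i\}\cup\bigcup_h B(h,j_h,\ell_h)_{j_h+1}$. By the discussion above this partition is a solution to $P_{i,j,\ell}$, and every step runs in polynomial time, completing the construction.
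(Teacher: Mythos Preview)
Your proposal is correct and follows essentially the same approach as the paper: both decompose $B_{j+1}$ additively across the children's subtrees, argue that the optimum is realised by an admissible combination (via the same tree-decomposition used in the $\implies$ direction of \cref{lem:SMMS:internal:cases}), and then find the best admissible combination with a dynamic program over the children indexed by the accumulated $j'$ and $\ell'$. Your explicit handling of the case $\sum_h\ell_h<\ell$ via monotonicity in $\ell_h$ is slightly more careful than the paper's statement, but otherwise the arguments coincide.
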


\begin{proof}
    For every child $v_h$ of $v_i$ fix for every pair $0 \le j'_h, \ell'_h \le n$ some solution $P_{h, j'_h, \ell'_h} = (B(h, j'_h, \ell'_h)_1, \dots, B(h, j'_h, \ell'_h)_{(j'_h + 1)})$, if one exists. To construct a solution, find some combination of pairs $j^*_h, \ell^*_h$, one for each child $v_h$ of $v_i$, that maximizes
    \begin{equation}\label{eq:maximize}
        \sum_{h} u(B(h, j^*_h, \ell^*_h)_{(j^*_h + 1)})
    \end{equation}%
    across all such combinations of pairs with $\sum_{h} j^*_h = j$ and $\sum_{h} \ell^*_h = \ell$, and for every child $v_h$ a solution for $P_{h, j^*_h, \ell^*_h}$ exists. Then, we claim that the partition consisting of bundles $B(h, j^*_h, \ell^*_h)_1, \dots, B(h, j^*_h, \ell^*_h)_{j^*_h}$ for each child $v_h$ along with the bundle
    \[B_{j + 1} = \{v_i\} \cup \left(\bigcup_h B(h, j^*_h, \ell^*_h)_{(j^*_h + 1)}\right)\]
    is a solution to $P_{i, j, \ell}$.
    
    It was shown in the second part of the proof of \cref{lem:SMMS:internal:cases} that the partition satisfies \ref{item:SMMS:tree:T:1}--\ref{item:SMMS:tree:T:3}. The only way that the partition is not a solution to $P_{i, j, \ell}$ is if $u(B_{j + 1})$ is not maximum under \ref{item:SMMS:tree:T:1}--\ref{item:SMMS:tree:T:3}. Assume that this is the case and let $P' = (B'_1, \dots, B'_{j + 1})$ be some solution for $P_{i, j, \ell}$. Since $u(B'_{j + 1}) > u(B'_{j + 1}) \ge 0$ it holds that $B'_{j + 1} \neq \emptyset$. Using the same decomposition argument for $P'$ as was used in the first part of the proof of \cref{lem:SMMS:internal:cases}, it follows that there exists a combination of pairs $j'_h, \ell'_h$, one for each child $v_h$ of $v_i$, with $\sum_{h} j'_h = j$, $\sum_{h} \ell'_h = \ell$, for every child $v_h$ of $v_i$ a solution for $P_{h, j'_h, \ell'_h}$ exists and
    \[u(B'_{j + 1}) = u(\{v_i\}) + \sum_{h} u\left(B(h, j'_h, \ell'_h)\right)\;.\]
    As $B'_{j + 1}$ is constructed in the same way as $B_{j + 1}$ in our partition was constructed, only with the pairs $j'_h, \ell'_h$. This is a contradiction, as the chosen combination of pairs $j^*_h, \ell^*_h$ cannot have maximized \cref{eq:maximize}. Consequently, the proposed partition is a solution for $P_{i,j,\ell}$.

    Determining an optimal combination of $j^*_h$ and $\ell^*_h$ may take exponential time if all combinations of pairs are tried directly. Instead, we show that an optimal combination can be found in polynomial time using dynamic programming.

    For some $0 \le j' \le \ell$, $0 \le \ell' \le \ell$ and subset $C$ of children of $v_i$, let $U(C, j', \ell')$ denote the maximum utility of
    \begin{equation}\label{eq:maximize3}
        \sum_{v_h \in C} u(B(h, j_h, \ell_h)_{(j_h + 1)})
    \end{equation}
    over all combinations of pairs $j_h, \ell_h$, one for each child in $C$, such that $\sum_{v_h \in C} j_h = j'$, $\sum_{v_h \in C} \ell_h = \ell'$ and a solution to $P_{h, j_h, \ell_h}$ exists for every $v_h \in C$. If no such combination exists, then let $U(C, j', \ell') = -\infty$.
    
    Select some child $v_h$ of $v_i$. Then, an optimal choice of $j^*_h,\ell^*_h$ can be made by selecting the pair $0 \le j_h \le j$ and $0 \le \ell_h \le \ell$, where a solution to $P_{h, j_h, \ell_h}$ exists and that maximizes
    \begin{equation}\label{eq:maximize2}
        u(B(h, j_h, \ell_h)_{(j_h + 1)}) + U(C, j - j_h, \ell - \ell_h)
    \end{equation}
    over all such pairs, where $C$ is the set of all children of $v_i$ except $v_h$. This follows directly from the definition of $U$, as expanding $U$ in \cref{eq:maximize2} yields \cref{eq:maximize}.

    For every $0 \le j' \le j$ and $0 \le \ell' \le \ell$, we can find a solution to $U(C, j', \ell')$ in a similar way. Instead of solving $U(C, j', \ell')$, we can solve our original problem (\cref{eq:maximize}) for the subset $C$ of children, $j'$ and $\ell'$, since \cref{eq:maximize3} is in this case equivalent to \cref{eq:maximize}. In other words, we can solve our original problem at most $(n + 1)^2$ times with one less child. We can again select some new child $v_g \in C$ and solve the problem as for $v_h$. Notice that solving $U(C, j', \ell')$ now depends only on solutions to $U(C \setminus \{v_g\}, j'', \ell'')$, $0 \le j'' \le j'$ and $0 \le \ell'' \le \ell'$ for every $j'$ and $\ell'$. Thus, the problems share the same dependencies on $(n + 1)^2$ new smaller versions of our problem. If this process is repeated until the subset of children contains only a single child, a trivial case, a total of $\le r(n + 1)^2$ subproblems need to be solved, where $r$ is the number of children of $v_i$. Each subproblem can be solved in polynomial time when solutions are known for the subproblems it depends on, as there are at most $(n + 1)^2$ choices for the pair $j_h$ and $\ell_h$ in \cref{eq:maximize2}. Thus, if follows that an optimal combination for $j^*_h,\ell^*_h$ can be found in polynomial time.
\end{proof}

We are now ready to prove \cref{lem:SMMS:tree:internal:node}.

\lemSMMSTreeInternalNode*

\begin{proof}
    We consider two cases, depending on if $v_i$ is a leaf or internal vertex in the tree. Recall that $\MMS(V)$ can be found in polynomial time using a result of \citet{Perl:81}. For this reason, we assume for the remainder of the proof that $\MMS(V)$ is known. Additionally, recall that the problem only considers cases in which $\MMS(V) > 0$. Finally, we assume that $j \neq 0$, as then \cref{lem:SMMS:tree:j:0} provides a way to find a solution in polynomial time.

    \paragraph{$v_i$ is a leaf vertex}
    The subtree rooted in $v_i$ contains only $v_i$. Since $\MMS(V) > 0$, there are exactly two possible partitions that can satisfy \ref{item:SMMS:tree:T:2}, namely $(\{v_i\})$ and $(\{v_i\}, \emptyset)$. As there is a constant number of feasible partitions, we can for any pair $j, \ell$ easily check if \ref{item:SMMS:tree:T:1}--\ref{item:SMMS:tree:T:3} are satisfied for all feasible partitions in polynomial time. Consequently, a solution to $P_{i, j, \ell}$ can in polynomial time be computed or determined to not exist.
    
    \paragraph{$v_i$ is an internal vertex} \cref{lem:SMMS:internal:cases} provides a way to check if a solution exists for $P_{i, j, \ell}$. We can first check if \ref{item:SMMS:tree:internal:structure:1} holds. Notice that this can be done in polynomial time using the same dynamic programming approach as for finding optimal $j^*_h$ and $\ell^*_h$ in the proof of \cref{lem:SMMS:tree:a:construction}. If \ref{item:SMMS:tree:internal:structure:1} holds, then a solution can be found in polynomial time by \cref{lem:SMMS:tree:a:construction}.
    
    If \ref{item:SMMS:tree:internal:structure:1} does not hold, we can check \ref{item:SMMS:tree:internal:structure:2}. If \ref{item:SMMS:tree:internal:structure:2} does not hold, then by \cref{lem:SMMS:internal:cases} no solution exists for $P_{i, j, \ell}$.
    Checking \ref{item:SMMS:tree:internal:structure:2} and, if it holds, constructing a solution using \cref{lem:SMMS:tree:a:construction}, requires having found a solution for $P_{i, j - 1, \ell}$ and/or $P_{i, j - 1, \ell - 1}$ (only if $\ell \neq 0$) or determined that a solution does not exist either. As both \ref{item:SMMS:tree:internal:structure:2} and \cref{lem:SMMS:tree:a:construction} require the utility of the $j$-th bundle to be at least $\MMS(V) > 0$, the $j$-th bundle in any solution to $P_{i, j - 1, \ell}$ or $P_{i, j - 1, \ell - 1}$ cannot be empty if \ref{item:SMMS:tree:internal:structure:2} holds. Using the same logic as in the proof of \cref{lem:SMMS:internal:cases}, \ref{item:SMMS:tree:internal:structure:1} must hold for $P_{i, j - 1, \ell}$ or $P_{i, j - 1, \ell - 1}$ for this to be the case. Consequently, we can determine if \ref{item:SMMS:tree:internal:structure:2} holds and in that case find a solution by only considering \ref{item:SMMS:tree:internal:structure:1} for both $P_{i, j - 1, \ell}$ and $P_{i, j - 1, \ell - 1}$. As already seen, this can be done in polynomial time and it follows that we can find a solution to $P_{i, j, \ell}$ or determine that one does not exist in polynomial time.
\end{proof}

\begin{proof}[Proof of \cref{lem:SMMS-additive-trees-polynomial}]
    By \cref{lem:SMMS:tree:internal:node} we can compute $P_{i, j, \ell}$ in polynomial time for any vertex $v_i$ and pair $j, \ell$ if we have solved the problem for every child of $v_i$.
    Thus, $P_{i, j, \ell}$ can be computed through a post-order traversal of the tree. Since there is a polynomial number, $|V|(n + 1)^2$, of solutions to compute, we can determine all in polynomial time. Given $P_{r, n, \ell}$ for $0 \le \ell \le n$, an SMMS connected allocation can be constructed in polynomial time using \cref{lem:SMMS:tree:construct}.
\end{proof}

\begin{corollary}\label{cor:PMMS+SMMS+trees}
    For a tree $G$ and $n$ agents with identical additive utilities, a PMMS and SMMS connected allocation can be found in polynomial time.
\end{corollary}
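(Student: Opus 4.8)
The statement is essentially a packaging of two results already proved, so the plan is to verify that the hypotheses of \cref{lem:PMMS+SMMS-exists} are met when $G$ is a tree with identical additive utilities. Recall that \cref{lem:PMMS+SMMS-exists} asserts that \cref{alg:PMMS+SMMS} outputs a PMMS and SMMS connected allocation, and that it runs in polynomial time provided (a)~a connected SMMS allocation can be computed in polynomial time on the graphs passed to the recursive calls, and (b)~$u(X)$ can be evaluated in polynomial time for every $X\subseteq V$. Condition~(b) is immediate for additive utilities, since $u(X)=\sum_{v\in X}u(v)$.

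For condition~(a), I would observe that every graph on which \cref{alg:PMMS+SMMS} is invoked is an induced subgraph of $G$ obtained by deleting the items handed to losers; deleting vertices from a tree leaves a forest, and \cref{alg:PMMS+SMMS} recurses separately on each connected component, so each recursive call operates on a \emph{tree}, with the (still identical, still additive) utilities inherited from $u$. Hence \cref{lem:SMMS-additive-trees-polynomial} applies at every level of the recursion and supplies the required polynomial-time SMMS subroutine. Combining this with (b), \cref{lem:PMMS+SMMS-exists} yields a polynomial-time algorithm producing a connected allocation that is simultaneously PMMS and SMMS, which is the claim.

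The only point requiring care — and the closest thing to an obstacle — is confirming that the "tree" property really is preserved under the vertex deletions and the splitting into connected components performed by \cref{alg:PMMS+SMMS}, so that \cref{lem:SMMS-additive-trees-polynomial} can legitimately be invoked on each subproblem rather than just on the top-level instance; once that is noted, the proof is a one-line appeal to \cref{lem:PMMS+SMMS-exists} and \cref{lem:SMMS-additive-trees-polynomial}.
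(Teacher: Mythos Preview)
Your proposal is correct and follows exactly the route the paper intends: the corollary is an immediate consequence of combining \cref{lem:PMMS+SMMS-exists} with \cref{lem:SMMS-additive-trees-polynomial}, and your observation that each recursive call of \cref{alg:PMMS+SMMS} again operates on a tree (since deleting vertices from a tree yields a forest whose components are trees) is precisely the check needed to justify invoking \cref{lem:SMMS-additive-trees-polynomial} at every level.
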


With an approach similar to that of \citet{Truszczynski:20} for MMS in unicyclic graphs---graphs that contain at most a single cycle---\cref{lem:SMMS-additive-trees-polynomial,cor:PMMS+SMMS+trees} can be extended to unicyclic graphs (see \cref{app:identical}).

\begin{lemma}
    For a connected unicyclic graph $G$ and $n$ agents with identical additive utilities, a PMMS and SMMS connected allocation can be found in polynomial time.
\end{lemma}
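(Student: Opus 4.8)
The plan is to mimic the proof of \cref{cor:PMMS+SMMS+trees}: run \cref{alg:PMMS+SMMS} on $G$ and invoke \cref{lem:PMMS+SMMS-exists}, which guarantees that the output is a PMMS and SMMS connected allocation and that the running time is polynomial \emph{provided} that, in every recursive call, an SMMS connected allocation can be computed in polynomial time (bundle utilities are trivially computable under additivity). The recursive calls of \cref{alg:PMMS+SMMS} operate on the connected components of the graph obtained after deleting the vertices given to losers; each such component is an induced subgraph of a unicyclic graph, hence contains at most one cycle and is therefore itself a tree or a connected unicyclic graph. It thus suffices to extend \cref{lem:SMMS-additive-trees-polynomial} to unicyclic graphs, i.e., to show that an SMMS connected allocation of a connected unicyclic graph with identical additive utilities $u$ can be found in polynomial time; the remaining bookkeeping is then handled by \cref{lem:PMMS+SMMS-exists} (and \cref{lem:SMMS-additive-trees-polynomial} for the tree components) exactly as in \cref{cor:PMMS+SMMS+trees}.

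To compute an SMMS connected allocation of a connected unicyclic $G=(V,E)$, let $C$ be its unique cycle, with edge set $\{e_1,\dots,e_k\}$ (if $G$ is a tree we are already done by \cref{lem:SMMS-additive-trees-polynomial}). I would split every connected allocation $A$ of $G$ into two types. Type~1: some bundle contains all of $C$. Type~2: otherwise; then the vertices of $C$ lie in at least two bundles, so at least one cycle edge $e_i$ joins two distinct bundles, and since $e_i$ is not used inside any bundle, deleting it keeps all bundles connected, so $A$ is a connected allocation of the tree $G-e_i$ (and conversely every connected allocation of $G-e_i$ is one of $G$). For Type~1, contract $C$ to a single vertex $c$: since $G$ is unicyclic, $G/C$ is a simple tree; define $u'(c)=\sum_{v\in C}u(v)$ and $u'=u$ elsewhere. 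Then the Type-1 connected allocations of $(G,u)$ are in bijection with the connected allocations of $(G/C,u')$, by replacing $C$ with $c$ in the bundle containing it, and this bijection preserves the utility of every bundle, and hence both the value $\min_j u(A_j)$ and the set of losers.

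It follows that $\MMS_G(V)$ --- writing $\MMS_G$ for the maximin value over connected allocations of $G$ --- equals the maximum of $\MMS_{G-e_i}(V)$ over $i\in[k]$ and of the maximin value of the instance $(G/C,u')$, since every connected allocation of $G$ is of Type~2 (captured by some $G-e_i$) or of Type~1 (captured by $G/C$), while all connected allocations of the $G-e_i$ and of $G/C$ are in turn connected allocations of $G$. So I would compute, via \cref{lem:SMMS-additive-trees-polynomial}, an SMMS connected allocation of each of the $k$ trees $G-e_i$ and of the tree $(G/C,u')$ --- at most $|V|+1$ tree instances --- discard those of value below $\MMS_G(V)$, and return one of the remaining candidates with the fewest losers. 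This is an SMMS connected allocation of $G$: an SMMS allocation $A$ of $G$ belongs to one of these $k+1$ families; within that family $A$ has value $\MMS_G(V)$, which by the maximum characterization equals that family's own maximin value, so $A$ is an MMS allocation of the family; hence the family's SMMS allocation has at most as many losers as $A$, and it cannot have strictly fewer, since it would then be, back in $G$, a connected allocation of value $\MMS_G(V)$ with fewer losers, contradicting SMMS-optimality of $A$ in $G$.

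The connectivity-preservation checks for cycle-edge deletion and cycle contraction, and the verification that $G/C$ is a simple tree, are routine; the only step requiring genuine care is this structural dichotomy --- every connected allocation of a unicyclic graph lives in some $G-e_i$ unless a single bundle swallows the cycle --- together with confirming that contraction preserves utilities and loser counts, so that the best of the $k+1$ tree instances really recovers an SMMS allocation of $G$. With the extended SMMS subroutine in hand, feeding it into \cref{lem:PMMS+SMMS-exists} gives the claimed polynomial-time algorithm producing a PMMS and SMMS connected allocation.
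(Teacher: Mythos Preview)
Your proposal is correct and follows essentially the same approach as the paper: invoke \cref{lem:PMMS+SMMS-exists} and reduce the SMMS computation on a unicyclic graph to at most $|V|$-many tree instances obtained by deleting one cycle edge at a time, then pick the best. The one difference is that you add a separate ``Type~1'' contraction case $G/C$ for allocations in which a single bundle contains the whole cycle; the paper observes that this case is already covered by the edge-deletion instances, since if all of $C$ lies in one bundle then that bundle remains connected in $G-e_i$ for \emph{every} cycle edge $e_i$, so the contraction step is redundant (though not incorrect).
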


\begin{proof}
    \Cref{lem:PMMS+SMMS-exists} guarantees that such an allocation exists and can be found in polynomial time if an SMMS allocation can be found in polynomial time. We will show that an SMMS allocation can be found for any unicyclic graph by applying the algorithm from \cref{lem:SMMS-additive-trees-polynomial} at most $m$ times. This follows a similar approach to that of \citet{Truszczynski:20} for determining the MMS of an agent in unicyclic graphs.
    
    Let $A$ be an SMMS allocation for a unicyclic graph $G$. Then either one of the edges in the cycle connects two bundles in $A$ or all the items on the cycle belong to the same bundle in $A$. In either case, there is at least one edge $e$ on the cycle that can be removed without making $A$ infeasible. Let $G_e$ by the graph obtained by removing $e$ from $G$. Then, $A$ is an SMMS allocation of $G_e$ and any SMMS allocation of $G_e$ is an SMMS allocation of $G$. While we do not know the edge $e$ apriori, one can simply remove each edge in the cycle in turn, and find an SMMS allocation for each of the resulting graphs. One of these allocations is SMMS for $G$. 
    This allocation can be found by comparing the allocations and selecting the one in which the worst-off agent receives the greatest utility and subject to this the one that minimizes the number of "losers". Since removing an edge from the cycle produces a tree, we can determine each of these SMMS allocations in polynomial time by \cref{lem:SMMS-additive-trees-polynomial}. Further, at most $m$ calls to \cref{lem:SMMS-additive-trees-polynomial} is required, as the cycle contains at most $m$ items. Thus, an SMMS allocation can be found in polynomial time.
\end{proof}

Unfortunately, it is unlikely that polynomial time algorithms for SMMS can be found in most cases with more complex graph classes or utility functions. 
Note that a utility function $u$ is \emph{subadditive} if $u(X) + u(Y) \ge u(X \cup Y)$ holds for every $X,Y\subseteq V$.

\begin{lemma}\label{lem:subadditive:binary:star:hard}
Even for a star and $n$ agents who have identical subadditive utility functions with binary marginal gains, finding an SMMS connected allocation may require an exponential number of utility queries on average.
\end{lemma}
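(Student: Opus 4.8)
The plan is to build, for each $k$, a hard star instance and use an information‑hiding (Yao‑type) argument. Take the star with centre $c$ and leaf set $L$ of size $m=2k$, with $n=k+1$ agents. The instance is parametrised by a hidden subset $T\subseteq L$ with $|T|=k$ (there are $\binom{2k}{k}$ of them), and all $n$ agents share the utility function
\[
u_T(X)=\min\{|X|,1\}+\mathbf{1}\{c\in X\}\cdot g_T(X\cap L),\qquad
g_T(S)=\mathbf{1}\{\, |S|\ge k+1 \ \text{or}\ S\supseteq T \,\}.
\]
First I would check that $u_T$ is a legitimate identical utility function of the required type: it is monotone, takes only the values $0,1,2$ (value $0$ only at $\emptyset$), all marginal gains lie in $\{0,1\}$ (the one point to verify is adding a leaf inside a bundle already containing $c$, where one uses that $g_T$ is monotone and that $\min\{|X|,1\}$ is already saturated), and it is subadditive since $u_T(A)+u_T(B)\ge 2\ge u_T(A\cup B)$ whenever $A,B\neq\emptyset$ and the inequality is trivial otherwise. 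So $u_T$ is identical, subadditive, with binary marginal gains.

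Second I would pin down the SMMS allocations. On a star every connected bundle is empty, a single leaf, or a set $\{c\}\cup S$; hence the centre agent receives some $\{c\}\cup S$ and each of the other $k$ agents receives at most one leaf, so any allocation leaving no agent empty has $|S|=k$. A single leaf is worth $1$, and $u_T(\{c\}\cup S)=1+\mathbf{1}\{S=T\}$ when $|S|=k$; therefore $\MMS(V)=1$, and in any MMS (hence empty‑free) allocation the $k$ non‑centre agents are all losers while the centre agent is a loser exactly when $S\neq T$. Consequently the minimum number of losers is $k$, attained precisely when the centre bundle equals $\{c\}\cup T$, so every SMMS connected allocation gives $\{c\}\cup T$ to the centre agent and an algorithm returning an SMMS allocation must in effect identify $T$.

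Third I would run the query lower bound. An inspection shows that $u_T(X)$ is independent of $T$ unless $c\in X$ and $|X\cap L|=k$, in which case $u_T(X)=1+\mathbf{1}\{X\cap L=T\}$; thus every query either carries no information about $T$ or tests one candidate $k$‑set $S$, returning $2$ if $S=T$ and $1$ otherwise. Draw $T$ uniformly over the $\binom{2k}{k}$ size‑$k$ subsets. Until an algorithm queries the set $\{c\}\cup T$ its transcript is the fixed ``generic'' transcript, so it cannot distinguish $T$ from the not‑yet‑tested candidates and cannot output a correct allocation; hence a correct deterministic algorithm must keep issuing the special queries $\{c\}\cup S$ ($|S|=k$) until it hits $S=T$, and since the order of these queries is fixed before the hit, the expected number of queries is at least $\tfrac{1}{2}\bigl(\binom{2k}{k}+1\bigr)=2^{\Omega(k)}=2^{\Omega(n)}$. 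By Yao's principle the same expected bound applies to randomized algorithms on a worst‑case input.

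The main obstacle is the construction of the family $u_T$: it must simultaneously be a valid subadditive function with $\{0,1\}$ marginal gains, have its SMMS value genuinely depend on $T$, and leak nothing about $T$ through ``cheap'' queries. The tension is that the obvious encodings of $T$ --- a bonus on all supersets of $T$, or a per‑leaf value for the elements of $T$ --- are recovered by near‑complete or singleton queries in a linear number of probes, giving only a polynomial bound. The function above circumvents this by letting $g_T$ depend only on $|S|$ except on the single ``shell'' $|S|=k$, so that all small and all large centre‑bundle queries are uninformative and $\{c\}\cup T$ is the unique revealing query; verifying that this still yields binary marginal gains and subadditivity is the one genuinely fiddly calculation.
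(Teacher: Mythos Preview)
Your construction is essentially identical to the paper's: with $n=k+1$ agents on a star with $2k=2n-2$ leaves and a hidden size-$k$ leaf set $T$, your utility $u_T$ agrees with the paper's function on every connected bundle (and differs only harmlessly on disconnected leaf-only sets of size $>n$), and both arguments reduce to a needle-in-a-haystack search over the $\binom{2k}{k}$ candidate centre bundles $\{c\}\cup S$ with $|S|=k$. Your write-up is somewhat more careful than the paper's --- you verify subadditivity and binary marginals explicitly and spell out the Yao-style averaging --- but the idea, the instance, and the bound are the same.
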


\begin{proof}
    Let there be $n$ agents and consider the star graph $S_{2n - 2}$. Further, let $S$ be an arbitrary subset of $n - 1$ leaves and $v$ the central vertex of the star. We define the utility function $u$ as follows:
    \[
    u(B) =
    \begin{cases}
        2 & |B| > n \text{ or } B = S \cup \{v\} \\
        1 & |B| \le n \text{ and } S \notin B \\
        0 & B = \emptyset.
    \end{cases}
    \]
    One can easily verify that the utility function is subadditive and that the marginal gains are binary.

    Since the graph is a star, all but one agent receives a bundle consisting of at most a single leaf. Thus, at most one agent may receive a bundle with utility greater than $1$. In fact, there is a single SMMS allocation. In this allocation one agent receives the bundle $B = S \cup \{v\}$ and each of the other agents receives a single leaf. No matter which subset $S$ is chosen, the only difference in the utility function is for the bundle $S \cup \{v\}$. Thus, the only way to find this allocation is to determine the set $S$ by querying the utility function with subsets of leaf vertices with size $n - 1$. The number of possible subsets of leaf vertices with size $n - 1$ is
    \[
    {2n - 2 \choose n - 1} \ge \left(\frac{2n - 2}{n - 1}\right)^{n - 1} = 2^{n - 1} \;.
    \]
    Assuming $S$ is chosen uniformly at random, on average it would take at least $2^{n - 2}$ queries to determine $S$.
\end{proof}

\begin{lemma}\label{lem:additive:bipartite:two:hard}
Even for a bipartite graph and two agents with identical additive utilities, computing a PMMS connected allocation is NP-hard.
\end{lemma}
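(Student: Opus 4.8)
The plan is a polynomial-time (Turing) reduction from \textsc{Partition}. The starting observation is that, for two agents with a common additive utility $u$, a PMMS connected allocation is exactly a maximin connected bipartition of $V$: if $(X,Y)$ is such an allocation and the two agents are adjacent under it, then $u(X)\ge\PMMS(V)$ and $u(Y)\ge\PMMS(V)$, while $(X,Y)\in\Pi_2(V)$ forces $\min\{u(X),u(Y)\}\le\mu^2(V)=\PMMS(V)$; the empty-bundle clause of the definition only helps when $\PMMS(V)=0$, which the construction will avoid. Hence any algorithm returning a PMMS connected allocation also reveals the value $\mu^2(V)$.

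Given a \textsc{Partition} instance of positive integers $a_1,\dots,a_n$ (assume $n\ge2$) with total $\sum_i a_i$, I would construct the complete bipartite graph $K_{2,n}$ with sides $\{c_1,c_2\}$ and $\{v_1,\dots,v_n\}$, and put two agents with the identical additive utility $u(c_1)=u(c_2)=0$ and $u(v_i)=a_i$. This graph is bipartite by construction (and, for $n\ge2$, is neither a path nor a tree, so the statement genuinely goes beyond \cref{cor:PMMS+SMMS+trees}).

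The technical core is to show that $\PMMS(V)=\mu^2(V)=\max_{S\subseteq[n]}\min\{\sum_{i\in S}a_i,\sum_{i\notin S}a_i\}$, i.e.\ the optimal \textsc{Partition} objective. For the ``$\ge$'' direction, every $S\subseteq[n]$ yields the connected allocation $(\{c_1\}\cup\{v_i:i\in S\},\{c_2\}\cup\{v_i:i\notin S\})$, since $c_1$ (resp.\ $c_2$) is adjacent to all leaves on its side. For the ``$\le$'' direction I would use a short case analysis on an arbitrary connected bipartition $(X,Y)$ of $K_{2,n}$: a part containing no center can only be a single leaf $\{v_i\}$ (leaves are pairwise non-adjacent), contributing $\min\{a_i,\sum_j a_j-a_i\}$, the objective for $S=\{i\}$; a part containing both centers forces the other part to be a single leaf; and otherwise each part has exactly one center, so $(X,Y)$ has the displayed form for some $S$. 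Thus $\mu^2(V)$ equals the optimal \textsc{Partition} objective, which equals $\tfrac12\sum_i a_i$ iff the instance is a yes-instance and is strictly smaller otherwise (using $a_i>0$; in particular $\mu^2(V)>0$, so the empty-bundle corner case is vacuous).

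To finish, run the hypothetical polynomial-time algorithm for a PMMS connected allocation on this instance, let $(X,Y)$ be its output, and answer ``yes'' iff $2\min\{u(X),u(Y)\}=\sum_i a_i$; by the first paragraph $\min\{u(X),u(Y)\}=\mu^2(V)$, so this correctly decides \textsc{Partition}, and the reduction is clearly polynomial, giving NP-hardness. I do not expect a serious obstacle: the only steps requiring care are the case analysis pinning down $\mu^2(V)$ on $K_{2,n}$ and checking that the degenerate single-vertex bundles and the empty-bundle provision in the PMMS definition cannot produce a better-than-$\mu^2(V)$ allocation, both of which are handled above.
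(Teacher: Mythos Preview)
Your proposal is correct and follows the same high-level approach as the paper: reduce from \textsc{Partition} and use that PMMS and MMS coincide for two agents. The paper's own proof, however, simply cites Theorem~5 of \citet{Greco:20} for the bipartite-graph MMS hardness and appeals to the PMMS/MMS equivalence, whereas you give a fully self-contained reduction on $K_{2,n}$ with zero-weight centers; your explicit case analysis verifying $\mu^2(V)$ equals the optimal \textsc{Partition} objective is sound, and your handling of the empty-bundle clause (ruled out because $\mu^2(V)>0$ for $n\ge2$ positive inputs) is correct.
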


\begin{proof}
    Follows directly from Theorem~5 in \citet{Greco:20}, which shows an equivalent result for MMS through a reduction from PARTITION. Since PMMS and MMS are equivalent when there are only two agents, the result also holds for PMMS.
\end{proof}

\section{Conclusion}
We introduced a novel local adaptation of PMMS into graph fair division. Our work made a major first step in addressing critical questions related to the existence and algorithmic aspects of PMMS within this context. We leave several important questions open. Most notably, the existence of a PMMS connected allocation is inevitably a challenging open problem due to the fact that this problem is at least as difficult as in the standard setting of fair division. Another direction is to consider a ``local" variant of other solution concepts, such as MMS. For instance, instead of pairwise comparison, agents may compare their bundles with all the bundles of their neighbors. 



\bibliographystyle{ACM-Reference-Format} 


\clearpage
\appendix

\section*{Appendix}

\section{Omitted material from Section~\ref{sec:twoagents}}\label{appendix:twoagents}

A utility function $u$ is \emph{submodular} if for any two subsets of items $X, Y \subseteq V$, it holds that $u(X) + u(Y) \ge u(X \cup Y) + u(X \cap Y)$. Utility function $u$ has \emph{binary marginal gains} if $u(X \cup \{v\}) - u(X) \in \{0, 1\}$ for every $X \subseteq V$ and $v \in V$. In other words, binary marginal gains is equivalent to that the utility of a bundle increases by $1$ or remains the same whenever a new item is introduced to the bundle.

\begin{proposition}\label{prop:PMMS:two:nonadditive}
Even for a cycle of length four and two agents with submodular utilities with binary marginal gains, a PMMS connected allocation may not exist.
\end{proposition}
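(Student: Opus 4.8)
The plan is to exhibit one explicit instance. Label the four vertices of the cycle $v_1,v_2,v_3,v_4$ in cyclic order, and for $i \in \{1,2\}$ define
\[
u_i(X) = [\,X \cap S_i \neq \emptyset\,] + [\,X \cap T_i \neq \emptyset\,],
\]
where $[\cdot]$ is the Iverson bracket, $(S_1,T_1) = (\{v_1,v_4\},\{v_2,v_3\})$ and $(S_2,T_2) = (\{v_1,v_2\},\{v_3,v_4\})$. First I would verify that each $u_i$ is submodular with binary marginal gains: a function of the form $X \mapsto \sum_k [\,X \cap S_k \neq \emptyset\,]$ with the $S_k$ partitioning $V$ is a sum of monotone submodular functions (each summand is the coverage of a single element by $S_k$), hence monotone and submodular; and since every vertex lies in exactly one part, adding a vertex raises at most one summand, so all marginal gains lie in $\{0,1\}$.

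Next I would pin down the relevant PMMS values. The connected bundles of the $4$-cycle are $\emptyset$, the four singletons, the four pairs of consecutive vertices, the four triples of consecutive vertices, and $V$; so the connected bipartitions of $V$ are the four ``one vertex versus the other three'' splits and the two splits into consecutive pairs, namely $(\{v_1,v_2\},\{v_3,v_4\})$ and $(\{v_2,v_3\},\{v_4,v_1\})$. Direct evaluation gives $u_1(\{v_1,v_2\}) = u_1(\{v_3,v_4\}) = 2$, while $u_1$ assigns value $\le 1$ to every singleton and to both of $\{v_2,v_3\}$ and $\{v_4,v_1\}$; hence $\PMMS_1(V) = 2$, and the split $(\{v_1,v_2\},\{v_3,v_4\})$ is the only connected bipartition in which \emph{both} parts are worth at least $2$ to agent $1$. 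By the analogous (rotated) computation, $\PMMS_2(V) = 2$, witnessed uniquely by $(\{v_2,v_3\},\{v_4,v_1\})$.

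Finally I would rule out every connected allocation $(A_1,A_2)$. If some $A_i = \emptyset$, then $u_i(A_i) = 0 < 2 = \PMMS_i(V)$, so PMMS fails. Otherwise both parts are nonempty and the split is one of: a singleton versus a triple, in which case the agent holding the singleton has utility exactly $1 < 2$; or $(\{v_1,v_2\},\{v_3,v_4\})$, in which case agent $2$ receives a bundle worth $1 < 2 = \PMMS_2(V)$ no matter which side she gets; or $(\{v_2,v_3\},\{v_4,v_1\})$, in which case agent $1$ receives a bundle worth $1 < 2 = \PMMS_1(V)$ no matter which side she gets. In every case PMMS fails, which proves the proposition. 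I do not anticipate a real obstacle; the only thing to get right is that each agent's unique PMMS-optimal split is ``blocked'' by the other agent under the connectivity constraint, which is precisely why the two consecutive-pair splits are chosen to be rotations of each other.
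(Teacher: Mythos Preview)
Your proof is correct and follows essentially the same approach as the paper: both constructions place two agents on the $4$-cycle so that each agent's unique PMMS-witnessing bipartition is one of the two consecutive-pair splits, and these splits are rotations of each other, forcing every connected allocation to shortchange one agent. Your utility functions differ in form from the paper's (you use partition-matroid rank functions $X \mapsto [X\cap S_i\neq\emptyset]+[X\cap T_i\neq\emptyset]$, whereas the paper uses $u_i(B)=2$ iff $B$ contains one of two designated consecutive pairs), but this is a cosmetic difference that, if anything, makes the submodularity and binary-marginals checks slightly more transparent.
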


\begin{proof}
    Number the items on the cycle in order from $1$ to $4$. Let $u_1$ be such that $u_1(\emptyset) = 0$, $u_1(B) = 1$ if $\{1, 2\} \not\subseteq B$ and $\{3, 4\} \not\subseteq B$ and $u_1(B) = 2$ otherwise. Similarly, let $u_2(\emptyset) = 0$, $u_2(B) = 1$ if $\{1, 4\} \not\subseteq B$ and $\{2, 3\} \not\subseteq B$ and $u_2(B) = 2$ otherwise. We can easily verify that both utility functions are submodular and have binary marginal gains. Moreover, the PMMS of both agents is 2, as $(\{1, 2\}, \{3, 4\})$ is the only PMMS partition of $u_1$ and $(\{1, 4\}, \{2, 3\})$ is the only PMMS partition of $u_2$. For both agents, $u_i(B) \ge 2$ holds only if $B$ contains one of the bundles in her PMMS partition. Since any bundle in the PMMS partition of $u_1$ overlaps with both bundles in the PMMS partition of $u_2$ and vice versa, there cannot exist any allocation in which both agents receive a bundle $B_i$ with $u_i(B_i) \ge 2$ and no PMMS allocation can exist.
\end{proof}

\section{Moving-knife style algorithm}\label{sec:moving-knife}

In this section, we present the moving-knife style algorithm for identical monotone utilities on a path, mentioned in \cref{sec:identical}. We will assume that the path consists of at least $n$ vertices. Otherwise, a PMMS connected allocation can trivially be found by providing as many agents as possible with one item.

The algorithm (\cref{alg:moving-knife}), works by in a discrete way moving $n - 1$ knives from one side of the path to the other. Whenever two adjacent bundles do not satisfy PMMS, the knife separating the two is moved.

\begin{algorithm}                      
\caption{Moving-knife style algorithm for constructing a PMMS connected allocation under identical utilities on a path}         
\label{alg:moving-knife}                          
\begin{algorithmic}[1]
\STATE Let $A = (\{1\}, \{2\}, \dots, \{n - 1\}, \{n, n + 1, \dots, k\})$ where the items on the path are numbered from $1$ to $k$ going from left to right.\label{line:moving-knife:initial}
\WHILE{$A$ is not PMMS}
    \STATE Let $i, j \in N$ be a pair of agents such that $i = j - 1$ and $u(A_i) < \PMMS(A_i \cup A_j)$.
    \STATE Transfer one and one outer item $g$ from $A_j$ to $A_i$ until $u(A_i) \ge \PMMS(A_i \cup A_j)$.\label{line:moving-knife:move}
\ENDWHILE
\end{algorithmic}
\end{algorithm}

Before we prove the correctness of the algorithm, we need the following result on PMMS partitions for monotone utilities on a path.

\begin{lemma}\label{lem:PMMS:path:extending-path}
    Let $u$ be a monotone utility function, $P$ a path and $P_1$ a subpath of $P$. Let $(A_1, A_2)$ be a PMMS partition of $P_1$ such that there is no PMMS partition $(A_3, A_4)$ of $P_1$ with $A_3 \subset A_1$ or $A_4 \subset A_1$. Let $A'_2$ be the bundle obtained by adding one or more items from $P \setminus P_1$ to $A_2$. Then, there is no PMMS partition $(B_1, B_2)$ of $A_1 \cup A'_2$ such that $B_1 \subset A_1$ or $B_2 \subset A_1$.
\end{lemma}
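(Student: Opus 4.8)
The plan is to argue by contradiction, reducing everything to the geometric picture of a single path. First I would dispose of the trivial case $A_1 = \emptyset$ (the conclusion is then vacuous) and make the well-posedness observation that $A_1 \cup A_2' $ is itself a subpath of $P$ with $A_1$ occupying one of its ends: since $A_1$ and $A_2$ are connected and partition the subpath $P_1$, they meet along a single edge, and the items of $A_2' \setminus A_2$ lie outside $P_1$, so they extend $A_2$ in the direction away from $A_1$. Reading $A_1 \cup A_2'$ as a path from left to right, $A_1$ is the leftmost stretch, $A_2$ the middle stretch adjacent to it, and $A_2' \setminus A_2$ the rightmost stretch. I would then suppose, for contradiction, that $(B_1, B_2)$ is a PMMS partition of $A_1 \cup A_2'$ with $B_1 \subset A_1$ a proper subset; by the symmetry of the two parts of a PMMS partition, the case $B_2 \subset A_1$ is handled by relabelling.

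The key structural step is to observe that $B_1$ must be an initial segment of the path $A_1 \cup A_2'$ starting at the free (leftmost) end of $A_1$: since $B_1$ and $B_2$ are both connected and $B_1 \subseteq A_1 \subsetneq A_1 \cup A_2'$, any connected proper subset of $A_1$ that is not such an initial segment would leave $B_2$ disconnected. Hence $A_1 \setminus B_1$ is nonempty and adjacent to $A_2$, so $\bigl(B_1,\, (A_1 \setminus B_1) \cup A_2\bigr)$ is a valid connected partition of $P_1$ whose first part $B_1$ is a proper subset of $A_1$. By the minimality hypothesis on $(A_1, A_2)$, this partition is not a PMMS partition of $P_1$, so its minimum value is strictly below $\PMMS(P_1) = \min\{u(A_1), u(A_2)\}$. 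Monotonicity gives $u\bigl((A_1 \setminus B_1) \cup A_2\bigr) \ge u(A_2) \ge \PMMS(P_1)$, so the only way the minimum can fall short is $u(B_1) < \PMMS(P_1)$.

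To close the contradiction I would compare with the connected partition $(A_1, A_2')$ of $A_1 \cup A_2'$. Monotonicity gives $u(A_1) \ge \PMMS(P_1)$ and $u(A_2') \ge u(A_2) \ge \PMMS(P_1)$, so $\PMMS(A_1 \cup A_2') \ge \min\{u(A_1), u(A_2')\} \ge \PMMS(P_1)$. But $(B_1,B_2)$ is a PMMS partition of $A_1 \cup A_2'$, so $\PMMS(A_1 \cup A_2') = \min\{u(B_1), u(B_2)\} \le u(B_1) < \PMMS(P_1)$, a contradiction. Therefore no PMMS partition of $A_1 \cup A_2'$ has a part that is a proper subset of $A_1$.

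The main obstacle — and the only place the path structure is essential — is the geometric claim that $B_1$ must sit at the free end of $A_1$; this is exactly the kind of statement that fails for general monotone utilities on general graphs (cf.\ the example following \cref{lem:sub-and-super-paths}), so the argument has to lean on the linear order of the path. Beyond that, the remaining care is bookkeeping: confirming that $A_2'$ and $A_1 \cup A_2'$ are connected (so the statement is meaningful), and checking the degenerate cases where some bundle is empty or $\PMMS(P_1) = 0$, in which the minimality hypothesis already forces $A_1 = \emptyset$ and the conclusion holds trivially.
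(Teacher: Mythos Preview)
Your proof is correct and follows essentially the same approach as the paper: both arguments hinge on the partition $\bigl(B_1,\, (A_1\setminus B_1)\cup A_2\bigr)$ of $P_1$ together with the inequality $\PMMS(A_1\cup A_2') \ge \PMMS(P_1)$, just run in contrapositive order (the paper directly shows $u(B_1)\ge\PMMS(P_1)$ and concludes that this partition is PMMS, contradicting minimality). Your version is more explicit about the geometric well-posedness---that $B_1$ must be an initial segment at the free end of $A_1$---which the paper leaves implicit.
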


\begin{proof}
    Assume to the contrary that there exist such PMMS partition. Without loss of generality we assume that $B_1 \subset A_1$. Then it holds that
    \[\PMMS(A_1 \cup A_2) \le \PMMS(A_1 \cup A'_2) \le u(B_1) \]
    Since $u(A_2 \cup (A_1 \setminus B_1)) > u(A_2)$ it holds that $(B_1, A_2 \cup (A_1 \setminus B_1))$ is a PMMS partition of $P_1$, a contradiction.
\end{proof}

\Cref{lem:PMMS:path:extending-path} guarantees that if a knife is moved to the right by the minimal possible amount every time it needs to be moved, then no knife needs to be moved to the left at any point. We are now ready to show the correctness of the algorithm.

\begin{lemma}\label{lem:path-polynomial-p-mms}
    If the graph is a path and the agents have identical monotone utility functions, \cref{alg:moving-knife} finds a PMMS allocation in polynomial time.
\end{lemma}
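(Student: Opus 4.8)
The goal is twofold: the algorithm halts after polynomially many iterations, and on halting it returns a PMMS allocation. Since the bundles produced are always contiguous pieces of the path, connectivity is automatic, and since the \textbf{while}-loop exits precisely when $A$ is PMMS, the only real content behind correctness is that the loop is never \emph{stuck}: whenever $A$ is not PMMS there is a consecutive pair $(i,i+1)$ with $u(A_i)<\PMMS(A_i\cup A_{i+1})$, i.e.\ one whose \emph{left} bundle is the unsatisfied one. Two simple facts about paths will be used repeatedly: (i) for any path $P$ and any endpoint $v$ of $P$, $\PMMS(P)\le u(P\setminus\{v\})$, since every connected $2$-split of $P$ has the part avoiding $v$ contained in $P\setminus\{v\}$; and (ii) deleting an endpoint of a path does not increase its $\PMMS$.

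To show the loop is never stuck I would maintain the invariant that \emph{the right bundle of every consecutive pair is satisfied}: for all $t\in[n-1]$, $u(A_{t+1})\ge\PMMS(A_t\cup A_{t+1})$. Given this, if $A$ is not PMMS then some pair $(A_t,A_{t+1})$ violates PMMS, and by the invariant the violation can only be $u(A_t)<\PMMS(A_t\cup A_{t+1})$ — a pair the algorithm may legitimately pick. The invariant holds for the starting allocation because $A_1,\dots,A_{n-1}$ are singletons there, so fact (i), with $v$ the single vertex of $A_t$, gives $\PMMS(A_t\cup A_{t+1})\le u(A_{t+1})$ for $t\le n-1$. A ``fix agent $i$'' step enlarges $A_i$ on its right, shrinks $A_{i+1}$ on its left, and leaves all other bundles alone; in particular, by fact (i) it can never exhaust $A_{i+1}$ (it stops no later than the knife position leaving $A_{i+1}$ a single vertex), so all bundles remain nonempty.

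Only the pairs $(i-1,i)$, $(i,i+1)$, $(i+1,i+2)$ can be affected by the step. For $(i+1,i+2)$: $A_{i+1}\cup A_{i+2}$ loses left-endpoint vertices, so by fact (ii) its $\PMMS$ does not rise while $u(A_{i+2})$ is unchanged, and the invariant is preserved. For $(i,i+1)$: the set $A_i\cup A_{i+1}$ is unchanged by the step, and since the knife is moved to the \emph{smallest} position making the prefix worth $\ge\PMMS(A_i\cup A_{i+1})$, that position lies at or to the left of a $\PMMS$-optimal split of $A_i\cup A_{i+1}$; hence the new suffix $A_{i+1}$ contains the suffix of that optimal split and still has utility $\ge\PMMS$, and moreover $A_i$ becomes \emph{minimal} in the split $(A_i,A_{i+1})$ (no $\PMMS$-optimal split of $A_i\cup A_{i+1}$ has a part strictly inside $A_i$). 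The delicate pair is $(i-1,i)$, where $A_i$ has grown while $A_{i-1}$ stayed fixed: here I apply \cref{lem:PMMS:path:extending-path}, with the role of $A_1$ played by $A_{i-1}$ — which is minimal in its split $(A_{i-1},A_i)$, a property installed by the previous ``fix agent $i-1$'' step and carried along — to deduce that no $\PMMS$-optimal split of the enlarged $A_{i-1}\cup A_i$ has a part strictly inside $A_{i-1}$; therefore the suffix part of every such split sits inside the new $A_i$, whence $\PMMS(A_{i-1}\cup A_i)\le u(A_i)$ and the invariant survives for $(i-1,i)$ as well. Pinning down the secondary invariant that keeps track of these minimality properties — in particular, arguing that a bundle's minimality survives, or is restored before it is next needed, after a shift of its left-hand knife — is the step I expect to demand the most care; \cref{lem:PMMS:path:extending-path} is precisely the tool that makes it go through, being the formal statement of the intuition that ``a knife never has to be pushed back to the left''.

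Finally, polynomial running time. Every iteration of the \textbf{while}-loop transfers at least one item and hence strictly increases the position of exactly one of the $n-1$ knives, while no step ever decreases a knife position and each position is an integer in $\{0,1,\dots,k\}$; so there are at most $(n-1)k$ iterations. Each iteration is polynomial: the $\PMMS$ of a subpath is the maximum, over its $O(k)$ connected $2$-splits, of the smaller of the two side-utilities, computable with $O(k)$ utility evaluations, so deciding whether $A$ is PMMS, locating a violated pair, and carrying out the transfers are all polynomial. Combining the two parts, the algorithm halts in polynomial time and returns a PMMS connected allocation.
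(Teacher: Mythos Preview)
Your approach is essentially the paper's: both arguments maintain the invariant that for every consecutive pair $(t,t+1)$ the right bundle is satisfied, $u(A_{t+1})\ge\PMMS(A_t\cup A_{t+1})$, together with a minimality-type condition on the left bundle that licenses \cref{lem:PMMS:path:extending-path}. The paper packages the latter as the disjunction ``either $u(A_t)<\PMMS(A_t\cup A_{t+1})$, or no PMMS partition of $A_t\cup A_{t+1}$ has a part strictly inside $A_t$, unless $|A_t|=1$''. You are actually more explicit than the paper in separately verifying the invariant for the three affected pairs $(i-1,i)$, $(i,i+1)$, $(i+1,i+2)$ after a step; the paper's proof checks only the pair being fixed.

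One point to sharpen in your $(i-1,i)$ case: \cref{lem:PMMS:path:extending-path} requires $(A_{i-1},A_i)$ to \emph{be} a PMMS partition of $A_{i-1}\cup A_i$, not merely that $A_{i-1}$ be minimal. This hypothesis fails precisely when $u(A_{i-1})<\PMMS(A_{i-1}\cup A_i)$, but that case is even easier: since PMMS can only grow when $A_i$ grows, still $u(A_{i-1})<\PMMS(A_{i-1}\cup A_i^{\text{new}})$, so by monotonicity no prefix contained in $A_{i-1}$ can be a PMMS part, forcing every PMMS cut of $A_{i-1}\cup A_i^{\text{new}}$ to lie strictly right of $A_{i-1}$ and giving $u(A_i^{\text{new}})\ge\PMMS$ directly. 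This is exactly the dichotomy your ``secondary invariant'' remark anticipates, and it is why the paper phrases its invariant (iii) as a disjunction rather than bare minimality.
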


\begin{proof}
    We wish to show that \cref{alg:moving-knife} finds a PMMS allocation in polynomial time. For this purpose, it suffice to show that for every pair of neighbouring agents $i, j \in N$ such that $i < j$ the following holds at any stage in the algorithm:
    \begin{enumerate}[label=(\roman*)]
        \item $i = j - 1$;\label{item:moving-knife:1}
        \item $u(A_j) \ge \PMMS(A_i \cup A_j)$;\label{item:moving-knife:2}
        \item either (1) $u(A_i) < \PMMS(A_i \cup A_j)$ or (2) there is no other PMMS partition of $A_i \cup A_j$ such that one of the bundles is contained in $A_i$, unless $|A_i| = 1$; and\label{item:moving-knife:3}
        \item $|A_i| > 0$ and $|A_j| > 0$.\label{item:moving-knife:4}
    \end{enumerate}

    In other words, we maintain the invariant that an agent always receives a bundle with utility at least PMMS when comparing to her neighbour on the left, while an agent comparing to her neighbour on the right either does not receive PMMS or has the minimal cardinality, non-empty bundle that guarantees the agent PMMS.
    
    We can easily verify that \ref{item:moving-knife:1}--\ref{item:moving-knife:4} hold in the initial allocation created on Line~\ref{line:moving-knife:initial}.

    We now show that transferring a good from $A_j$ to $A_i$ on Line~\ref{line:moving-knife:move} does not break any of \ref{item:moving-knife:1}--\ref{item:moving-knife:4}. Since a good is transferred from $A_j$ to $A_i$, \ref{item:moving-knife:1} holds as long as the last good in $A_j$ is not transferred, i.e., as long as \ref{item:moving-knife:4} holds. First, note that in any partition $(B_1, B_2)$ such that $|B_1| = 1$ or $|B_2| = 1$, the other bundle has utility at least $\PMMS(B_1 \cup B_2)$.
    Thus, $u(A_i) < \PMMS(A_i \cup A_j)$ cannot hold if $|A_j| = 1$ and \ref{item:moving-knife:4}, and consequently \ref{item:moving-knife:1}, hold after the transfer.
    
    For \ref{item:moving-knife:3}, notice that (1) always holds when a good is transferred. We claim that if (1) holds, then (2) must also hold. Indeed, utilities are monotone. Thus, for any subset $B \subseteq A_i$ it holds that $u(B) \le u(A_i) < \PMMS(A_i \cup A_j)$. Consequently, since the partition is not PMMS before the transfer, (2) must hold after the transfer and \ref{item:moving-knife:3} holds after the transfer. As a consequence, \ref{item:moving-knife:2} must also hold after the transfer. Otherwise, we claim that there cannot exist any PMMS partition of $A_i \cup A_j$, which must exist by \cref{lem:identical-utility-functions}. Indeed, since (2) held before the transfer, if $u(A_j) < \PMMS(A_i \cup A_j)$ after the transfer, there exists by monotonicity no PMMS partitions where one bundle is a subset of $A_i$ nor for $A_j$, a contradiction. Thus, \ref{item:moving-knife:2} holds after the transfer.
    
    Since \ref{item:moving-knife:4} guarantees that each agent has at least one item in their bundle and items are only transferred to agents with smaller indices, the algorithm must eventually either find a PMMS allocation or run out of items to transfer. However, by \ref{item:moving-knife:1}--\ref{item:moving-knife:4} if there is a pair of neighbouring agents for which PMMS is not satisfied, then a transfer can be made. Thus, the algorithm must eventually find a PMMS allocation, as there is a finite number of possible transfers of goods. In fact, we can only transfer each good $n - 1$ times, one for each neighbouring pair of agents.

    Since there is a polynomial number of iterations, the algorithm runs in polynomial time if each iteration can be performed in polynomial time. Since the PMMS of any neighbouring pair can be found by checking each of the $|A_i \cup A_j| \le k$ partitions of $A_i \cup A_j$ into connected bundles, we can easily verify that each iteration can be performed in polynomial time.
\end{proof}

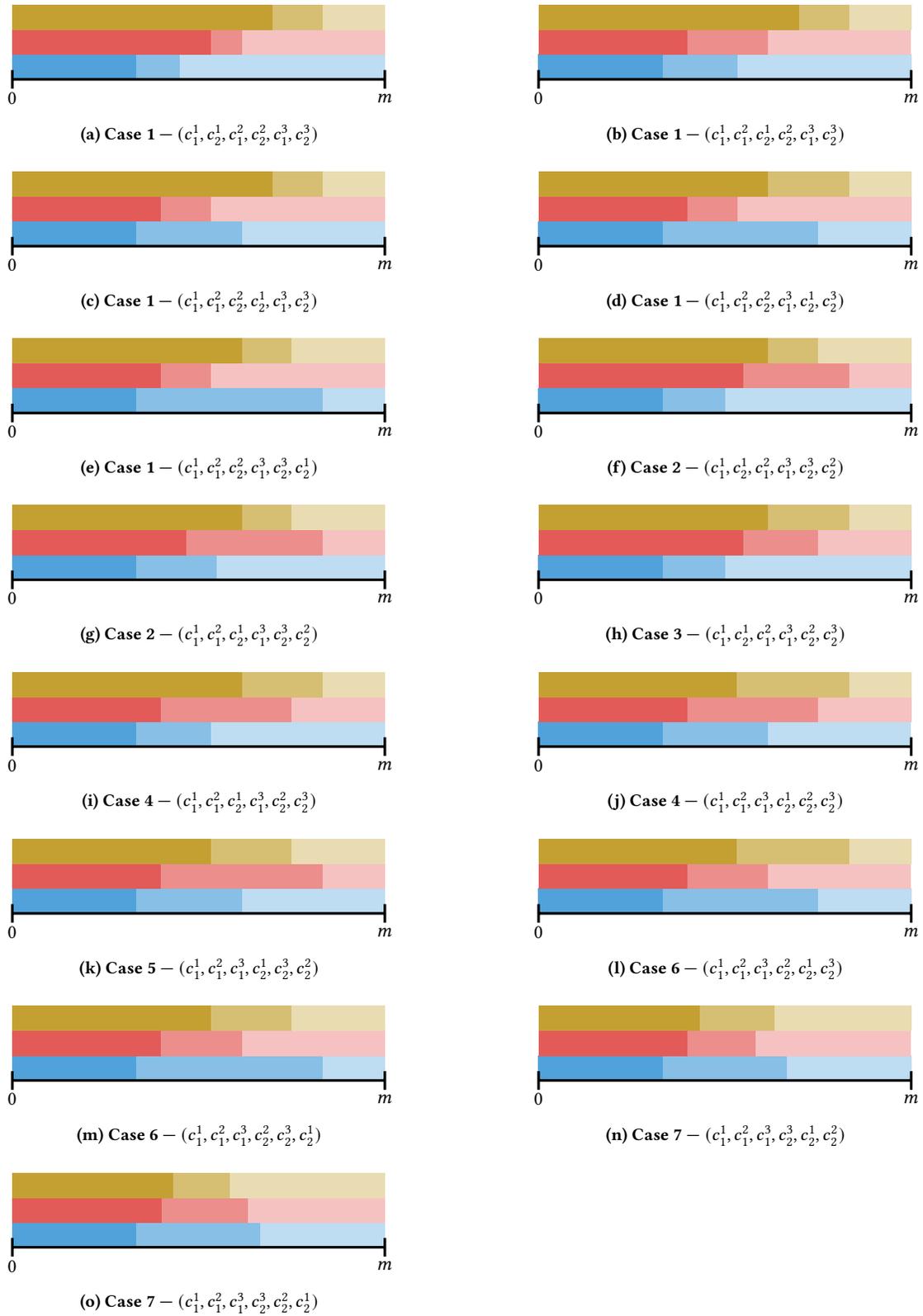
\begin{figure*}[t]
    \centering
    \begin{subfigure}{\columnwidth}
        \centering
        \begin{tikzpicture}
            \filldraw[fill=sblue!80, draw=none] (0, 0) rectangle (2, 0.4);
            \filldraw[fill=sblue!55, draw=none] (2, 0) rectangle (2.7, 0.4);
            \filldraw[fill=sblue!30, draw=none] (2.7, 0) rectangle (6, 0.4);
            
            \filldraw[fill=sred!80, draw=none] (0, 0.4) rectangle (3.2, 0.8);
            \filldraw[fill=sred!55, draw=none] (3.2, 0.4) rectangle (3.7, 0.8);
            \filldraw[fill=sred!30, draw=none] (3.7, 0.4) rectangle (6, 0.8);
        
            \filldraw[fill=syellow!80, draw=none] (0, 0.8) rectangle (4.2, 1.2);
            \filldraw[fill=syellow!55, draw=none] (4.2, 0.8) rectangle (5, 1.2);
            \filldraw[fill=syellow!30, draw=none] (5, 0.8) rectangle (6, 1.2);
            
            \draw[very thick] 
                (0, 0) -- (6, 0)
                (0, -0.15) -- (0, 0.15)
                (6, -0.15) -- (6, 0.15)
                ;
        
            \node at (0, -0.3) {$0$};
            \node at (6, -0.3) {$m$};
            
        \end{tikzpicture}
        \caption{Case 1 --- $(c_1^1, c_2^1, c_1^2, c_2^2, c_1^3, c_2^3)$}
    \end{subfigure}%
    \begin{subfigure}{\columnwidth}
        \centering
        \begin{tikzpicture}
            \filldraw[fill=sblue!80, draw=none] (0, 0) rectangle (2, 0.4);
            \filldraw[fill=sblue!55, draw=none] (2, 0) rectangle (3.2, 0.4);
            \filldraw[fill=sblue!30, draw=none] (3.2, 0) rectangle (6, 0.4);
            
            \filldraw[fill=sred!80, draw=none] (0, 0.4) rectangle (2.4, 0.8);
            \filldraw[fill=sred!55, draw=none] (2.4, 0.4) rectangle (3.7, 0.8);
            \filldraw[fill=sred!30, draw=none] (3.7, 0.4) rectangle (6, 0.8);
        
            \filldraw[fill=syellow!80, draw=none] (0, 0.8) rectangle (4.2, 1.2);
            \filldraw[fill=syellow!55, draw=none] (4.2, 0.8) rectangle (5, 1.2);
            \filldraw[fill=syellow!30, draw=none] (5, 0.8) rectangle (6, 1.2);
            
            \draw[very thick] 
                (0, 0) -- (6, 0)
                (0, -0.15) -- (0, 0.15)
                (6, -0.15) -- (6, 0.15)
                ;
        
            \node at (0, -0.3) {$0$};
            \node at (6, -0.3) {$m$};

        \end{tikzpicture}
        \caption{Case 1 --- $(c_1^1, c_1^2, c_2^1, c_2^2, c_1^3, c_2^3)$}
    \end{subfigure}\\%
    \vspace{\baselineskip}
    \begin{subfigure}{\columnwidth}
        \centering
        \begin{tikzpicture}
            \filldraw[fill=sblue!80, draw=none] (0, 0) rectangle (2, 0.4);
            \filldraw[fill=sblue!55, draw=none] (2, 0) rectangle (3.7, 0.4);
            \filldraw[fill=sblue!30, draw=none] (3.7, 0) rectangle (6, 0.4);
            
            \filldraw[fill=sred!80, draw=none] (0, 0.4) rectangle (2.4, 0.8);
            \filldraw[fill=sred!55, draw=none] (2.4, 0.4) rectangle (3.2, 0.8);
            \filldraw[fill=sred!30, draw=none] (3.2, 0.4) rectangle (6, 0.8);
        
            \filldraw[fill=syellow!80, draw=none] (0, 0.8) rectangle (4.2, 1.2);
            \filldraw[fill=syellow!55, draw=none] (4.2, 0.8) rectangle (5, 1.2);
            \filldraw[fill=syellow!30, draw=none] (5, 0.8) rectangle (6, 1.2);
            
            \draw[very thick] 
                (0, 0) -- (6, 0)
                (0, -0.15) -- (0, 0.15)
                (6, -0.15) -- (6, 0.15)
                ;
        
            \node at (0, -0.3) {$0$};
            \node at (6, -0.3) {$m$};

        \end{tikzpicture}
        \caption{Case 1 --- $(c_1^1, c_1^2, c_2^2, c_2^1, c_1^3, c_2^3)$}
    \end{subfigure}%
    \begin{subfigure}{\columnwidth}
        \centering
        \begin{tikzpicture}
            \filldraw[fill=sblue!80, draw=none] (0, 0) rectangle (2, 0.4);
            \filldraw[fill=sblue!55, draw=none] (2, 0) rectangle (4.5, 0.4);
            \filldraw[fill=sblue!30, draw=none] (4.5, 0) rectangle (6, 0.4);
            
            \filldraw[fill=sred!80, draw=none] (0, 0.4) rectangle (2.4, 0.8);
            \filldraw[fill=sred!55, draw=none] (2.4, 0.4) rectangle (3.2, 0.8);
            \filldraw[fill=sred!30, draw=none] (3.2, 0.4) rectangle (6, 0.8);
        
            \filldraw[fill=syellow!80, draw=none] (0, 0.8) rectangle (3.7, 1.2);
            \filldraw[fill=syellow!55, draw=none] (3.7, 0.8) rectangle (5, 1.2);
            \filldraw[fill=syellow!30, draw=none] (5, 0.8) rectangle (6, 1.2);
            
            \draw[very thick] 
                (0, 0) -- (6, 0)
                (0, -0.15) -- (0, 0.15)
                (6, -0.15) -- (6, 0.15)
                ;
        
            \node at (0, -0.3) {$0$};
            \node at (6, -0.3) {$m$};

        \end{tikzpicture}
        \caption{Case 1 --- $(c_1^1, c_1^2, c_2^2, c_1^3, c_2^1, c_2^3)$}
    \end{subfigure}\\%
    \vspace{\baselineskip}
    \begin{subfigure}{\columnwidth}
        \centering
        \begin{tikzpicture}
            \filldraw[fill=sblue!80, draw=none] (0, 0) rectangle (2, 0.4);
            \filldraw[fill=sblue!55, draw=none] (2, 0) rectangle (5, 0.4);
            \filldraw[fill=sblue!30, draw=none] (5, 0) rectangle (6, 0.4);
            
            \filldraw[fill=sred!80, draw=none] (0, 0.4) rectangle (2.4, 0.8);
            \filldraw[fill=sred!55, draw=none] (2.4, 0.4) rectangle (3.2, 0.8);
            \filldraw[fill=sred!30, draw=none] (3.2, 0.4) rectangle (6, 0.8);
        
            \filldraw[fill=syellow!80, draw=none] (0, 0.8) rectangle (3.7, 1.2);
            \filldraw[fill=syellow!55, draw=none] (3.7, 0.8) rectangle (4.5, 1.2);
            \filldraw[fill=syellow!30, draw=none] (4.5, 0.8) rectangle (6, 1.2);
            
            \draw[very thick] 
                (0, 0) -- (6, 0)
                (0, -0.15) -- (0, 0.15)
                (6, -0.15) -- (6, 0.15)
                ;
        
            \node at (0, -0.3) {$0$};
            \node at (6, -0.3) {$m$};
                                    
        \end{tikzpicture}
        \caption{Case 1 --- $(c_1^1, c_1^2, c_2^2, c_1^3, c_2^3, c_2^1)$}
    \end{subfigure}%
    \begin{subfigure}{\columnwidth}
        \centering
        \begin{tikzpicture}
            \filldraw[fill=sblue!80, draw=none] (0, 0) rectangle (2, 0.4);
            \filldraw[fill=sblue!55, draw=none] (2, 0) rectangle (3, 0.4);
            \filldraw[fill=sblue!30, draw=none] (3, 0) rectangle (6, 0.4);
            
            \filldraw[fill=sred!80, draw=none] (0, 0.4) rectangle (3.3, 0.8);
            \filldraw[fill=sred!55, draw=none] (3.3, 0.4) rectangle (5, 0.8);
            \filldraw[fill=sred!30, draw=none] (5, 0.4) rectangle (6, 0.8);
        
            \filldraw[fill=syellow!80, draw=none] (0, 0.8) rectangle (3.7, 1.2);
            \filldraw[fill=syellow!55, draw=none] (3.7, 0.8) rectangle (4.5, 1.2);
            \filldraw[fill=syellow!30, draw=none] (4.5, 0.8) rectangle (6, 1.2);
            
            \draw[very thick] 
                (0, 0) -- (6, 0)
                (0, -0.15) -- (0, 0.15)
                (6, -0.15) -- (6, 0.15)
                ;
        
            \node at (0, -0.3) {$0$};
            \node at (6, -0.3) {$m$};
            
        \end{tikzpicture}
        \caption{Case 2 --- $(c_1^1, c_2^1, c_1^2, c_1^3, c_2^3, c_2^2)$}
    \end{subfigure}\\%
    \vspace{\baselineskip}
    \begin{subfigure}{\columnwidth}
        \centering
        \begin{tikzpicture}
            \filldraw[fill=sblue!80, draw=none] (0, 0) rectangle (2, 0.4);
            \filldraw[fill=sblue!55, draw=none] (2, 0) rectangle (3.3, 0.4);
            \filldraw[fill=sblue!30, draw=none] (3.3, 0) rectangle (6, 0.4);
            
            \filldraw[fill=sred!80, draw=none] (0, 0.4) rectangle (2.8, 0.8);
            \filldraw[fill=sred!55, draw=none] (2.8, 0.4) rectangle (5, 0.8);
            \filldraw[fill=sred!30, draw=none] (5, 0.4) rectangle (6, 0.8);
        
            \filldraw[fill=syellow!80, draw=none] (0, 0.8) rectangle (3.7, 1.2);
            \filldraw[fill=syellow!55, draw=none] (3.7, 0.8) rectangle (4.5, 1.2);
            \filldraw[fill=syellow!30, draw=none] (4.5, 0.8) rectangle (6, 1.2);
            
            \draw[very thick] 
                (0, 0) -- (6, 0)
                (0, -0.15) -- (0, 0.15)
                (6, -0.15) -- (6, 0.15)
                ;
        
            \node at (0, -0.3) {$0$};
            \node at (6, -0.3) {$m$};
            
        \end{tikzpicture}
        \caption{Case 2 --- $(c_1^1, c_1^2, c_2^1, c_1^3, c_2^3, c_2^2)$}
    \end{subfigure}%
    \begin{subfigure}{\columnwidth}
        \centering
        \begin{tikzpicture}
            \filldraw[fill=sblue!80, draw=none] (0, 0) rectangle (2, 0.4);
            \filldraw[fill=sblue!55, draw=none] (2, 0) rectangle (3.0, 0.4);
            \filldraw[fill=sblue!30, draw=none] (3.0, 0) rectangle (6, 0.4);
            
            \filldraw[fill=sred!80, draw=none] (0, 0.4) rectangle (3.3, 0.8);
            \filldraw[fill=sred!55, draw=none] (3.3, 0.4) rectangle (4.5, 0.8);
            \filldraw[fill=sred!30, draw=none] (4.5, 0.4) rectangle (6, 0.8);
        
            \filldraw[fill=syellow!80, draw=none] (0, 0.8) rectangle (3.7, 1.2);
            \filldraw[fill=syellow!55, draw=none] (3.7, 0.8) rectangle (5, 1.2);
            \filldraw[fill=syellow!30, draw=none] (5, 0.8) rectangle (6, 1.2);
            
            \draw[very thick] 
                (0, 0) -- (6, 0)
                (0, -0.15) -- (0, 0.15)
                (6, -0.15) -- (6, 0.15)
                ;
        
            \node at (0, -0.3) {$0$};
            \node at (6, -0.3) {$m$};
            
        \end{tikzpicture}
        \caption{Case 3 --- $(c_1^1, c_2^1, c_1^2, c_1^3, c_2^2, c_2^3)$}
    \end{subfigure}\\%
    \vspace{\baselineskip}%
    \begin{subfigure}{\columnwidth}
        \centering
        \begin{tikzpicture}
            \filldraw[fill=sblue!80, draw=none] (0, 0) rectangle (2, 0.4);
            \filldraw[fill=sblue!55, draw=none] (2, 0) rectangle (3.2, 0.4);
            \filldraw[fill=sblue!30, draw=none] (3.2, 0) rectangle (6, 0.4);
            
            \filldraw[fill=sred!80, draw=none] (0, 0.4) rectangle (2.4, 0.8);
            \filldraw[fill=sred!55, draw=none] (2.4, 0.4) rectangle (4.5, 0.8);
            \filldraw[fill=sred!30, draw=none] (4.5, 0.4) rectangle (6, 0.8);
        
            \filldraw[fill=syellow!80, draw=none] (0, 0.8) rectangle (3.7, 1.2);
            \filldraw[fill=syellow!55, draw=none] (3.7, 0.8) rectangle (5, 1.2);
            \filldraw[fill=syellow!30, draw=none] (5, 0.8) rectangle (6, 1.2);
            
            \draw[very thick] 
                (0, 0) -- (6, 0)
                (0, -0.15) -- (0, 0.15)
                (6, -0.15) -- (6, 0.15)
                ;
        
            \node at (0, -0.3) {$0$};
            \node at (6, -0.3) {$m$};
            
        \end{tikzpicture}
        \caption{Case 4 --- $(c_1^1, c_1^2, c_2^1, c_1^3, c_2^2, c_2^3)$}
    \end{subfigure}%
    \begin{subfigure}{\columnwidth}
        \centering
        \begin{tikzpicture}
            \filldraw[fill=sblue!80, draw=none] (0, 0) rectangle (2, 0.4);
            \filldraw[fill=sblue!55, draw=none] (2, 0) rectangle (3.7, 0.4);
            \filldraw[fill=sblue!30, draw=none] (3.7, 0) rectangle (6, 0.4);
            
            \filldraw[fill=sred!80, draw=none] (0, 0.4) rectangle (2.4, 0.8);
            \filldraw[fill=sred!55, draw=none] (2.4, 0.4) rectangle (4.5, 0.8);
            \filldraw[fill=sred!30, draw=none] (4.5, 0.4) rectangle (6, 0.8);
        
            \filldraw[fill=syellow!80, draw=none] (0, 0.8) rectangle (3.2, 1.2);
            \filldraw[fill=syellow!55, draw=none] (3.2, 0.8) rectangle (5, 1.2);
            \filldraw[fill=syellow!30, draw=none] (5, 0.8) rectangle (6, 1.2);
            
            \draw[very thick] 
                (0, 0) -- (6, 0)
                (0, -0.15) -- (0, 0.15)
                (6, -0.15) -- (6, 0.15)
                ;
        
            \node at (0, -0.3) {$0$};
            \node at (6, -0.3) {$m$};
            
        \end{tikzpicture}
        \caption{Case 4 --- $(c_1^1, c_1^2, c_1^3, c_2^1, c_2^2, c_2^3)$}
    \end{subfigure}\\%
    \vspace{\baselineskip}%
    \begin{subfigure}{\columnwidth}
        \centering
        \begin{tikzpicture}
            \filldraw[fill=sblue!80, draw=none] (0, 0) rectangle (2, 0.4);
            \filldraw[fill=sblue!55, draw=none] (2, 0) rectangle (3.7, 0.4);
            \filldraw[fill=sblue!30, draw=none] (3.7, 0) rectangle (6, 0.4);
            
            \filldraw[fill=sred!80, draw=none] (0, 0.4) rectangle (2.4, 0.8);
            \filldraw[fill=sred!55, draw=none] (2.4, 0.4) rectangle (5, 0.8);
            \filldraw[fill=sred!30, draw=none] (5, 0.4) rectangle (6, 0.8);
        
            \filldraw[fill=syellow!80, draw=none] (0, 0.8) rectangle (3.2, 1.2);
            \filldraw[fill=syellow!55, draw=none] (3.2, 0.8) rectangle (4.5, 1.2);
            \filldraw[fill=syellow!30, draw=none] (4.5, 0.8) rectangle (6, 1.2);
            
            \draw[very thick] 
                (0, 0) -- (6, 0)
                (0, -0.15) -- (0, 0.15)
                (6, -0.15) -- (6, 0.15)
                ;
        
            \node at (0, -0.3) {$0$};
            \node at (6, -0.3) {$m$};
            
        \end{tikzpicture}
        \caption{Case 5 --- $(c_1^1, c_1^2, c_1^3, c_2^1, c_2^3, c_2^2)$}
    \end{subfigure}%
    \begin{subfigure}{\columnwidth}
        \centering
        \begin{tikzpicture}
            \filldraw[fill=sblue!80, draw=none] (0, 0) rectangle (2, 0.4);
            \filldraw[fill=sblue!55, draw=none] (2, 0) rectangle (4.5, 0.4);
            \filldraw[fill=sblue!30, draw=none] (4.5, 0) rectangle (6, 0.4);
            
            \filldraw[fill=sred!80, draw=none] (0, 0.4) rectangle (2.4, 0.8);
            \filldraw[fill=sred!55, draw=none] (2.4, 0.4) rectangle (3.7, 0.8);
            \filldraw[fill=sred!30, draw=none] (3.7, 0.4) rectangle (6, 0.8);
        
            \filldraw[fill=syellow!80, draw=none] (0, 0.8) rectangle (3.2, 1.2);
            \filldraw[fill=syellow!55, draw=none] (3.2, 0.8) rectangle (5, 1.2);
            \filldraw[fill=syellow!30, draw=none] (5, 0.8) rectangle (6, 1.2);
            
            \draw[very thick] 
                (0, 0) -- (6, 0)
                (0, -0.15) -- (0, 0.15)
                (6, -0.15) -- (6, 0.15)
                ;
        
            \node at (0, -0.3) {$0$};
            \node at (6, -0.3) {$m$};
            
        \end{tikzpicture}
        \caption{Case 6 --- $(c_1^1, c_1^2, c_1^3, c_2^2, c_2^1, c_2^3)$}
    \end{subfigure}\\%
    \vspace{\baselineskip}%
    \begin{subfigure}{\columnwidth}
        \centering
        \begin{tikzpicture}
            \filldraw[fill=sblue!80, draw=none] (0, 0) rectangle (2, 0.4);
            \filldraw[fill=sblue!55, draw=none] (2, 0) rectangle (5, 0.4);
            \filldraw[fill=sblue!30, draw=none] (5, 0) rectangle (6, 0.4);
            
            \filldraw[fill=sred!80, draw=none] (0, 0.4) rectangle (2.4, 0.8);
            \filldraw[fill=sred!55, draw=none] (2.4, 0.4) rectangle (3.7, 0.8);
            \filldraw[fill=sred!30, draw=none] (3.7, 0.4) rectangle (6, 0.8);
        
            \filldraw[fill=syellow!80, draw=none] (0, 0.8) rectangle (3.2, 1.2);
            \filldraw[fill=syellow!55, draw=none] (3.2, 0.8) rectangle (4.5, 1.2);
            \filldraw[fill=syellow!30, draw=none] (4.5, 0.8) rectangle (6, 1.2);
            
            \draw[very thick] 
                (0, 0) -- (6, 0)
                (0, -0.15) -- (0, 0.15)
                (6, -0.15) -- (6, 0.15)
                ;
        
            \node at (0, -0.3) {$0$};
            \node at (6, -0.3) {$m$};
            
        \end{tikzpicture}
        \caption{Case 6 --- $(c_1^1, c_1^2, c_1^3, c_2^2, c_2^3, c_2^1)$}
    \end{subfigure}%
    \begin{subfigure}{\columnwidth}
        \centering
        \begin{tikzpicture}
            \filldraw[fill=sblue!80, draw=none] (0, 0) rectangle (2, 0.4);
            \filldraw[fill=sblue!55, draw=none] (2, 0) rectangle (4, 0.4);
            \filldraw[fill=sblue!30, draw=none] (4, 0) rectangle (6, 0.4);
            
            \filldraw[fill=sred!80, draw=none] (0, 0.4) rectangle (2.4, 0.8);
            \filldraw[fill=sred!55, draw=none] (2.4, 0.4) rectangle (3.5, 0.8);
            \filldraw[fill=sred!30, draw=none] (3.5, 0.4) rectangle (6, 0.8);
    
            \filldraw[fill=syellow!80, draw=none] (0, 0.8) rectangle (2.6, 1.2);
            \filldraw[fill=syellow!55, draw=none] (2.6, 0.8) rectangle (3.8, 1.2);
            \filldraw[fill=syellow!30, draw=none] (3.8, 0.8) rectangle (6, 1.2);
            
            \draw[very thick] 
                (0, 0) -- (6, 0)
                (0, -0.15) -- (0, 0.15)
                (6, -0.15) -- (6, 0.15)
                ;
    
            \node at (0, -0.3) {$0$};
            \node at (6, -0.3) {$m$};
        \end{tikzpicture}
        \caption{Case 7 --- $(c_1^1, c_1^2, c_1^3, c_2^3, c_2^1, c_2^2)$}
    \end{subfigure}\\%
    \vspace{\baselineskip}%
    \begin{subfigure}{\columnwidth}
        \centering
        \begin{tikzpicture}
            \filldraw[fill=sblue!80, draw=none] (0, 0) rectangle (2, 0.4);
            \filldraw[fill=sblue!55, draw=none] (2, 0) rectangle (4, 0.4);
            \filldraw[fill=sblue!30, draw=none] (4, 0) rectangle (6, 0.4);
            
            \filldraw[fill=sred!80, draw=none] (0, 0.4) rectangle (2.4, 0.8);
            \filldraw[fill=sred!55, draw=none] (2.4, 0.4) rectangle (3.8, 0.8);
            \filldraw[fill=sred!30, draw=none] (3.8, 0.4) rectangle (6, 0.8);
    
            \filldraw[fill=syellow!80, draw=none] (0, 0.8) rectangle (2.6, 1.2);
            \filldraw[fill=syellow!55, draw=none] (2.6, 0.8) rectangle (3.5, 1.2);
            \filldraw[fill=syellow!30, draw=none] (3.5, 0.8) rectangle (6, 1.2);
            
            \draw[very thick] 
                (0, 0) -- (6, 0)
                (0, -0.15) -- (0, 0.15)
                (6, -0.15) -- (6, 0.15)
                ;
    
            \node at (0, -0.3) {$0$};
            \node at (6, -0.3) {$m$};
        \end{tikzpicture}
        \caption{Case 7 --- $(c_1^1, c_1^2, c_1^3, c_2^3, c_2^2, c_2^1)$}
    \end{subfigure}%
    \begin{subfigure}{\columnwidth}
        \phantom{0}
    \end{subfigure}

    \caption{Visualisation of all 15 permutations of cut orders in \cref{thr:three-agents-path} and the 7 cases that handle them.}
    \label{fig:cases-path}
\end{figure*}
\end{document}